\newcommand{\norm}[1]{\left\|#1\right\|}
\newcommand{\abs}[1]{\left|#1\right|}
\DeclareMathOperator{\Tr}{Tr}
\newcommand{\id}{\mathbbm{1}}
\def\diracspacing{0.7pt}
\newcommand{\ketbra}[2]{| \hspace{\diracspacing} #1 \rangle \langle #2 \hspace{\diracspacing} |}
\newtheorem{theorem}{Theorem}
\newtheorem{lemma}[theorem]{Lemma}
\newtheorem*{remark}{Remark}
\newenvironment{protocol}
{
		\renewcommand{\ALG@name}{Protocol}
		\refstepcounter{algorithm}
		\hrule height.8pt depth0pt \kern2pt
		\renewcommand{\caption}[2][\relax]{
			{\raggedright\textbf{\fname@algorithm~\thealgorithm} ##2\par}
			\ifx\relax##1\relax 
			\addcontentsline{loa}{algorithm}{\protect\numberline{\thealgorithm}##2}
			\else
			\addcontentsline{loa}{algorithm}{\protect\numberline{\thealgorithm}##1}
			\fi
			\kern2pt\hrule\kern2pt
		}
	}{
	\kern2pt\hrule\relax
}
\let\cat@comma@active\@empty
\begin{document}
\normalem

\title{Overcoming fundamental bounds on quantum conference key agreement}

\author{Giacomo Carrara}
\author{Gl\'aucia Murta}
\author{Federico Grasselli}
\email[corresponding author:]{ federico.grasselli@hhu.de}
\affiliation{Institut f\"ur Theoretische Physik III, Heinrich-Heine-Universit\"at D\"usseldorf, Universit\"atsstra\ss{}e 1, D-40225 D\"usseldorf, Germany}

\begin{abstract}
Twin-Field Quantum Key Distribution (TF-QKD) enables two distant parties to establish a shared secret key, by interfering weak coherent pulses (WCPs) in an intermediate measuring station. This allows TF-QKD to reach greater distances than traditional QKD schemes and makes it the only scheme capable of beating the repeaterless bound on the bipartite private capacity. Here, we generalize TF-QKD to the multipartite scenario. Specifically, we propose a practical conference key agreement (CKA) protocol that only uses WCPs and linear optics and prove its security with a multiparty decoy-state method. Our protocol allows an arbitrary number of parties to establish a secret conference key by single-photon interference, enabling it to overcome recent bounds on the rate at which conference keys can be established in quantum networks without a repeater.

\end{abstract}

\maketitle

\section{Introduction}
Quantum Key Distribution (QKD) allows two parties to take advantage of quantum mechanical properties to share a common secret key with information-theoretic security. In the past decades, QKD developed at an increasingly high pace and today represents one of the most mature applications of quantum information science, both in terms of theoretical development and experimental implementation \cite{review1,review2}. More recently, in view of building quantum communication networks, a lot of effort has been put in generalizing QKD to the multipartite scenario with Conference Key Agreement (CKA) \cite{chen2004,Epping2017,Federico2018,Ottaviani2019,Zhang-CVCKA,CKArev}, which has already seen the first experimental implementations \cite{CKAexp,CKAexp2}. CKA exploits the correlations offered by multipartite entanglement to deliver the same conference key to a set of parties and it has recently been extended to guarantee anonymity of the communicating parties in a larger network \cite{ACKA-Hahn,ACKA-Grasselli,ACKA-clusterstates}. However, most CKA protocols are faced with the difficulty of establishing multipartite entanglement over large distances, limiting their applicability in real-world scenarios.

In the bipartite case, a variant of QKD, named Twin-Field QKD (TF-QKD) \cite{TF1,TF2,TF3,TF4,TF5}, enables two parties to share keys at much longer distances than most other QKD protocols. The founding idea of TF-QKD \cite{TF3,TF4,TF5} consists in a Measurement-Device Independent (MDI) scheme where a single photon sent by either of the parties interferes in an intermediate untrusted relay, thus halving the communication distance. This enables TF-QKD to beat the well-known repeaterless bound on the secret key capacity \cite{PLOB1,PLOB2}, as demonstrated by several experiments \cite{TFexp1,TFexp2,TFexp3,TFexp4,TFexp5,TFexp6,TFexp7,TFexp8,TFexp9,TFexp10,TFexp11}.

In an effort to extend the range of CKA, Ref.~\cite{Grasselli2019} introduces a CKA protocol based on single-photon interference that is inspired by the TF-QKD setup. This protocol, however, is highly unpractical as it requires each party to entangle solid-state qubits with the optical signals sent to the relay. Moreover, each party must store their qubit until the relay announces the interference outcome and then measure the qubit accordingly. 

Alternatively, more practical generalizations of TF-QKD were devised in \cite{tripartite-TFQKD1,tripartite-TFQKD2,tripartite-TFQKD3}, where the parties are only required to send weak coherent pulses or interfere the pulses with linear optics. However, the protocols in \cite{tripartite-TFQKD1,tripartite-TFQKD2,tripartite-TFQKD3} are not MDI and, more importantly, are limited to tripartite configurations and cannot be scaled to an arbitrary number of parties.

In this work, we introduce an MDI CKA protocol that does not present such drawbacks. Our protocol can be realized using only weak coherent pulses interfered with linear optics at an untrusted relay and allows an arbitrary number of parties to establish a conference key through single-photon interference. This enables our CKA protocol to operate at much higher losses than previous CKA schemes \cite{chen2004,Epping2017,Federico2018,CKAexp,CKAexp2}, which typically require the simultaneous distribution of photonic multipartite entangled states.

We prove the security of our protocol against collective attacks in the asymptotic regime by developing a multiparty decoy-state analysis \cite{Decoy1,Decoy2,Decoy3}, through which we derive analytical upper bounds on multipartite yields. We simulate the performance of our protocol with a realistic channel model that accounts for photon loss, dark counts in the detectors as well as phase and polarization misalignment.

Furthermore, we benchmark the protocol's conference key rate with recent upper bounds that apply to arbitrary quantum networks, namely the single-message multicast bound derived in \cite{PLOBN1}, adopting a similar approach used to benchmark bipartite TF-QKD setups. In particular, we consider network architectures where the relay is removed and compute their single-message multicast bounds. Our simulations show that our CKA protocol can overcome such bounds for certain noise regimes and number of parties, thus paving the way for long-distance CKA in quantum networks.

As a final remark, we note that the correlations post-selected by our protocol and used to establish the conference key belong, in essence, to W-class states \cite{Wstate}. Thus, our protocol demonstrates that conference keys can be practically established even without resorting to Greenberger-Horne-Zeilinger (GHZ) correlations. 

The paper is structured as follows. In Sec.~\ref{sec:theprotocol} we describe our CKA protocol and in Sec.~\ref{sec:security-proof} we prove its security. In Sec.~\ref{sec:decoy} we detail our multipartite decoy-state method. We simulate the protocol's performance in Sec.~\ref{sec:simulations} and conclude in Sec.~\ref{sec:conclusion}. Appendix~\ref{sec:BBSnet} describes the optical setup in the untrusted relay. In Appendix~\ref{sec:idealprotocol} we draw the connection between our protocol and the correlations of W states. The analytical upper bounds on multipartite yields are derived in Appendix~\ref{section:yieldbound}. Appendix~\ref{sec:channelmodel} contains details on the channel model and related calculations, while Appendix~\ref{section:numsim} provides details on the numerical simulations.

\section{Protocol} \label{sec:theprotocol}

In this section we present our CKA protocol based on single-photon interference, which is schematically represented in Fig.~\ref{scheme}. We limit the description to the asymptotic regime, where the effects due to finite detection statistics are negligible.

In the following, the symbol $\vec{v}$ stands for the binary representation of the integer $v$, with components $v_i\in\{0,1\}$, and $|\vec{v}|$ is the Hamming weight of the vector $\vec{v}$.
\begin{figure}[!htbp]
\centering
\includegraphics[width=\columnwidth, keepaspectratio]{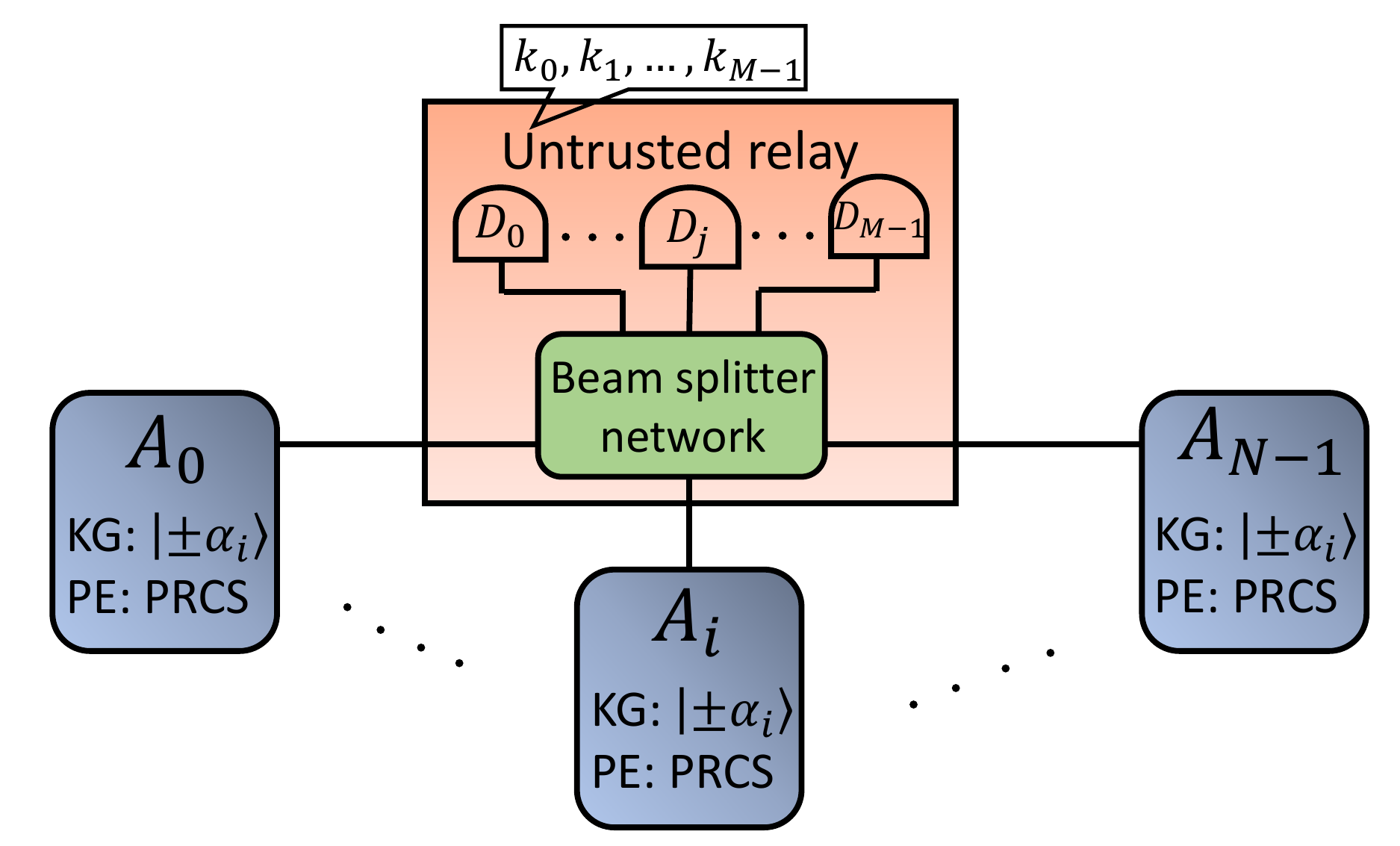}
\caption{Schematic representation of our CKA protocol. In a key generation round (KG), each party sends one of two coherent states $\ket{\pm \alpha_i}$ at random. In a parameter estimation round (PE), they send a phase-randomized coherent state (PRCS). In an honest implementation of the protocol, the relay combines the signals from each party with a beam splitter network with $M$ inputs and a threshold detector at each of the $M$ outputs (see Fig.~\ref{fig:BBSnet4.2} for the case $M=4$ and Appendix \ref{sec:BBSnet} for general $M$). The relay announces the detection pattern $\vec{k}=(k_0,k_1,\dots,k_{M-1})$.}
\label{scheme}
\end{figure}

The CKA protocol is run by $N$ parties, which we denote $A_0,A_1\dots,A_{N-1}$.\\

\begin{protocol} \label{prot:CKA}
		\caption{CKA protocol}
\begin{enumerate}[wide, labelwidth=!, labelindent=0pt]
\item Quantum state distribution and measurement: repeat the following steps a sufficiently large number of times.
\begin{enumerate}[label*=\arabic*.]
    
    \item Each party $A_i$ prepares an optical mode $a_i$ in a state that depends on whether the round is labelled as a parameter estimation (PE) round or key generation (KG) round\footnote{The type of round could be pre-determined, e.g., by a short preshared key held by every party \cite{Epping2017,Federico2018}.}. In a PE round, they prepare a phase-randomized coherent state (PRCS):
    \begin{align} \label{PEstates}
    \rho_{a_i}(\beta_i) &= e^{-\beta_i}\sum_{n=0}^{\infty}\frac{\beta_i^n}{n!}|n\rangle \langle n |\,,
    \end{align}
    where the intensity $\beta_i$ of the coherent state is chosen at random from a finite set $\mathcal{S}_i$ and where $\ket{n}$ is a Fock state. They record the intensity $\beta_i$. In a KG round, each party $A_i$ prepares the coherent state $\ket{x_i \alpha_i}_{a_i} $ for a fixed $\alpha_i \in\mathbbm{R}$, where $x_i =\pm 1$ is randomly chosen. They record the outcome $x_i$.
    
    \item Every party sends their optical pulse to an untrusted relay through an insecure channel.
    
    \item The untrusted relay performs an arbitrary operation on the $N$ optical signals and announces the pattern $\vec{k}\in\{0,1\}^M$, with $M \geq N$ [In an honest implementation of the protocol, $k_j=1$ ($k_j=0$) corresponds to a click (no click) in threshold detector $D_j$]. The round gets discarded if $|\vec{k}| \neq 1$ and we label $\Omega_j$ the event where $k_j=1$ and $k_{\neq j}=0$.
\end{enumerate}

    \item Parameter estimation: the parties partition their outcomes and intensities in $M$ sets, where each set corresponds to the event $\Omega_j$ (for $j=0,\dots,M-1$). For each partition, the parties reveal a fraction of their outcomes in order to estimate the probabilities $\Pr(\Omega_j | x_0, x_{i}, \mathrm{KG})$ that event $\Omega_j$ occurs in a KG round, given that parties $A_0$ and $A_i$ prepared coherent states $\ket{x_0 \alpha_0}$ and $\ket{x_i \alpha_i}$, respectively. With the estimated probabilities, the parties calculate the Quantum Bit Error Rate (QBER) with respect to reference party $A_0$, for every party pair and every partition ($Q^j_{X_0,X_i}$).
    
    Similarly, for each partition the parties reveal the intensities $\beta_i$ used in the PE rounds and estimate the so-called \textit{gains}, $G^j_{\beta_0,\dots,\beta_{N-1}}:=\Pr\left( \Omega_j | \beta_0,\dots,\beta_{N-1} \right)$, i.e. the probability of the event $\Omega_j$ in a PE round, given that the parties prepared PRCSs in \eqref{PEstates} with intensities $\beta_0,\dots,\beta_{N-1}$, respectively. Using the gains, the parties compute an upper bound ($\overline{Q}^j_Z$) on the phase error rate ($Q_Z^j$) of the protocol.
    
    \item Classical post-processing: The parties extract a secret conference key from the remaining (undisclosed) KG outcomes. To do so, for each partition labelled by $\Omega_j$, party $A_i$ flips their outcomes $x_i$ when $(-1)^{\vec{j}\cdot\vec{i}}=-1$. The parties then perform error correction and privacy amplification. The asymptotic conference key rate of the protocol is:
    \begin{align}\label{keyrate-lowerbound}
    r &= \sum_{j=0}^{M-1} \Pr(\Omega_j|\textrm{KG}) \left[1 - h(\overline{Q}^j_Z) -\max_{i \geq 1} h(Q^j_{X_0 X_i})\right],
    \end{align}
    where  $h(x)=-x\log_2 (x) -(1-x) \log_2 (1-x)$ is the binary entropy and where $\Pr(\Omega_j|\mathrm{KG})=(1/4)\sum_{x_0,x_i=\pm 1}\Pr(\Omega_j | x_0, x_{i}, \mathrm{KG})$ is the probability of event $\Omega_j$ in a KG round.
\end{enumerate}
\end{protocol}\vspace{0.5cm}

We prove the security of the CKA protocol in Sec.~\ref{sec:security-proof}. We remark that the security holds for any implementation of the quantum channels and of the relay, as far as the relay announces  a pattern in every round.

In an honest implementation of the protocol, the optical signals are sent through potentially noisy and lossy channels to the relay, where they interfere in a Balanced Beam Splitter (BBS) network of $M$ inputs and $M$ outputs, with $M \geq N$ and M being a power of two. The BBS network for $M=4$ is depicted in Fig.~\ref{fig:BBSnet4.2}, while the structure for generic $M$ is reported in Appendix~\ref{sec:BBSnet}.
We note that the total number of beam splitters required by the BBS network scales favourably with the number $N$ of parties, as $\mathcal{O}(N\log_2 N)$. The network transforms the input modes ($a_i$) in a balanced combination of the output modes ($d_j$), i.e.
\begin{equation}\label{transf}
    \hat{a}_i^{\dag}\rightarrow \frac{1}{\sqrt{M}}\sum_{j=0}^{M-1} (-1)^{\vec{j}\cdot \vec{i}}\,  \hat{d}_j^\dag \,.
\end{equation}
Then, the relay measures each output mode $d_j$ with a threshold detector $D_j$, for $j=0,\dots,M-1$, and announces the detection pattern $\vec{k}\in\{0,1\}^M$, where $k_j=1$ if detector $D_j$ clicked and $k_j=0$ otherwise. The round is retained only when exactly one detector clicks (event $\Omega_j$ for some $j$).

In the following, we provide the formulas to compute the QBER ($Q^j_{X_0,X_i}$) and the upper bound on the phase error rate ($\overline{Q}^j_Z$). The QBER is defined for every pair of parties $(A_0,A_i)$ and for every partition labelled by $\Omega_j$, as follows:
\begin{equation}\label{QBER}
    Q^j_{X_0,X_i}=\Pr( X_0 \neq (-1)^{\vec{j}\cdot \vec{i}} X_{i} | \Omega_j, \textrm{KG}),
\end{equation}
where $X_i$ is the binary random variable with outcomes $x_i=\pm 1$. The QBER is computed through Bayes' theorem:
\begin{align}
    Q^j_{X_0,X_i} & =  \sum_{x_0 \neq (-1)^{\vec{j}\cdot \vec{i}} x_{i}} \frac{ \Pr\left( \Omega_j | x_0, x_{i},\mathrm{KG}\right)}{4 \Pr(\Omega_j|\mathrm{KG})}.\label{QBERcomp}
\end{align}
The computation of the upper bound on the phase error rate is more involved. Indeed, it requires the derivation of upper bounds on quantities called \textit{yields} and defined as:
\begin{equation}
    Y^j_{n_0,\dots,n_{N-1}}:=\Pr(\Omega_j|n_0,\dots,n_{N-1}) ,\label{yields}
\end{equation}
i.e. the probability of the event $\Omega_j$ given the hypothetical scenario where the parties send Fock states with photon numbers $n_0, \dots, n_{N-1}$. In \eqref{upbound}, we provide analytical upper bounds ($\overline{Y}^j_{ n_0,\dots,n_{N-1}}$) on the yields as a function of the estimated gains $G^j_{\beta_0,\dots,\beta_{N-1}}$. Then, one can compute the upper bound on the phase error rate as follows:
\begin{align}\label{phase-error-rate-bound}
    &\overline{Q}^j_Z   = \frac{1}{\Pr(\Omega_j|\mathrm{KG})} \sum_{v\in\mathcal{V}} \nonumber\\
     &\left( \sum_{n_0+\dots +n_{N-1}\leq \overline{n}} \prod_{i=0}^{N-1} c_{i,n_i}^{(v_i)} \sqrt{\overline{Y}^j_{n_0,\dots,n_{N-1}}} + \Delta_{v,\overline{n}} \right)^2 ,
\end{align}
where $\overline{n}$ is a positive even number, while the set $\mathcal{V}$, the coefficients $c_{i,n_i}^{(v_i)}$ and the quantity $\Delta_{v,\overline{n}}$ are defined as follows:
\begin{align}
    \mathcal{V} &= \left\lbrace v\in\{0,2^N-1\} : |\vec{v}| \mod 2=0 \right\rbrace, \label{setV} \\[1.5ex]
    c_{i,n}^{(l)} &= \left\lbrace\begin{array}{ll}
    e^{-\alpha_i^2/2}\frac{\alpha_i^n}{\sqrt{n!}} & \mbox{if} \; n+l \; \mbox{is even} \\[1.5ex]
    0 & \mbox{if} \; n+l \; \mbox{is odd} 
   \end{array} \right. \label{coeffcat} \\[1.5ex]
  \Delta_{v,\overline{n}}&=\sum_{n_0+\dots+n_{N-1} \geq \overline{n}+2}\prod_{i=0}^{N-1} c_{i,n_i}^{(v_i)} . \label{Delta}
\end{align}
The full derivation of the upper bound \eqref{phase-error-rate-bound} on the phase error rate is provided Sec.~\ref{sec:security-proof}.

We remark that the protocol presented here uses the correlations of post-selected W-like states to obtain a secret conference key. In Appendix~\ref{sec:idealprotocol} we clarify the connection between the correlations generated in the CKA protocol and the W state. Moreover, we note that, for two parties ($N=2$), our protocol reduces to the TF-QKD protocol introduced in \cite{TF3} (see Appendix~\ref{sec:BBSnet}).

\begin{figure}[!htbp]
\centering
\includegraphics[width=\columnwidth, keepaspectratio]{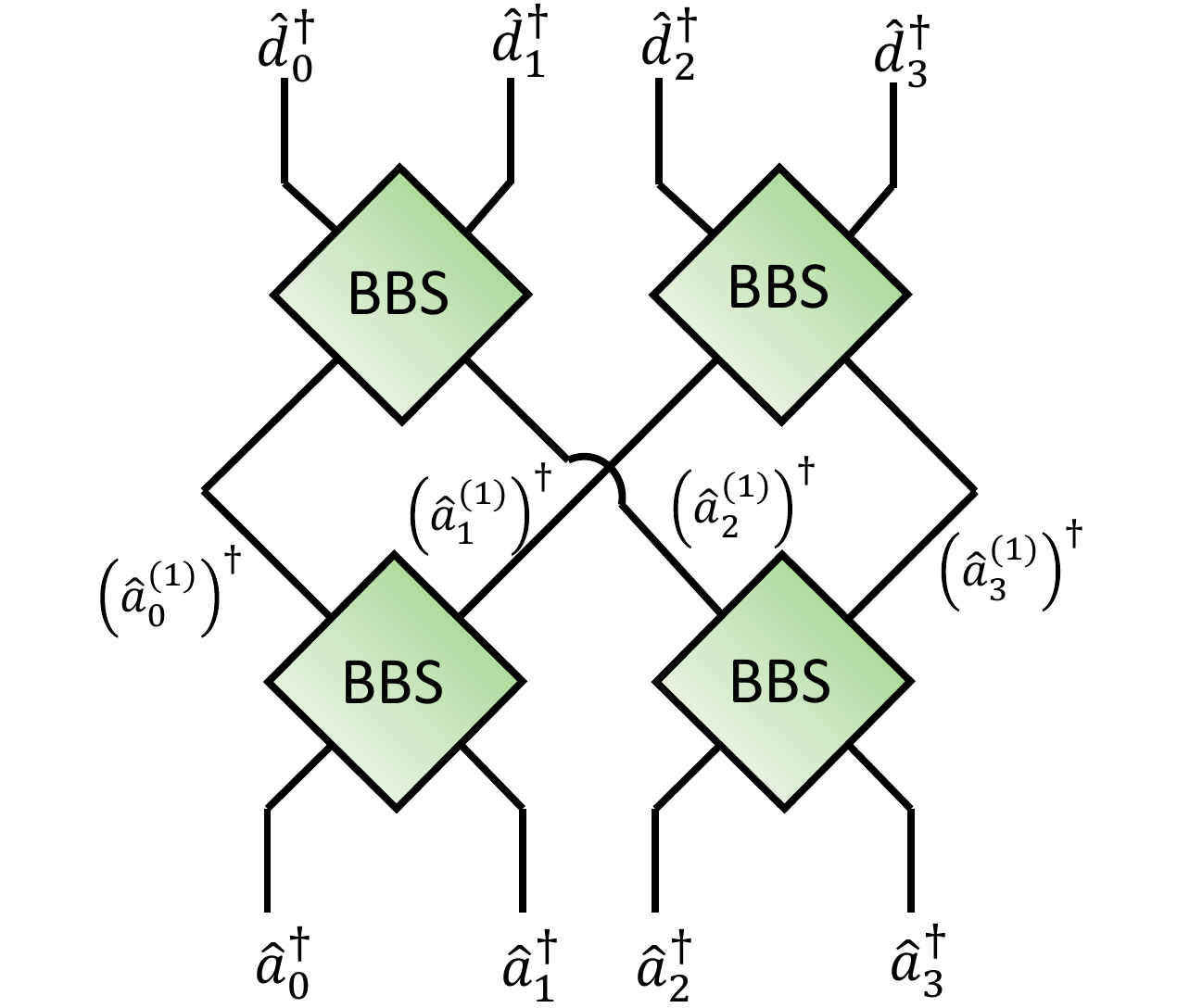}
\caption{Balanced Beam Splitter (BBS) network for $M=4$ inputs, which can be used by $N=2,3,4$ parties. The network for a general number of inputs ($M=2^s$) is described in Appendix~\ref{sec:BBSnet}.}
\label{fig:BBSnet4.2}
\end{figure}

\section{Security proof} \label{sec:security-proof}

Here we prove the security of the CKA protocol presented in Sec.~\ref{sec:theprotocol} under the assumption of collective attacks.
\begin{theorem}
The CKA protocol (Protocol~\ref{prot:CKA}), under collective attacks by the eavesdropper and in the asymptotic limit, generates a conference key with rate $r$, given by Eq. \eqref{keyrate-lowerbound}.
\end{theorem}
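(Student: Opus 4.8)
The plan is to prove security through the complementarity (phase-error-correction) approach: first reduce the prepare-and-measure scheme to an equivalent entanglement-based one via the source-replacement technique, then invoke the asymptotic Devetak--Winter rate valid for collective attacks, and finally express Eve's ignorance through a phase error rate that is bounded by the decoy-state yields. First I would apply source replacement: the preparation by party $A_i$ of $\ket{+\alpha_i}$ or $\ket{-\alpha_i}$ with equal probability is equivalent to preparing the state $\tfrac{1}{\sqrt{2}}\big(\ket{0}\ket{+\alpha_i}_{a_i}+\ket{1}\ket{-\alpha_i}_{a_i}\big)$ on a virtual qubit plus the optical mode, and measuring the qubit in the key-generation basis, whose outcome reproduces $x_i=\pm1$. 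Under collective attacks the state is i.i.d.\ across rounds and Eve holds a purification, so conditioning on the announced event $\Omega_j$ leaves the $N$ virtual qubits in a single-round state $\rho^j$ shared with Eve, and it suffices to analyse per-round quantities.

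Second, I would invoke the asymptotic key rate for one-way reconciliation of the conference key. Taking $A_0$ as the reference, the rate conditioned on $\Omega_j$ is
\[
r_j = H(X_0\mid E)_{\rho^j} - \max_{i\ge1} H(X_0\mid X_i)_{\rho^j},
\]
where the subtracted term is the information that must be broadcast so that every party $A_i$ reconciles its raw key with $A_0$'s; the maximum appears because a single broadcast syndrome must correct the noisiest pair. Since $X_0$ and $X_i$ are binary, $H(X_0\mid X_i)=h(Q^j_{X_0,X_i})$ with the QBER of \eqref{QBER}. Averaging $r_j$ over partitions with weight $\Pr(\Omega_j\mid\mathrm{KG})$ reproduces the error-correction part of \eqref{keyrate-lowerbound}.

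Third --- the technical heart --- I would lower-bound Eve's uncertainty by $H(X_0\mid E)_{\rho^j}\ge 1-h(Q^j_Z)$ through complementarity: the privacy of the key basis is controlled by how well the parties could have correlated their outcomes had they instead measured the virtual qubits in the conjugate basis, i.e.\ by the phase error rate $Q^j_Z$. The crucial observation is that the conjugate-basis eigenstates are proportional to $\ket{+\alpha_i}\pm\ket{-\alpha_i}$, the even/odd photon-number cat states, so a phase-basis measurement is a photon-number parity measurement. Expanding each eigenstate in the Fock basis produces exactly the coefficients $c_{i,n}^{(l)}$ of \eqref{coeffcat}, and the even-parity patterns that constitute a phase error across the parties are enumerated by the even-Hamming-weight set $\mathcal{V}$ of \eqref{setV}. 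Writing the joint amplitude for a photon-number configuration $(n_0,\dots,n_{N-1})$, applying the triangle inequality and then Cauchy--Schwarz bounds the phase-error amplitude by $\sum\prod_i c_{i,n_i}^{(v_i)}\sqrt{Y^j_{n_0,\dots,n_{N-1}}}$ in terms of the yields \eqref{yields}; truncating the Fock sum at $\overline{n}$ contributes the tail term $\Delta_{v,\overline{n}}$ of \eqref{Delta}. Substituting the decoy-state upper bounds $\overline{Y}^j_{n_0,\dots,n_{N-1}}$ (Appendix~\ref{section:yieldbound}) gives the computable estimate $\overline{Q}^j_Z$ of \eqref{phase-error-rate-bound}, and monotonicity of $h$ below $1/2$ yields $H(X_0\mid E)\ge 1-h(\overline{Q}^j_Z)$.

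The main obstacle is this third step: turning the operationally defined phase error rate into a bound by experimentally accessible quantities. The difficulty is twofold --- correctly identifying which parity patterns constitute a phase error in the multipartite W-like correlations (hence the set $\mathcal{V}$), and controlling the infinite Fock expansion so that only finitely many yields, each replaced by its decoy-state upper bound, enter the final estimate while the omitted tail is rigorously accounted for by $\Delta_{v,\overline{n}}$. Assembling $r_j$ over all partitions with their weights $\Pr(\Omega_j\mid\mathrm{KG})$ then reproduces \eqref{keyrate-lowerbound} and completes the proof.
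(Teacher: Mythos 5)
Your proposal is correct and follows essentially the same route as the paper's proof: source replacement onto virtual qubits, the Devetak--Winter rate applied per post-selected event $\Omega_j$, Fano's inequality for the reconciliation term, a complementarity/uncertainty-relation bound on $H(X_0|E)_{\Omega_j}$ via the phase error rate, the cat-state Fock expansion with the triangle inequality to express $Q_Z^j$ through the yields, and the decoy-state truncation with the tail term $\Delta_{v,\overline{n}}$. The only differences are cosmetic: your basis labelling is swapped relative to the paper (key in the computational basis, parity in the conjugate basis), the paper invokes the entropic uncertainty relation explicitly where you appeal to complementarity, and the amplitude bound in fact needs only the triangle inequality, not Cauchy--Schwarz.
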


\begin{proof}
In the asymptotic limit and under collective attacks, the achievable conference key rate $r$ of a CKA protocol with one-way reconciliation is lower bounded by the following \cite{CKArev}:
\begin{equation}\label{DWrate}
    r \geq  H(X_0|E) - \max_{i \geq 1} H(X_0|X_i),
\end{equation}
where $H(X_0|E)$ ($H(X_0|X_i)$) is the von Neumann (Shannon) entropy of the KG outcome of reference party $A_0$, conditioned on the eavesdropper's total side information (party $A_i$'s KG outcome), and it is evaluated on the state shared by the parties in a KG round. Note that the probability of a KG round is set to one in \eqref{DWrate}, since, asymptotically, the fraction of PE rounds becomes negligible.

In the case of our protocol, we post-select the KG rounds where event $\Omega_j$ occurred and discard all the other rounds. And for each event $\Omega_j$, we independently extract a conference key. Hence, the asymptotic conference key rate of the whole protocol is bounded by:
\begin{align}\label{DWrate-oneclick}
    r &\geq \sum_{j=0}^{M-1} \Pr(\Omega_j|\textrm{KG}) \left[H(X_0|E)_{\Omega_j} - \max_{i \geq 1} H(X_0|X_i)_{\Omega_j}\right],
\end{align}
where the entropies are computed on the state shared by the parties in a KG round, conditioned on event $\Omega_j$.

Recall that, in an honest implementation, $\Omega_j$ corresponds to the event where only detector $D_j$ clicks. Although our proof holds regardless of the physical details associated to the event $\Omega_j$, in the following we often refer to $\Omega_j$ in terms of detector clicks for concreteness.

The second term in \eqref{DWrate-oneclick} is the conditional Shannon entropy between the KG outcomes of parties $A_0$ and $A_i$, when only detector $D_j$ clicked. Thus, it can be readily bounded through Fano's inequality with the corresponding QBER in \eqref{QBER} as follows:
\begin{align}
    H(X_0|X_i)_{\Omega_j} \leq h(Q^j_{X_0,X_i}). \label{shannon-upperbound}
\end{align}

In order to lower bound the first conditional entropy in \eqref{DWrate-oneclick}, we employ the entropic uncertainty relation \cite{entropicUncert}. To apply the uncertainty relation, we need to view the outcome $X_0$ corresponding to the coherent state prepared by party $A_0$ as the result of a fictitious measurement. To this aim, we consider an equivalent formulation of the   protocol where each party, in a KG round, first prepares the following entangled state between their optical mode $a_i$ and a virtual qubit $Q_i$:
\begin{equation}
        |\psi_i \rangle_{Q_i a_i} = \frac{1}{\sqrt{2}}\left( \ket{+}_{Q_i}\ket{\alpha_i}_{a_i} + \ket{-}_{Q_i}\ket{-\alpha_i}_{a_i}\right),
\end{equation}
where $\ket{\pm}=(\ket{0}\pm\ket{1})/\sqrt{2}$, and then measures their qubit in the X-basis. Note that, from the eavesdropper's point of view, the fictitious protocol is completely equivalent to the actual protocol, even in the case that the parties delay their X-basis measurement until after the relay's announcement. This allows us to consider the state of the $N$ qubits and optical modes, conditioned on detector $D_j$ clicking in a KG round, prior to the X-basis measurements. The state reads:
\begin{equation}
    \ket{\chi_j}_{Q_0 Q_1\dots Q_{N-1}E} := \frac{\hat{K}_j\left(\bigotimes_{i=0}^{N-1} \ket{\psi_i}_{Q_i a_i}\right)}{\sqrt{\Pr(\Omega_j|\mathrm{KG})}}, \label{chi-j}
\end{equation}
where $\hat{K}_j$ is the Kraus operator\footnote{Note that we can assume, without loss of generality, that there is only one Kraus operator for each announcement $\Omega_j$. Indeed, if there were more, the eavesdropper would be able to distinguish which operator acted on the optical modes with an ancillary classical flag. The flag would be part of the side information $E$, allowing the expansion of the entropy $H(X_0|E)_{\Omega_j}$ over each value of the flag, which coincides with the scenario of having only one Kraus operator.} that models the action of the untrusted relay, i.e. the eavesdropper, when it announces the event $\Omega_j$. The operator $\hat{K}_j$ acts between the Fock space of optical modes $a_0 \dots a_{N-1}$ and a generic Hilbert space $\mathcal{H}_E$, i.e., $\hat{K}_j:\;\mathcal{H}_{a_0}\otimes \dots \otimes \mathcal{H}_{a_{N-1}} \rightarrow \mathcal{H}_E$.

We remark that, due to the assumption of collective attacks, the operator $\hat{K}_j$ remains the same in every KG and PE round. Nevertheless, due to the partial distinguishability of the states prepared in KG and PE rounds, $\hat{K}_j$ could model the attempt to guess the type of round followed by an operation which is specific to KG and PE rounds. This implies that, in general, $\Pr(\Omega_j|\mathrm{KG}) \neq \Pr(\Omega_j|\mathrm{PE})$.

With the pure state in \eqref{chi-j}, we can apply the entropic uncertainty relation by considering the hypothetical scenario where party $A_0$ performs either an X-basis or a Z-basis measurement on their qubit. We thus obtain the following lower bound on the first entropy in \eqref{DWrate-oneclick}:
\begin{align}
    H(X_0|E)_{\Omega_j} \geq 1 - H(Z_0|Q_1 \dots Q_{N-1})_{\Omega_j}, \label{uncertrel}
\end{align}
where both conditional entropies are computed on the state \eqref{chi-j}. We then derive an upper bound on the entropy on the right hand side of \eqref{uncertrel} by using the fact that quantum maps on the conditioning systems can only increase the entropy \cite{NC2010}:
\begin{align}
    H(Z_0|Q_1 \dots Q_{N-1})_{\Omega_j} &\leq H(Z_0|\textstyle{\prod_{i=1}^{N-1}Z_i})_{\Omega_j} \nonumber\\
    &\leq h(Q^j_Z)  \label{uncertrel-upperbound}.
\end{align}
In the second line, we used Fano's inequality and the definition of phase error rate:
\begin{equation}
            Q^j_Z=\Pr({\textstyle\prod}_{i=0}^{N-1} Z_i=1 | \Omega_j, \textrm{KG}), \label{phase_error_rate}
\end{equation}
which expresses the probability that, in the hypothetical scenario where each party measures in the Z-basis their virtual qubit, the product of the outcomes is one. By employing \eqref{shannon-upperbound} and \eqref{uncertrel-upperbound} in \eqref{DWrate-oneclick}, we obtain the following expression for the asymptotic conference key rate of our CKA protocol:
\begin{align}\label{keyrate}
    r &\geq \sum_{j=0}^{M-1} \Pr(\Omega_j|\textrm{KG}) \left[1 - h(Q^j_Z) -\max_{i \geq 1} h(Q^j_{X_0 X_i})\right].
\end{align}
To complete the security proof, we still need to bound the phase error rate ($Q_Z^j$) with the statistics collected by the parties in the PE rounds. The derivation of the bound is inspired by the security proof in \cite{TF3} for a bipartite TF-QKD protocol.

By definition \eqref{phase_error_rate}, the phase error rate is the probability that an even number of parties obtains  $-1$ as the outcome of their Z-basis measurement, in the hypothetical scenario in which all parties measured their virtual qubit in the Z-basis in a KG round and detector $D_j$ clicks. Through the $N$-qubit state \eqref{chi-j}, which describes the state of the virtual qubits in a KG round conditioned on the click of detector $D_j$, we are able to express the phase error rate as follows:
\begin{equation}
    Q_Z^j=\sum_{v \in\mathcal{V}} \norm{\bra{ \vec{v} }_{_{Q_0 \dots Q_{N-1}}} \ket{\chi_j }}^2 \label{phase-error-rate-explicit}
\end{equation}
where the set $\mathcal{V}$ is defined in \eqref{setV}, i.e., the set of binary strings with parity zero. In order to bound the expression in \eqref{phase-error-rate-explicit}, we observe that, for $l=0,1$, we have: $_{Q_i}\braket{l|\psi_i}_{Q_i a_i} = \ket{C_i^{(l)}}_{a_i}$, where $\ket{C_i^{(l)}}_{a_i}$ are unnormalized ``cat states'': 
\begin{align}
    \ket{C_i^{(l)}}_{a_i} &= \frac{\ket{\alpha_i}+(-1)^l\ket{-\alpha_i}}{2} = \sum_{n=0}^{\infty}c^{(l)}_{i,n}\ket{n}_{a_i} \label{catstates},
\end{align}
with $c_{i,n}^{(l)}$ defined in \eqref{coeffcat}. By employing the states in \eqref{catstates}, we can derive an upper bound on each term in the sum of \eqref{phase-error-rate-explicit} as follows:
\begin{align}
    &\Pr(\Omega_j|\mathrm{KG})\norm{\bra{\vec{v}}_{Q_0 \dots Q_{N-1}} \ket{\chi_j}}^2  =  \norm{\hat{K}_j \bigotimes_{i=0}^{N-1} \ket{C_i^{(v_i)}}_{a_i} }^2 \nonumber \\
    &\quad= \norm{{\sum_{n_0,\dots,n_{N-1}=0}^{\infty}} \hat{K}_j \bigotimes_{i=0}^{N-1} c_{i,n_i}^{(v_i)}  \ket{n_i}}^2 \nonumber\\
    &\quad\leq \left( {\sum_{n_0,\dots,n_{N-1}=0}^{\infty}}  \norm{\hat{K}_j\bigotimes_{i=0}^{N-1}  c_{i,n_i}^{(v_i)} \ket{n_i}} \right)^2 \nonumber\\
    &\quad= \left( \sum_{n_0,\dots,n_{N-1}=0}^{\infty} \prod_{i=0}^{N-1}c_{i,n_i}^{(v_i)} \sqrt{Y^j_{ n_0,\dots,n_{N-1}}} \right)^2, \label{ineq}
\end{align}
where we used the fact that $\hat{K}_j$ only acts on the optical systems in the first equality and the triangle inequality in the third line. Moreover, we identified:
\begin{align} \label{yields-security}
\norm{\hat{K}_j |n_0 \rangle_{a_0}  \dots  |n_{N-1} \rangle_{a_{N-1}}}^2 &=\Pr\left(\Omega_j| n_0,\dots,n_{N-1} \right) \nonumber\\
&=: Y^j_{ n_0,\dots,n_{N-1}},  
\end{align}
as the yields. We derive an upper bound on the phase error rate by employing the inequality \eqref{ineq} in \eqref{phase-error-rate-explicit}. We obtain:
\begin{align}  \label{phase-error-rate-upp}
     Q_Z^j \leq \overline{Q}_Z^{j} &= \frac{1}{\Pr(\Omega_j|\mathrm{KG})} \sum_{v\in\mathcal{V}} \nonumber\\
     &\left( \sum_{n_0,\dots,n_{N-1}=0}^{\infty} \prod_{i=0}^{N-1}c_{i,n_i}^{(v_i)} \sqrt{Y^j_{ n_0,\dots,n_{N-1}}} \right)^2,
\end{align}
where the set $\mathcal{V}$ is given in \eqref{setV} and the coefficients $c^{(v_i)}_{i,n_i}$ are given in \eqref{coeffcat}.

The bound in \eqref{phase-error-rate-upp} is not yet sufficient to obtain a computable lower bound on the key rate \eqref{keyrate} of our CKA protocol, i.e. an expression that can be evaluated from the observed statistics. Indeed, the yields in \eqref{phase-error-rate-upp} are not directly observed and must be estimated through a multipartite decoy-state method.

From the detection statistics of PE rounds, the parties can estimate the gains. By recalling that, under the assumption of collective attacks, the Kraus operator $\hat{K}_j$ corresponding to the event $\Omega_j$ is the same in every round, we can express the gains as follows:
\begin{align}\label{PEprob}
    &G^j_{\beta_0,\dots,\beta_{N-1}} = \nonumber\\ &=\sum_{n_0,\dots,n_{N-1}=0}^{\infty} \Tr\left[\hat{K}_j \bigotimes_{i=0}^{N-1} e^{-\beta_i} \frac{\beta_i^{n_i}}{n_i!} \ket{n_i}\bra{n_i} \hat{K}^\dag_j\right] \nonumber\\
    &=\sum_{n_0,\dots,n_{N-1}=0}^{\infty} \prod_{i=0}^{N-1}P_{\beta_i}(n_{i}) \Tr\left[\hat{K}_j \bigotimes_{i=0}^{N-1} \ket{n_i}\bra{n_i} \hat{K}^\dag_j\right] \nonumber\\
    &=\sum_{n_0,\dots,n_{N-1}=0}^{\infty} \prod_{i=0}^{N-1}P_{\beta_i}(n_{i}) Y^j_{ n_0,\dots,n_{N-1}},
\end{align}
where we used \eqref{yields-security} in the last equality and defined the Poisson distribution $P_{\lambda}(n)=e^{-\lambda}\lambda^n/n!$. The last expression links the observed gains to the yields and forms the basis of our multipartite decoy-state method, which we detail in Sec.~\ref{sec:decoy}. Our method allows us to obtain analytical upper bounds $\overline{Y}^j_{n_0,\dots,n_{N-1}}$ on any yield. 

Although our method is general and works for any choice of photon numbers $n_0, \dots, n_{N-1}$, in practice it is not necessary to bound every yield appearing in \eqref{phase-error-rate-upp} with a non-trivial upper bound. This is because the product of the coefficients defined in \eqref{coeffcat} satisfies:
\begin{align}
    \prod_{i=0}^{N-1} c_{i,n_i}^{(v_i)} \neq 0 \quad\iff\quad n_{\rm tot}:=\sum_{i=0}^{N-1} n_i \,\,\mbox{is even}.
\end{align}
Therefore, the only yields contributing to the phase error rate upper bound in \eqref{phase-error-rate-upp} are those with $n_{\rm tot}=0,2,4,\dots$ and so on. Moreover, the product of the coefficients rapidly decreases with $n_{\rm tot}$, implying that it is sufficient to non-trivially bound only the yields corresponding to the first few values of $n_{\rm tot}$, while the rest of the yields can be bounded by one.  

With the yields' bounds, we can further bound the quantity in \eqref{phase-error-rate-upp} and obtain the following upper bound on the phase error rate:
\begin{align}\label{phase-error-rate-upp2}
    Q^j_Z \leq &\overline{Q}^j_Z  = \frac{1}{\Pr(\Omega_j|\mathrm{KG})} \sum_{v\in\mathcal{V}} \nonumber\\
     &\left( \sum_{n_0+\dots +n_{N-1}\leq \overline{n}} \prod_{i=0}^{N-1} c_{i,n_i}^{(v_i)} \sqrt{\overline{Y}^j_{n_0,\dots,n_{N-1}}} + \Delta_{v,\overline{n}} \right)^2 ,
\end{align}
where $\overline{Y}^j_{n_0,\dots,n_{N-1}}$ are the non-trivial bounds derived in Sec.~\ref{sec:decoy} and $\Delta_{v,\overline{n}}$ is the residual term obtained by bounding by one all the remaining yields. We have:
\begin{align}
    \Delta_{v,\overline{n}}&=\sum_{n_0+\dots+n_{N-1} \geq \overline{n}+2}\prod_{i=0}^{N-1} c_{i,n_i}^{(v_i)} , 
\end{align}
where $\overline{n}$ is an even number.

By employing \eqref{phase-error-rate-upp2} in \eqref{keyrate}, we recover the computable lower bound on the conference key rate in \eqref{keyrate-lowerbound}. This concludes the security proof.
\end{proof}

As a final remark, we stress that the assumption on collective attacks, i.e. the operator $\hat{K}_j$ being constant in every round, is instrumental in our proof. Extending the security proof to coherent attacks would mean that $\hat{K}_j$ could not only guess the type of the current round, but also depend on the sequence of previous guesses, thus not remaining constant throughout the protocol run. The security of our protocol under coherent attacks could be proved by adapting the technique in \cite{finitekeyTFQKDGuillermo}. Indeed, in \cite{finitekeyTFQKDGuillermo} the authors perform a full finite-key analysis against coherent attacks for the TF-QKD protocol in \cite{TF3}, which is recovered by our protocol when $N=2$. We conjecture that the asymptotic key rate of our protocol would not be affected by coherent attacks, as suggested by taking the asymptotic limit of the finite key rate in \cite{finitekeyTFQKDGuillermo} and realizing that it coincides with our asymptotic key rate \eqref{keyrate-lowerbound}, when $N=2$.

\section{Multipartite decoy-state method} \label{sec:decoy}

In this section we present a technique that generalizes the decoy-state method to the multipartite scenario and provides an analytical upper bound on any yield $Y^j_{n_0,\dots,n_{N-1}}$, when an arbitrary number of parties $N$ use the same set of two decoy intensities: $\mathcal{S}=\{\beta_0,\beta_1\}$.

The starting point of the multipartite decoy-state method is the equation that relates the observed gains with the yields, Eq.~\eqref{PEprob}, which we report here for clarity:
\begin{align}\label{PEprob2}
    &G^j_{\vec{f}} &=\sum_{n_0,\dots,n_{N-1}=0}^{\infty} Y^j_{ n_0,\dots,n_{N-1}} \prod_{i=0}^{N-1} \frac{e^{-\beta_{f_i}}\beta_{f_i}^{n_{i}}}{n_{i}!},
\end{align}
where we introduced the binary vector $\vec{f}$ that fixes the choice of intensity to $\beta_{f_i}$ for party $A_i$.

Importantly, the yields are independent of $\vec{f}$, i.e. of the selected intensities. Thus, Eq.~\eqref{PEprob2} can be interpreted as a system of $2^N$ linear equations, each one labelled by $\vec{f}$, where the yields are the unknowns. By performing appropriate linear combinations of the system of equations, one can derive equalities where only a subset of yields survive, thus reducing the number of unknowns. However, the number of unknowns is infinite, implying that such a technique cannot generate the exact solution for each yield. Nevertheless, from the linear combinations presenting a reduced number of yields, one can still obtain non-trivial upper bounds.

For concreteness, consider the following toy example of an equality linking a function $B$ of the observed statistics to a (possibly infinite) subset of yields, $Y$ and $Y_i$,
\begin{equation}
    B = c Y + \sum_i c_i Y_i, \label{toy}
\end{equation}
where $c$ and $c_i$ are real coefficients. Suppose that our goal is to derive an upper bound on the yield $Y$. To do so, we first split the sum of the other yields in two sums, one in which the coefficients $c_i$ have the same sign as $c$ and another where they have opposite sign. By labelling $s_i:=\mathrm{sign}(c_i)$ ($s$) the sign of coefficient $c_i$ ($c$), we have:
\begin{equation}
    B = c Y + \sum_{i:\,s_i=s} c_i Y_i + \sum_{i:\,s_i\neq s} c_i Y_i \label{toy2}.
\end{equation}
Now, by multiplying both sides by $s$ and isolating $Y$, we get:
\begin{equation}
    Y|c|  = sB - \sum_{i:\,s_i=s} |c_i| Y_i + \sum_{i:\,s_i\neq s} |c_i| Y_i \label{toy3}.
\end{equation}
Then, it is straightforward to obtain an upper bound on $Y$ by minimizing the yields $Y_i$ whose coefficients have the same sign as the coefficient of $Y$ ($s_i=s$) and by maximizing the other yields ($s_i\neq s$). In case we do not have non-trivial bounds on the yields $Y_i$, we simply set the former to zero and the latter to one. In many cases, the described procedure can lead to a non-trivial bound on $Y$:
\begin{equation}
    Y \leq \min\left\lbrace B/c  + \sum_{i:\,s_i\neq s} |c_i/c| ,\, 1\right\rbrace \label{toy4},
\end{equation}
where the minimum is taken to ensure that the bound is never greater than 1.

In Appendix~\ref{section:yieldbound}, we apply this method on the system in \eqref{PEprob2} and obtain a non-trivial upper bound on the generic yield $Y^j_{n_0,\dots,n_{N-1}}$, given by:
\begin{widetext}
\begin{align}
    \overline{Y}^j_{n_0,\dots,n_{N-1}}&=\min\{U^j_{n_0,\dots,n_{N-1}},1\}\,, \nonumber \\
    U^j_{n_0,\dots,n_{N-1}} &= \prod_{\underset{n_i\neq 0}{i \; \mathrm{s.t.}}}\frac{n_i!}{\beta_0^{n_{i}}-\beta_1^{n_{i}}}\left[\frac{B^j_{\vec{h}}\,\,(-1)^{N-m}}{(\beta_0-\beta_1)^{N-m}} + \left( e^{\beta_0}-e^{\beta_1}\right)^m \sum_{k=0}^{\left\lfloor \frac{N-m-1}{2} \right\rfloor}\binom{N-m}{2k+1}\left(\frac{\beta_1 e^{\beta_0}- \beta_0 e^{\beta_1} + \beta_0 -\beta_1}{\beta_0-\beta_1}\right)^{2k+1}\right] \label{upbound},
\end{align}
\end{widetext}
where $\vec{h}$ is the binary vector with components:
\begin{align}
    h_i = \left\lbrace \begin{array}{ll}
      1   & \mbox{if }n_i\geq 1  \\
    0  &  \mbox{if } n_i=0,
    \end{array}\right.
\end{align}
while $m=|\vec{h}|$, $\lfloor x \rfloor$ is the floor function, and $B^j_{\vec{h}}$ is given by:
\begin{equation}
    B^j_{\vec{h}}=\sum_{f=0}^{2^N-1} (-1)^{|\vec{f}|} \beta_0^{(\vec1-\vec{h})\cdot\vec{f}}\beta_1^{(\vec1-\vec{h})\cdot(\vec1-\vec{f})} \frac{G^j_{\vec{f}}}{\prod_{i=0}^{N-1}e^{-\beta_{f_i}}}.
\end{equation}

As a final remark, our analytical technique can be generalized to scenarios with different/more intensities for each party. Besides, we point out that the calculation of the yields' bounds required by the phase error rate bound in \eqref{phase-error-rate-bound} can also be done numerically by using linear programming techniques \cite{TF2}.

\section{Simulations} \label{sec:simulations}

In order to assess the performance of our   protocol, we simulate its key rate \eqref{keyrate-lowerbound} under a channel model that includes different sources of noise. First, we model the losses between each party and the detectors at the relay with the same pure loss channel with transmittance $\eta$. We also account for a polarization and phase misalignment of $2\%$ between the reference party $A_0$ and each other party. Moreover, we account for dark counts in the detectors by computing the key rates considering different dark count probabilities, namely: $10^{-8},10^{-9}$ and $10^{-10}$. In Appendix \ref{sec:channelmodel} we describe the channel model in detail and provide the calculations of the protocol's statistics under such model. 

In our symmetric channel model each party experiences the same loss. Thus, the optimal signal intensities are independent of the party, implying that we can set $\alpha_i=\alpha$ and $\mathcal{S}_i=\mathcal{S}$ for every $i$, without losing in performance. Under these conditions, we analytically verify (see Appendix \ref{sec:channelmodel}) that the detection statistics, i.e. $\Pr(\Omega_j | x_0, x_{i}, \mathrm{KG})$ and $\Pr(\Omega_j | \beta_0,\dots,\beta_{N-1})$, are independent of which detector clicks ($j$) and of the party ($i$).

This readily implies that the QBER in \eqref{QBERcomp} is independent of the party and of the detector and we can indicate it as: $Q^j_{X_0 X_i}=Q_X$. Similarly, the analytical upper bounds on the yields presented in Sec.~\ref{sec:decoy} are independent of $j$ since the gains are independent of $j$. We employ our yields bounds, \eqref{upbound}, in the calculation of the bound $\overline{Q}_Z$ on the phase error rate \eqref{phase-error-rate-bound}, where we choose $\overline{n}=4$ as the cutoff number above which every yield is trivially bounded by one. The choice is motivated by the fact that, for $\overline{n}=4$, the residual term $\Delta_{v,\overline{n}}$ in \eqref{Delta} becomes negligible.

By considering the discussed symmetries, the asymptotic conference key rate of our simulations simplifies to:
\begin{align}\label{keyrate-simulations}
    r &\geq M\Pr(\Omega|\textrm{KG}) \left[1 - h(\overline{Q}_Z) - h(Q_X)\right],
\end{align}
where $\Pr(\Omega|\textrm{KG})$ is the probability that a fixed detector clicks in a KG round and $M\geq N$ is the number of detectors in the relay.

In order to benchmark the performance of our protocol, we follow a similar approach to the one used for TF-QKD schemes. Typically, the key rate of a TF-QKD protocol is benchmarked against the repeaterless bound \cite{PLOB1}, i.e. the bound on the private capacity between Alice and Bob when the relay between the two parties is removed. In our multipartite setting, we consider the hypothetical scenario where the central relay is removed and compare our conference key rate with the resulting single-message multicast bound \cite{PLOBN1}, i.e. the ultimate rate at which conference keys can be established in the quantum network without the relay. However, the multicast bound depends on the network's architecture and there are at least two network configurations (star network and fully connected network) that can arise when removing the relay, which we depict in Fig.~\ref{fig:netarch}.

\begin{figure}[!htbp]
\centering
\includegraphics[width=\columnwidth, keepaspectratio]{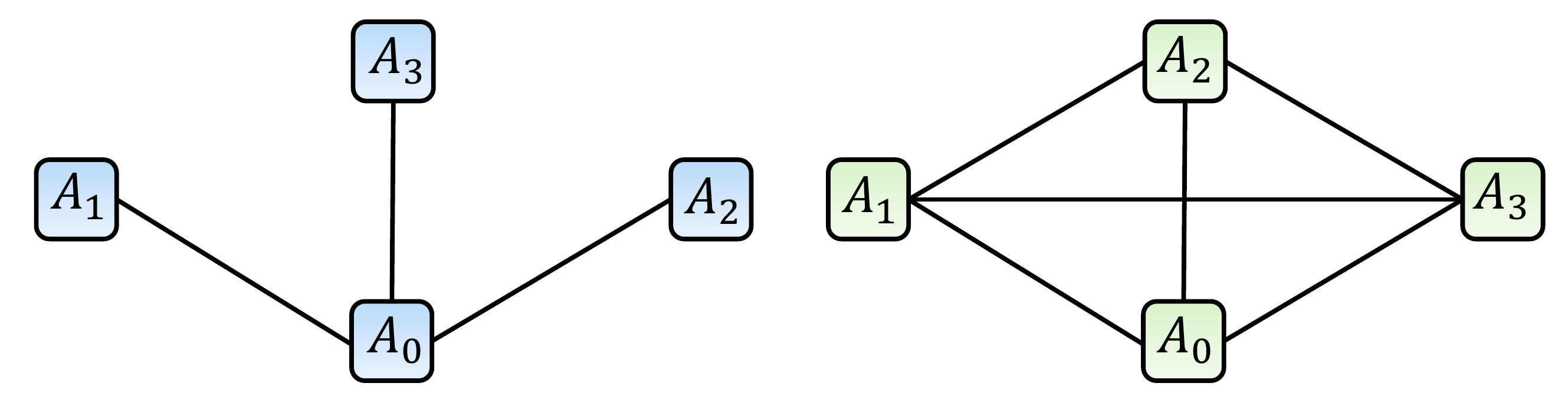}
\caption{Two possible network configurations that arise when the relay is removed, for $N=4$ parties. In the left configuration, there is a bipartite link between $A_0$ and each other party (star network). In the right configuration, each party is connected with each other (fully connected network). The transmittance of the channel connecting any two parties is $\eta^2$.}
\label{fig:netarch}
\end{figure}

In the star network (left configuration in Fig.~\ref{fig:netarch}), there is a pure-loss bosonic channel with transmittance $\eta^2$ between party $A_0$ and each other party $A_i$ (for $i=1,\dots,N-1$). In this case, the single-message multicast bound is independent of the number of parties $N$ and coincides with the bipartite repeaterless bound \cite{PLOB1} used for TF-QKD protocols: 
\begin{equation}
    r \leq -\log_2{(1-\eta^2)}=: R_1 \label{R1}.
\end{equation}
In the right configuration of Fig.~\ref{fig:netarch}, the resulting network is fully connected, such that each party is linked to each other with the same pure-loss bosonic channel with transmittance $\eta^2$. In this case, the single-message multicast bound reads \cite{PLOBN1}:
\begin{equation}
    r \leq -(N-1)\log_2{(1-\eta^2)}=: R_2(N). \label{R2}
\end{equation}
It is important to emphasize that, in order to obtain the network configurations of Fig.~\ref{fig:netarch}  when removing the relay, additional pure-loss channels need to be added on top of the existing channels used by our protocol. For instance, the star network can be seen as the result of a combination of six channels with transmittance $\eta$: three channels connect $A_0$ to the point where the relay was located and are subsequently linked to the three channels connecting to parties $A_1$, $A_2$ and $A_3$. While our CKA protocol requires only four such channels (from the relay to each of the parties) when $N=4$. This contrasts with the benchmarking of bipartite TF-QKD against the repeaterless bound, where the relay is removed and the two original channels are linked together without the need to add further channels. Therefore, when comparing the multicast bounds \eqref{R1} and \eqref{R2} with the CKA rate of our protocol, one should consider that the multicast bounds can only be attained if additional channels are used.

In Fig.~\ref{fig:rate}, we plot the key rate \eqref{keyrate-simulations} of our protocol for $N=3$, $N=4$ and $N=5$ parties, together with the multicast bounds \eqref{R1} and \eqref{R2}. In the top plot of Fig.~\ref{fig:rate}, we compute the phase error rate bound in \eqref{phase-error-rate-bound} with our analytical upper bounds on the yields \eqref{upbound} obtained with two decoy intensities fixed to $\beta_0=0.5$ and $\beta_1=0$, respectively. In the bottom plot, instead, we assume that the relevant yields in the phase error rate bound \eqref{phase-error-rate-bound} are known and use their exact analytical expression \eqref{exact-yields} (see Appendix~\ref{sec:channelmodel} for the calculation). This corresponds to the limit where the parties have an infinite number of decoy intensities and can estimate the yields exactly. In both plots, we optimize the key rate at each level of loss over the signal amplitude $\alpha$. Further details on the numerical simulations and on the optimal values for $\alpha$ are reported in Appendix~\ref{section:numsim}.

\begin{figure}[!htbp]
\centering
\textbf{Two decoys}\par\medskip
    \includegraphics[width=\columnwidth, keepaspectratio]{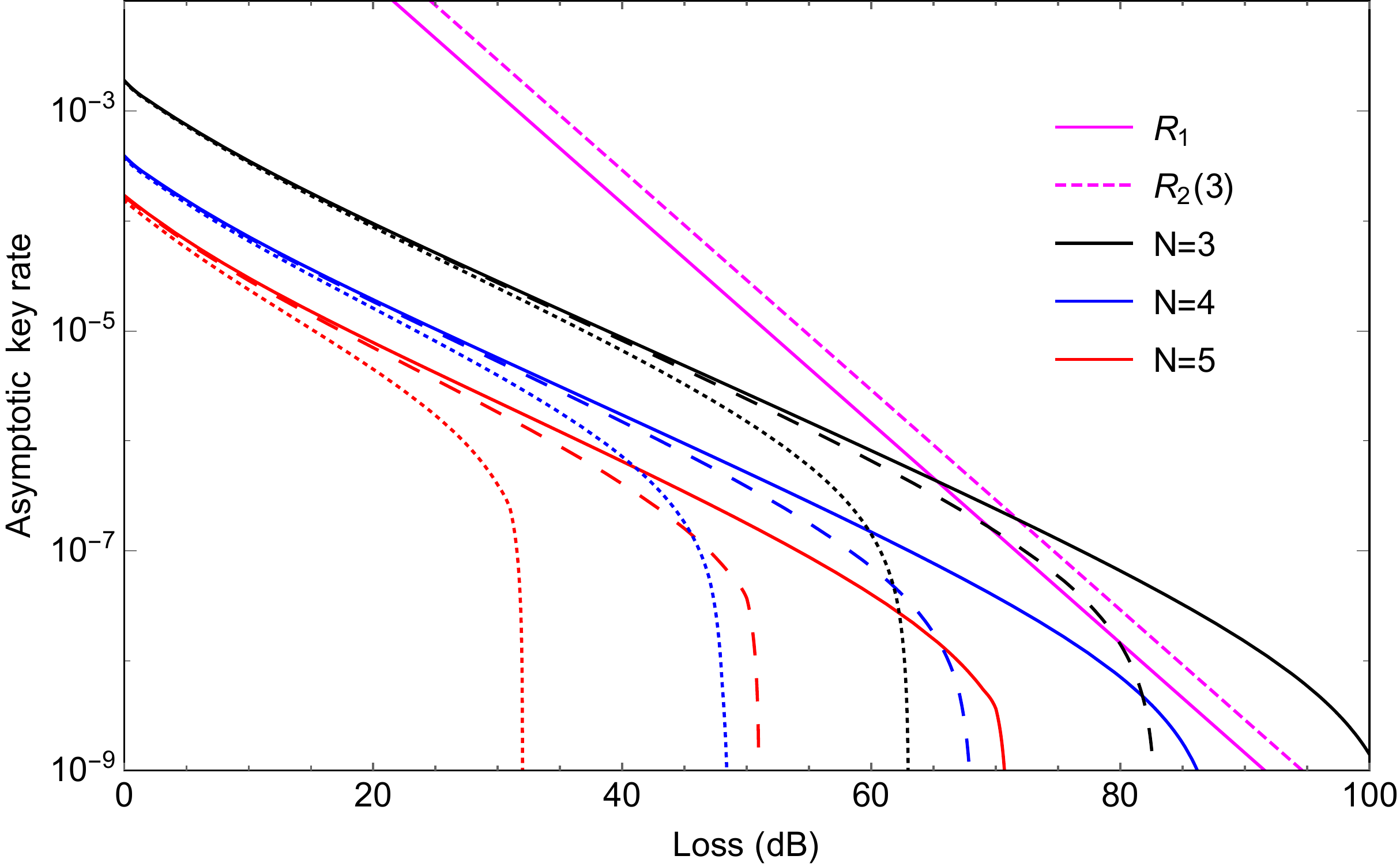}
    \textbf{Exact yields}\par\medskip
    \includegraphics[width=\columnwidth, keepaspectratio]{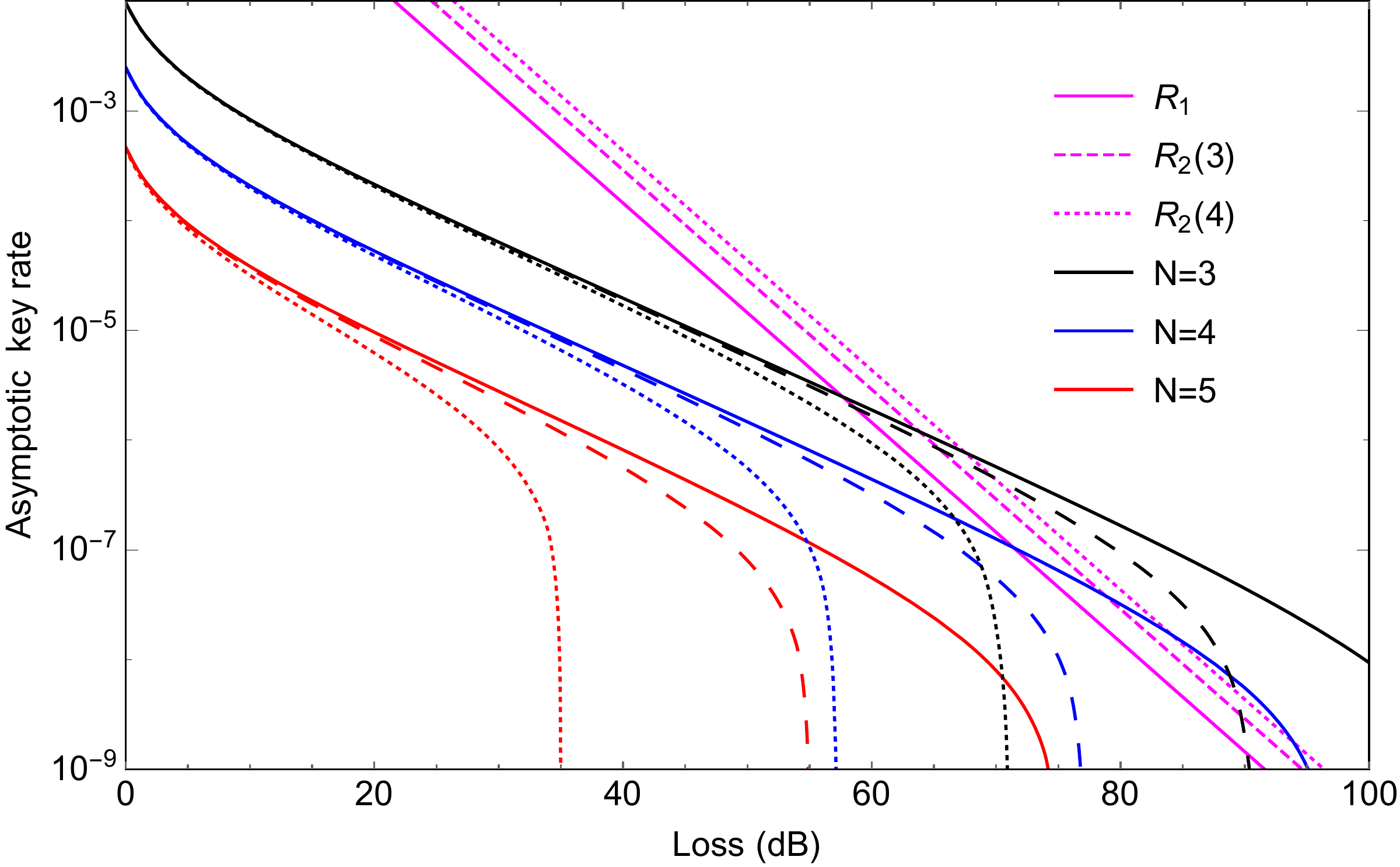}
\caption{Asymptotic conference key rate of our protocol (Eq.~\eqref{keyrate-simulations}) as a function of the loss (in dB) between any two parties, when each party uses two decoy intensities (top) and when the parties can perfectly estimate the yields (bottom). We plot the key rate for different dark count probabilities: $p_d=10^{-10}$ (solid lines), $p_d=10^{-9}$ (dashed lines) and $p_d=10^{-8}$ (dotted lines) and different number of parties $N$, while we fix the polarization and phase misalignment to $2\%$. We report the single-message multicast bound $R_1$ (Eq.~\eqref{R1}, solid magenta line) for the star network and the bounds $R_2(3)$ and $R_2(4)$ (Eq.~\eqref{R2}, dashed and dotted magenta line) for the fully connected network. For sufficiently high loss, our protocol with two decoys can overcome the multicast bounds for both configurations when $N=3$. A tighter estimation of the yields (e.g. by adding decoy intensities) would allow our protocol to overcome both multicast bounds for $N=3$ and $N=4$.}
\label{fig:rate}
\end{figure}

From the top plot in Fig.~\ref{fig:rate}, we observe that our protocol, already with two decoy intensities per party, is capable of overcoming both the single-message multicast bounds $R_1$ and $R_2$, for three parties and in the high loss regime. This is explained by the fact that our protocol relies on single-photon interference events, regardless of the number of parties, hence its key rate scales with the transmittance between one party and the relay: $r\sim \eta$. Conversely, the multicast bounds in \eqref{R1} and \eqref{R2} for a quantum network without a relay cannot scale better than $r\sim\eta^2$.

However, as the number of parties increases, the key rate of our protocol drops due to the unavoidable QBER inherited from W state correlations and cannot beat the multicast bounds. This can be mitigated by increasing the number of decoy intensities per party, as suggested by the bottom plot in Fig.~\ref{fig:rate} that represents the best-case scenario of infinite decoys. Indeed, we observe a significant improvement of the key rate, especially in the high-loss regime, allowing it to overcome the multicast bounds $R_1$ and $R_2$ for three and four parties. We also note that the multicast bounds used in our comparison are not proven to be tight and might be quite loose \cite{PLOBN1}, implying that our protocol could extend its advantage to more than four parties, when compared to tighter multicast bounds.

The improvement of the key rate in the high-loss regime occurs because adding decoy intensities to the multipartite decoy-state method allows for tighter yields' bounds when evaluating the phase error rate through \eqref{phase-error-rate-bound}. As a consequence, the optimal value of the signal intensity ($\alpha^2$) can increase without severely affecting the phase error rate bound, as shown in Appendix~\ref{section:numsim}. In turn, higher signal intensities increase the probability that exactly one detector clicks (up to the limit where multiple-photon contributions become dominant), thus increasing the key rate. The gain in the key rate is particularly visible in the high-loss regime, where the effect of dark counts on the detector clicks is comparable to the arrival of a signal.

In parallel, the key rate computed with the exact yields is higher than the the one computed with two decoys even in the low-loss regime. This is due to the fact that the latter is not optimized over the decoy intensities. In particular,  the value of $\beta_0=0.5$ is chosen such that it is close-to-optimal only for high losses, thus explaining the sub-optimal behavior of the key rate with two decoys at low losses.

\section{Conclusion} \label{sec:conclusion}

We designed a practical, measurement-device-independent, conference key agreement (CKA) protocol that delivers a shared conference key to an arbitrary number of parties. In the protocol, each party only has to transmit coherent pulses to an untrusted relay, which interferes the pulses in a network of balanced beam splitters and performs threshold measurements. Our protocol harnesses single-photon interference at the relay in order to establish a common key. This can be understood by realizing that the correlations post-selected by our protocol correspond to the correlations of a W state, which can indeed generate conference keys \cite{CKArev}.

We prove security of our protocol against collective attacks and derive an analytical expression for the asymptotic key rate, by combining the entropic uncertainty relation \cite{entropicUncert} with a novel multipartite decoy-state analysis. We emphasize that our protocol and its security proof are general and can account for scenarios with arbitrary asymmetric losses. Moreover, we provided extended numerical simulations with a realistic channel model that accounts for phase and polarization misalignment, photon loss and dark counts in the detectors. We showed that our protocol is capable, in certain regimes, to overcome the ultimate conference key rates achievable in a quantum network without a relay, by comparing it to single-message multicast bounds \cite{PLOBN1}. 

To the best of our knowledge, our protocol applies for the first time the decoy-state method in a setting with an arbitrary number of parties. In particular, we obtain analytical upper bounds on the yields of any combination of Fock states sent by the parties, which can be employed in any multipartite protocol that applies the decoy-state method.

At the same time, our protocol represents the first example of a CKA protocol that can beat single-message multicast bounds in quantum networks \cite{PLOBN1}. This heralds an important step for long-distance CKA, similarly to how the introduction of TF-QKD \cite{TF1} allowed QKD to reach much longer distances by beating the repeaterless bound \cite{PLOB1}. Indeed, our results show that adding an untrusted relay, with a relatively simple optical setup, in a quantum network, can increase the rate at which the network users establish conference keys over long distances. In particular, the scaling improvement in the key rate (the key rate scales with $\eta$ instead of $\eta^2$) matches the one that could be achieved by future quantum repeaters.

In addition, our protocol is readily implementable with current technology as it does not add further experimental requirements compared to state-of-the-art experiments on TF-QKD protocols \cite{TFexp1,TFexp2,TFexp3,TFexp4,TFexp5,TFexp6,TFexp7,TFexp8,TFexp9,TFexp10,TFexp11}. As a matter of fact, for two parties our CKA scheme reduces to the bipartite TF-QKD protocol in \cite{TF3}, which has already been implemented in several experiments \cite{TFexp1,TFexp3,TFexp6,TFexp9}. In such experiments, phase-tracking and phase-locking techniques are required in order to ensure that the parties' signals remain in phase. This, however, might become more challenging when more parties are involved. A solution could be found by adapting the idea of time multiplexing as shown in \cite{time-multiplexed-TFQKD}. We remark that the implementation of our CKA protocol would represent the first instance of a multipartite conference key agreement which does rely on GHZ-type states.

The work presented in this manuscript can be further developed along different lines of research. From a security perspective, a complete finite-key analysis along the lines of the proof given in \cite{finitekeyTFQKDGuillermo} for bipartite TF-QKD is required, in order to prove the protocol secure in the presence of statistical fluctuations and coherent attacks.

Moreover, our decoy-state analysis assumes a highly symmetrical configuration where every party uses the same set of decoy intensities, which is optimal in the scenario of symmetric channel losses analyzed in this work. However, real-life scenarios would likely display asymmetric channel losses, which require a more general decoy analysis with independent decoy intensities for each party, as shown for the bipartite case in \cite{Ybound2}. On a similar note, our decoy analysis employs only two decoy settings per party, which is  not sufficient to achieve close-to-optimal key rates (i.e. key rates obtained with infinite decoy settings), as shown by Fig. \ref{fig:rate}. Hence, it is likely that using more than two decoy settings to derive numerical or analytical bounds on the yields appearing in the phase error rate could improve the resulting key rate.

Finally, the efficiency of the protocol at lower losses could be improved by retaining those rounds where more than one detector clicks and using them to extract extra conference key bits. The security proof presented in this work could be naturally extended to make use of such rounds.

We believe that our work constitutes a significant step towards increasing the practicality of multipartite cryptographic protocols.

\begin{acknowledgments}

\noindent We thank \'{A}lvaro Navarrete for insightful discussions and Hermann Kampermann and Dagmar Bru{\ss} for their useful comments. This work was funded by the Deutsche Forschungsgemeinschaft (DFG, German Research Foundation) under Germany's Excellence Strategy - Cluster of Excellence Matter and Light for Quantum Computing (ML4Q) EXC 2004/1 - 390534769. F.G. acknowledges support by a DFG Individual Research Grant. 
\end{acknowledgments}

\bibliography{Bibliography}
\bibliographystyle{apsrev4-2}

\setcounter{equation}{0}
\setcounter{figure}{0}
\setcounter{table}{0}

\makeatletter
\renewcommand{\theequation}{\thesection.\arabic{equation}}
\renewcommand{\thefigure}{\thesection.\arabic{figure}}

\appendix

\widetext

\section{The Balanced Beam Splitter network} \label{sec:BBSnet}

We provide a complete description of the Balanced Beam Splitter (BBS) network that describes the honest implementation of the untrusted relay. The network is composed of $s$ layers, labelled by  $r=0,\dots,s-1$, and each layer receives as input $M=2^s$ optical modes $\hat{a}^{(r)}_i$, for $0 \leq i \leq M-1$. Note that for $r=0$ the modes correspond to the modes arriving at the relay from the parties.

In a generic layer $r$, the optical mode $\hat{a}^{(r)}_i$ is mixed with the mode $\hat{a}^{(r)}_{i+2^r}$ in a BBS, for all modes $\hat{a}_i \in F_r$. The set $F_r$ for layer $r$ contains the modes:
\begin{align}
    F_r:= \bigcup_{k=0}^{2^{s-r-1}-1}\{\hat{a}_{k2^{r+1}},\hat{a}_{k2^{r+1}+1}\dots,\hat{a}_{k2^{r+1}+2^r-1}\} \label{F_r}.
\end{align}
For example, $F_0$ contains the even modes and $F_1$ contains modes $0,1,3,4$, and so on. This pattern repeats until the last layer, that contains the first half of the modes.
Each layer contains $M/2$ beam splitters. Hence, the total number of beam splitters in the BBS network, in terms of the number of inputs $M$, is: 
\begin{equation}
    n_{BS}=\frac{M}{2}\log_2{M}.
\end{equation}
We note that the BBS network, due to its structure, must be prepared for a number of inputs $M$ equal to a power of 2 but can be used by any number of parties $N \leq M$. We also remark that, for $s=1$, the BBS network reduces to a single beam splitter and coincides with the setup used in the TF-QKD protocol of \cite{TF3}.  

We are interested in the evolution of the creation operators in layer $r$ through a BBS, which is given by 
\begin{equation}\label{modes}
\begin{cases}
(\hat{a}_i^{(r)})^\dag \rightarrow \frac{1}{\sqrt{2}}[(\hat{a}_i^{(r+1)})^\dag+(\hat{a}_{i+2^r}^{(r+1)})^\dag] \;\;\;\;\; \forall i \in F_r\\
(\hat{a}_j^{(r)})^\dag \rightarrow  \frac{1}{\sqrt{2}}[(\hat{a}_{j-2^r}^{(r+1)})^\dag-(\hat{a}_{j}^{(r+1)})^\dag] \;\;\;\;\; \forall j \in \overline{F}_r
\end{cases}
\end{equation}
where $\overline{F}_r$ indicates the complement of $F_r$.

By going through all layers until $r=s-1$, we are able to transform each input mode in a balanced combination of all the output modes, whose coefficients are at most a minus sign. The global mode transformation, which includes the transformation of each layer, is given in the following Theorem.

\begin{theorem}
Given $M=2^s$ input modes in the BBS network described above where, in each layer $r$, the modes transform according to Eq.~\eqref{modes}. Then, the global evolution of the modes over all the $s$ layers is given by:
\begin{equation}\label{transf_app}
    \hat{a}_i^{\dag}\rightarrow \frac{1}{(\sqrt{2})^s}\sum_{k=0}^{2^s-1}f_{k,i}^{(s)} (\hat{a}^{(s)}_k)^\dag \;\;\;\;\; \forall i=0,\dots ,2^s-1,
\end{equation}
where the function $f_{k,i}^{(s)}$ is given by
\begin{equation}
    f_{k,i}^{(s)}=\prod_{l=0}^{s-1}(-1)^{\left \lfloor k/{2^l}\right \rfloor\left \lfloor i/2^l\right \rfloor}, \label{fki}
\end{equation}
and $\lfloor\cdot \rfloor$ is the floor function. Moreover, $f_{k,i}^{(s)}$ can be recast as follows: 
\begin{equation}\label{func}
        f_{k,i}^{(s)}=(-1)^{\vec{k}\cdot \vec{i}}\,\,,
    \end{equation}
    where $\vec{k}$ and $\vec{i}$ are the binary vectors of length $s$ representing the integers $k$ and $i$ in binary representation.
\end{theorem}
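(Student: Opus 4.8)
The plan is to prove \eqref{transf_app} by induction on the number of layers traversed, after first reducing everything to the bookkeeping of binary indices. I would begin by dispatching the equivalence between \eqref{fki} and \eqref{func}, which lets me work with whichever form is convenient. Since the lowest bit of $\lfloor k/2^l\rfloor$ is exactly the $l$-th binary digit $k_l$, one has $\lfloor k/2^l\rfloor\equiv k_l\pmod 2$, and likewise for $i$; as a product of integers is odd iff both factors are, $\lfloor k/2^l\rfloor\lfloor i/2^l\rfloor\equiv k_l i_l\pmod 2$. Hence every factor obeys $(-1)^{\lfloor k/2^l\rfloor\lfloor i/2^l\rfloor}=(-1)^{k_l i_l}$, and the product over $l$ turns \eqref{fki} into $(-1)^{\sum_l k_l i_l}=(-1)^{\vec k\cdot\vec i}$, which is \eqref{func}. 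It therefore suffices to establish \eqref{transf_app} with the coefficient in the form \eqref{func}.

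The second preliminary is to reinterpret the single-layer rule \eqref{modes} at the level of indices. From the definition \eqref{F_r} one checks that $F_r$ is precisely the set of modes whose index carries a $0$ in bit position $r$: the block $k2^{r+1}+t$ with $0\le t\le 2^r-1$ fixes bit $r$ to $0$, while its partner $i+2^r$ has bit $r$ equal to $1$. Reading off \eqref{modes} then shows that layer $r$ leaves every bit except $r$ untouched and acts on bit $r$ as a $2\times2$ Hadamard: the coefficient of $(\hat a^{(r+1)}_{k'})^\dag$ inside $(\hat a^{(r)}_{k})^\dag$ equals $\tfrac{1}{\sqrt2}(-1)^{k_r k'_r}$ when $k$ and $k'$ agree in all bits other than bit $r$, and vanishes otherwise.

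With these facts I would run the induction on the layer index $r$, carrying the invariant that after layers $0,\dots,r-1$ the operator expands as $\hat a_i^\dag=(\sqrt2)^{-r}\sum_{k}(-1)^{\sum_{l<r}k_l i_l}(\hat a^{(r)}_k)^\dag$, where the sum ranges only over the $k$ agreeing with $i$ in every bit $l\ge r$. The base case $r=0$ is the identity. For the inductive step I substitute the layer-$r$ rule into each surviving term; for a fixed target $k'$ exactly one source $k$ contributes, namely the one agreeing with $k'$ off bit $r$ and carrying $k_r=i_r$. The support condition then tightens from $l\ge r$ to $l\ge r+1$, while the accumulated sign acquires the factor $(-1)^{i_r k'_r}$, extending the exponent sum to $l<r+1$. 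Evaluating the invariant at $r=s$ renders the support condition vacuous, so every $k$ survives with coefficient $(\sqrt2)^{-s}(-1)^{\vec k\cdot\vec i}$, which is \eqref{transf_app}; combined with the first preliminary this yields \eqref{fki} as well.

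I expect the only genuine obstacle to be the index bookkeeping in the last two steps: verifying carefully from \eqref{F_r} that $F_r$ selects the modes with bit $r$ equal to $0$, and then tracking both the support set and the sign through the substitution so that precisely one source maps to each target. Should the direct tracking become unwieldy, an alternative is a recursive argument exploiting self-similarity: the first $s-1$ layers act identically and independently on the two halves of the modes (top bit $0$ versus top bit $1$), each half being a faithful copy of the $2^{s-1}$-mode network, so the induction hypothesis applies within each half and the final layer $s-1$ glues them together with a Hadamard on the top bit, reproducing the Sylvester recursion $H_{2^s}=\bigl(\begin{smallmatrix}1&1\\1&-1\end{smallmatrix}\bigr)\otimes H_{2^{s-1}}$ of the Walsh--Hadamard matrix.
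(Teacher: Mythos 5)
Your proof is correct, but it takes a genuinely different route from the paper's. The paper proves \eqref{transf_app} by induction on $s$, i.e.\ on the size of the network: it treats the $2^{s+1}$-mode network as two parallel copies of the $2^s$-mode network glued by a final pairing layer, and verifies the coefficient identities directly in the floor-function form \eqref{fki}, which requires a fair amount of algebra (e.g.\ checking $f^{(s+1)}_{k,i}=f^{(s)}_{k,i}$ for $i<2^s$ and absorbing factors $(-1)^{2^{s-l}\lfloor i/2^l\rfloor}=1$). Your primary argument instead fixes $s$ and inducts on the layer index $r$, carrying the invariant that after $r$ layers the transform is a Hadamard on bits $0,\dots,r-1$ tensored with the identity on the remaining bits; the two preliminaries (the parity identity $\lfloor k/2^l\rfloor\equiv k_l \pmod 2$, and the identification of $F_r$ in \eqref{F_r} as the modes with bit $r$ equal to zero) reduce each application of \eqref{modes} to a single-bit Hadamard, and the bookkeeping in the inductive step — exactly one source per target, sign $(-1)^{i_r k'_r}$ — checks out. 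What your route buys is structural transparency: it exhibits the network as computing the Walsh--Hadamard transform bit by bit, works entirely with the binary-vector form \eqref{func} (and, unlike the paper, explicitly proves the equivalence between \eqref{fki} and \eqref{func}, which the paper asserts without derivation), and avoids comparing networks of different sizes. What the paper's route buys is that it mirrors the recursive construction of the device itself; note that your fallback alternative — the Sylvester recursion on the two halves distinguished by the top bit — is essentially identical to the paper's proof, so you have in effect described both arguments.
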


\begin{proof}
The Theorem is proved by induction on $s$. Hence, the first step of the proof is to prove the result for $s=1$, i.e. just two inputs. We thus have two optical modes $\hat{a}_0^\dag$ and $\hat{a}_1^\dag$ mixed in a BBS. The transformation of the modes is given in Eq. \eqref{modes}, for $r=0$, i.e.
\begin{equation}\label{modes0}
\begin{cases}
\hat{a}^\dag_0 \rightarrow \frac{1}{\sqrt{2}}[(\hat{a}^{(1)}_0)^\dag+(\hat{a}^{(1)}_1)^\dag] \\
\hat{a}^\dag_1 \rightarrow \frac{1}{\sqrt{2}}[(\hat{a}^{(1)}_0)^\dag-(\hat{a}^{(1)}_1)^\dag]
\end{cases}
\end{equation}
The formula provided in the Theorem's statement, \eqref{transf_app}, for $s=1$ reads 
\begin{equation}
 \hat{a}_i \rightarrow \frac{1}{\sqrt{2}} \sum_{k=0}^1 (-1)^{ki} (\hat{a}_k^{(1)})^\dag
\end{equation}
which is equivalent to the transformation of the modes of Eq. \eqref{modes0}. The Theorem is thus proved for $s=1$.

Now, in the inductive step we assume that the Theorem's statement in \eqref{transf_app} is correct for generic $s$ and show that it induces the same transformation for $s+1$, i.e. that the Theorem holds for $s+1$.

We start by adding to the modes labelled by $i$ another set of $2^s$ modes, labelled by $j=2^s,\dots,2^{s+1}-1$, that undergoes the same kind of transformations, i.e.
\begin{equation}\label{transf1}
    \hat{a}_j^{\dag}\rightarrow \frac{1}{(\sqrt{2})^s}\sum_{k=2^s}^{2^{s+1}-1}f_{k,j}^{(s)} (\hat{a}^{(s)}_k)^\dag \;\;\;\;\; \forall j=2^s,\dots ,2^{s+1}-1.
\end{equation}
We now follow the prescription in \eqref{modes} and combine the modes in the $s+1$ layer of the BBS network. This means that we combine the mode  $(\hat{a}^{(s)}_i)^\dag$ with the corresponding mode $(\hat{a}^{(s)}_{i+2^s})^\dag$, and obtain:
\begin{equation}\label{modes1}
\begin{cases}
(\hat{a}_i^{(s)})^\dag \rightarrow \frac{1}{\sqrt{2}}[(\hat{a}_i^{(s+1)})^\dag+(\hat{a}_{i+2^s}^{(s+1)})^\dag] \;\;\;\;\; \forall i=0,\dots, 2^s-1\\
(\hat{a}_j^{(s)})^\dag \rightarrow  \frac{1}{\sqrt{2}}[(\hat{a}_{j-2^s}^{(r+1)})^\dag-(\hat{a}_{j}^{(s+1)})^\dag] \;\;\;\;\; \forall j=2^s,\dots,2^{s+1}-1
\end{cases}
\end{equation}
We use the assumption in the inductive step. That is, we use \eqref{transf_app} and \eqref{transf1} to describe the transformations of the modes in the first $s$ layers. We separately address the transformations on the first $2^s$ modes $\hat{a}_i^\dag$, described by \eqref{transf_app}, and the transformations on the other $2^s$ modes $\hat{a}_j^\dag$, described by \eqref{transf1}.

\begin{enumerate}
    \item For the modes $\hat{a}_i^\dag$ with $i=0,\dots, 2^s-1$, we employ the first equation in \eqref{modes1} together with \eqref{transf_app}. We obtain the following transformation of the modes after $s+1$ layers:
    \begin{eqnarray}\label{sum1}
    \hat{a}_i^{\dag} & \rightarrow & \frac{1}{(\sqrt{2})^s}\sum_{k=0}^{2^s-1}f_{k,i}^{(s)} \frac{1}{\sqrt{2}}[(\hat{a}_k^{(s+1)})^\dag+(\hat{a}_{k+2^s}^{(s+1)})^\dag] \nonumber \\
    & = & \frac{1}{(\sqrt{2})^{s+1}}\left(\sum_{k=0}^{2^s-1}f_{k,i}^{(s)}(\hat{a}_k^{(s+1)})^\dag + \sum_{k=0}^{2^s-1}f_{k,i}^{(s)}(\hat{a}_{k+2^s}^{(s+1)})^\dag \right)
    \end{eqnarray}
    Now let us consider the coefficient $f_{k,i}^{(s+1)}$. By definition \eqref{fki}, we have: 
    \begin{eqnarray}
    f_{k,i}^{(s+1)} & = & \prod_{l=0}^{s}(-1)^{\left \lfloor k/2^l\right \rfloor\left \lfloor i/2^l\right \rfloor}=(-1)^{\left \lfloor k/2^s\right \rfloor \left \lfloor i/2^s\right \rfloor}\prod_{l=0}^{s-1}(-1)^{\left \lfloor k/2^l\right \rfloor\left \lfloor i/2^l\right \rfloor} \nonumber \\
    & = & (-1)^{\left \lfloor k/2^s\right \rfloor \left \lfloor i/2^s\right \rfloor} f_{k,i}^{(s)}
    \end{eqnarray}
    However, since $i=0,\dots, 2^s-1$ we have that $\left \lfloor i/2^s \right \rfloor=0 \;\; \forall i$, which in turn implies 
    \begin{equation}
    f_{k,i}^{(s+1)}=f_{k,i}^{(s)} \;\;\;\;\; \forall \,i=0,\dots,2^s -1, \,\,\forall \,k. \label{equality}
    \end{equation}
    We use this result in Eq. \eqref{sum1} combined with a rescaling of the second sum with $k \rightarrow k-2^s$ to write
    \begin{equation}
          \hat{a}_i^{\dag} \rightarrow \frac{1}{(\sqrt{2})^{s+1}}\left(\sum_{k=0}^{2^s-1}f_{k,i}^{(s+1)}(\hat{a}_k^{(s+1)})^\dag + \sum_{k=2^s}^{2^{s+1}-1}f_{k-2^s,i}^{(s)}(\hat{a}_{k}^{(s+1)})^\dag \right) \label{sum2},
    \end{equation}
    where
    \begin{eqnarray}
    f_{k - 2^s,i}^{(s)}= \prod_{l=0}^{s-1}(-1)^{\left \lfloor (k - 2^s)/2^l\right \rfloor\left \lfloor i/2^l\right \rfloor} = \prod_{l=0}^{s-1}(-1)^{\left \lfloor k/2^l - 2^{s-l}\right \rfloor\left \lfloor i/2^l\right \rfloor}.
    \end{eqnarray}
    Since $k \geq 2^s$ we have that $k/2^l\geq 2^{s-l}$. Moreover, $s>l$ for every $l$, which means that $2^{s-l}$ is a positive, even integer. We thus can write
       \begin{eqnarray}
    f_{k - 2^s,i}^{(s)} & = & \prod_{l=0}^{s-1}(-1)^{\left( \left \lfloor k/2^l \right \rfloor - 2^{s-l}\right) \left \lfloor i/2^l\right \rfloor} = \prod_{l=0}^{s-1}(-1)^{ \left \lfloor k/2^l \right \rfloor \left \lfloor i/2^l\right \rfloor}(-1)^{ -2^{s-l} \left \lfloor i/2^l\right \rfloor} \nonumber \\
    & = & \prod_{l=0}^{s-1}(-1)^{ \left \lfloor k/2^l \right \rfloor \left \lfloor i/2^l\right \rfloor} = f_{k,i}^{(s)} = f_{k,i}^{(s+1)}
    \end{eqnarray}
    where $(-1)^{ -2^{s-l} \left \lfloor i/2^l\right \rfloor}=1$ $\forall i$ because $2^{s-l}$ is even for all $l$ and where we used \eqref{equality} in the last equality. With the last expression, we can simplify \eqref{sum2} as follows:
     \begin{eqnarray}
          \hat{a}_i^{\dag} & \rightarrow & \frac{1}{(\sqrt{2})^{s+1}}\left(\sum_{k=0}^{2^s-1}f_{k,i}^{(s+1)}(\hat{a}_k^{(s+1)})^\dag + \sum_{k=2^s}^{2^{s+1}-1}f_{k,i}^{(s+1)}(\hat{a}_{k}^{(s+1)})^\dag \right) \nonumber \\
          & = & \frac{1}{(\sqrt{2})^{s+1}} \sum_{k=0}^{2^{s+1}-1}f_{k,i}^{(s+1)}(\hat{a}_k^{(s+1)})^\dag \label{sumfinal1}
    \end{eqnarray}
    which concludes the proof for $i=0,\dots, 2^s-1$.
    \item For the modes $\hat{a}_j^\dag$, with $j=2^s,\dots,2^{s+1}-1$, we combine the second equation in  \eqref{modes1} with the assumption \eqref{transf1} and obtain
    \begin{equation}
        \hat{a}_j^\dag \rightarrow \frac{1}{(\sqrt{2})^{s+1}}\left(\sum_{k=2^s}^{2^{s+1}-1}f_{k,j}^{(s)}(\hat{a}_{k-2^s}^{(s+1)})^\dag - \sum_{k=2^s}^{2^{s+1}-1}f_{k,j}^{(s)}(\hat{a}_k^{(s+1)})^\dag \right).
    \end{equation}
    Once again, we can rescale the first sum with $k \rightarrow k+2^s$ in the last expression and obtain 
    \begin{equation}
        \hat{a}_j^\dag \rightarrow \frac{1}{(\sqrt{2})^{s+1}}\left(\sum_{k=0}^{2^{s}-1}f_{k+2^s,j}^{(s)}(\hat{a}_{k}^{(s+1)})^\dag - \sum_{k=2^s}^{2^{s+1}-1}f_{k,j}^{(s)}(\hat{a}_k^{(s+1)})^\dag \right). \label{sum3}
    \end{equation}
    Since $2^{s-l}$ is a positive, even integer, we can simplify the coefficient in the first sum as follows:
    \begin{eqnarray}
    f_{k + 2^s,j}^{(s)} & = & \prod_{l=0}^{s-1}(-1)^{\left( \left \lfloor k/2^l \right \rfloor + 2^{s-l}\right) \left \lfloor j/2^l\right \rfloor} = \prod_{l=0}^{s-1}(-1)^{ \left \lfloor k/2^l \right \rfloor \left \lfloor j/2^l\right \rfloor}(-1)^{ 2^{s-l} \left \lfloor j/2^l\right \rfloor} \nonumber \\
    & = & \prod_{l=0}^{s-1}(-1)^{ \left \lfloor k/2^l \right \rfloor \left \lfloor j/2^l\right \rfloor} = f_{k,j}^{(s)}.
    \end{eqnarray}
    Moreover, since $k=0,\dots,2^s-1$, one has that $\left \lfloor k/2^s \right \rfloor=0$ and hence that: 
    \begin{eqnarray}
    f_{k,j}^{(s+1)} & = & \prod_{l=0}^{s}(-1)^{\left \lfloor k/2^l\right \rfloor\left \lfloor j/2^l\right \rfloor} = (-1)^{\left \lfloor k/2^s\right \rfloor \left \lfloor j/2^s\right \rfloor}\prod_{l=0}^{s-1}(-1)^{\left \lfloor k/2^l\right \rfloor\left \lfloor j/2^l\right \rfloor} \nonumber \\
    & = & (-1)^{\left \lfloor k/2^s\right \rfloor \left \lfloor i/2^s\right \rfloor} f_{k,i}^{(s)} = f_{k,i}^{(s)},
    \end{eqnarray}
    which means that we can replace the coefficient  $f_{k+2^s,j}^{(s)}$ in the first sum of \eqref{sum3}  with $f_{k,j}^{(s+1)}$. Regarding the second sum in \eqref{sum3}, we can write the coefficient as
    \begin{equation}
        (-1)f_{k,j}^{(s)}=(-1)^{g(k,j)}f_{k,j}^{(s)}
    \end{equation}
    where $ g(k,j)$ is a function that is odd for $j,k=2^s,\dots,2^{s+1}-1$. For instance, we can choose the function to be the following:
    \begin{equation}
        g(k,j)=\left \lfloor k/2^s\right \rfloor \left \lfloor j/2^s\right \rfloor.
    \end{equation}
     Then, the coefficient of the second sum in \eqref{sum3} becomes
     \begin{equation}
        (-1) f_{k,j}^{(s)}= (-1)^{\left \lfloor k/2^s\right \rfloor \left \lfloor j/2^s\right \rfloor}f_{k,j}^{(s)}\equiv  f_{k,j}^{(s+1)}.
    \end{equation}
    With the above expressions, we can recast \eqref{sum3} as follows and conclude the proof for $j=2^s,\dots,2^{s+1}-1$:
    \begin{eqnarray}
        \hat{a}_j^\dag & \rightarrow & \frac{1}{(\sqrt{2})^{s+1}}\left(\sum_{k=0}^{2^{s}-1}f_{k,j}^{(s+1)}(\hat{a}_{k}^{(s+1)})^\dag + \sum_{k=2^s}^{2^{s+1}-1}f_{k,j}^{(s+1)}(\hat{a}_k^{(s+1)})^\dag \right) \nonumber \\
        & = & \frac{1}{(\sqrt{2})^{s+1}} \sum_{k=0}^{2^{s+1}-1}f_{k,j}^{(s+1)}(\hat{a}_k^{(s+1)})^\dag \label{sumfinal2}.
    \end{eqnarray}
\end{enumerate}
The combination of the two results in \eqref{sumfinal1} and \eqref{sumfinal2} imply that the  global transformation of the modes, for $M=2^{s+1}$ inputs, is given by \eqref{transf_app} where $s$ is replaced by $s+1$. This proves the Theorem for $s+1$ and concludes the proof. 
\end{proof}

\section{W state correlations} \label{sec:idealprotocol}

In this Appendix we present the logical steps that brought us to design the protocol presented in Sec.~\ref{sec:theprotocol} and show the connection between the correlations generated by our protocol and the correlations of the W state \cite{Wstate}.

We start by describing an Ideal protocol, i.e. a protocol that is less practical than the one presented in the main text but has the merit of elucidating the core ideas that lead to the CKA protocol of Sec.~\ref{sec:theprotocol}. The protocol is run by $N$ parties, which we call $A_0,\dots,A_{N-1}$, and consists of the following steps.\\

\begin{protocol} \label{prot:ideal}
		\caption{Ideal protocol}
\begin{enumerate}
\item Quantum part: repeat what follows for a sufficient amount of iterations.
\begin{enumerate}[label*=\arabic*.]
    \item Every party holds an optical mode $a_i$ and a qubit $Q_i$ and prepares the following entangled state:
    \begin{equation}\label{startstate}
    |\phi_i \rangle = \sqrt{q_i}|0\rangle_{Q_i}|0\rangle_{a_i}+\sqrt{1-q_i}|1\rangle_{Q_i}|1\rangle_{a_i},
\end{equation}
    where $|0\rangle_{Q_i}$ and $|1\rangle_{Q_i}$ are two orthogonal states of the qubit, $|0\rangle_{a_i}$ and $|1\rangle_{a_i}$ are the vacuum and one-photon state of the optical mode, respectively, and $0< q_i<1$. 
    \item Every party sends their optical pulse through a noisy and lossy channel to an untrusted relay.
    \item In the untrusted relay, the optical signals interfere in a Balanced Beam Splitter (BBS) network of $M=2^s$ inputs and $M$ outputs, for some natural number $s$ with $M \geq N$. The BBS network is described in Appendix~\ref{sec:BBSnet}. The network transforms the input modes in a balanced combination of the output modes, i.e.
    \begin{equation}\label{transf-app}
    \hat{a}_i^{\dag}\rightarrow \frac{1}{\sqrt{M}}\sum_{j=0}^{M-1} (-1)^{\vec{j}\cdot \vec{i}}\,  \hat{d}_j^\dag \quad,
\end{equation}
   where $\hat{a}_i^{\dag}$ and $\hat{d}_j^\dag$ are the creation operators of the input and output modes, respectively, and $\vec{j}$ and $\vec{i}$ the binary representations of the integers $j$ and $i$ and $\vec{j}\cdot \vec{i}$ is their scalar product.
    
    \item The untrusted relay measures each output mode $d_j$ with a threshold detector $D_j$, for $j=0,\dots,M-1$. The relay announces the detection pattern $\vec{k}\in\{0,1\}^M$ for each detector, where $k_j=1$ if detector $D_j$ clicked and $k_j=0$ otherwise. The round gets discarded unless only one detector clicked, i.e. if $|\vec{k}| = 1$, where $|\vec{x}|$ is the Hamming weight of vector $\vec{x}$.
    
    \item Each party $A_i$ measures their qubit $Q_i$. If the round is labelled as a parameter estimation (PE) round, each party measures in the Z-basis and obtains an outcome $Z_i=\pm 1$. If the round is a key generation (KG) round, each party measures in the X-basis and obtains outcome $X_i=\pm 1$.
\end{enumerate}
    \item Parameter estimation: the parties partition their outcomes in $M$ sets, where each set corresponds to the event $\Omega_j$ where only detector $D_j$ clicks. For each partition, the parties reveal a fraction of their $X$-basis outcomes in order to compute the QBER, with respect to reference party $A_0$. The QBER is defined as:
        \begin{equation}\label{QBER-app}
            Q^j_{X_0,X_i}=\Pr( X_0 \neq (-1)^{\vec{j}\cdot \vec{i}} X_{i} \;|\; \Omega_j, \textrm{KG}).
        \end{equation}
    Similarly, for each partition of outcomes the parties reveal their $Z$-basis outcomes and evaluate the phase error rate, defined as follows: 
        \begin{equation}
            Q^j_Z=\Pr({\textstyle\prod}_{i=0}^{N-1} Z_i=1 \;|\; \Omega_j, \textrm{KG}). \label{phase_error_rate-app}
        \end{equation}
    \item Classical post-processing: The parties extract a secret conference key from the remaining undisclosed $X$-basis outcomes. To do so, for each partition labelled by $\Omega_j$, party $A_i$ flips their $X$-basis outcomes when $(-1)^{\vec{j}\cdot\vec{i}}=-1$. The parties then perform error correction and privacy amplification.
\end{enumerate}
\end{protocol}\vspace{0.5cm}

We remark that the probabilities defining the QBER \eqref{QBER-app} and the phase error rate \eqref{phase_error_rate-app} are conditioned on the event that only detector $D_j$ clicked and the round was chosen to be a KG round. While the QBER can be directly computed from the outcomes collected in KG rounds, the phase error rate refers to the hypothetical scenario where the parties measured in the Z-basis in a KG round. However, since the only difference between KG and PE rounds is the local qubit measurement, the choice of the type of round can be delayed until the qubit measurement is performed. Hence, the phase error rate, as defined in \eqref{phase_error_rate-app}, effectively coincides with the analogous quantity observed from the PE data: $Q^j_Z \equiv \Pr({\textstyle\prod}_{i=0}^{N-1} Z_i=1 \;|\; \Omega_j, \textrm{PE})$. As we discuss below, this fact does not hold in our CKA protocol (Sec.~\ref{sec:theprotocol}), where the phase error rate \eqref{phase_error_rate} is indirectly bounded with the PE statistics thanks to a multiparty decoy-state method.

The Ideal protocol is designed to exploit the correlations of a particular class of multipartite, W-type states, which are post-selected due to single-photon interference. As a matter of fact, a noisy version of such states is recovered as the conditional state of the qubit systems post-selected on the event that only detector $D_j$ clicks. In order to see this more clearly, one can derive such state under the ideal conditions of no losses in the channels and $q_i=q \to 1$ for every $i$ --- indeed, the optimal values of $q$ are close to one \cite{Grasselli2019}, hence  we approximate the state to first order in $(1-q)$. Under these simplifications, the state of the qubits $Q_0 \dots Q_{N-1}$ shared by the $N$ parties, once post-selected on the click of detector $D_j$, reads:
\begin{equation}\label{postselect_state}
    |W_j\rangle_{{Q_0,\dots} Q_{N-1}} :=\frac{1}{\sqrt{N}}\sum_{i=0}^{N-1} (-1)^{\vec{j}\cdot \vec{i}}\ket{\vec{b}_i}_{{Q_0,\dots} Q_{N-1}}\,,
\end{equation}
where the vector $\vec{b}_i$ is defined as the $N$-bit vector of all zeroes except for the $i$-th element that is one.

The state in \eqref{postselect_state} is a W-type state, where each term in the sum presents a real phase determined by the detector that clicked. The state  is post-selected from the events where only one photon was effectively sent by any of the parties with equal probability. Indeed, under the above approximations, the probability that the W-type state in \eqref{postselect_state} is post-selected is  $q^{N-1}(1-q)N/M$.

In this regard, the Ideal protocol resembles the CKA protocol of \cite{Grasselli2019} as it exploits the multipartite correlations of a W state to establish a shared conference key. As a matter of fact, we note that in the classical post-processing the parties flip their $X$-basis outcomes according to $(-1)^{\vec{j}\cdot \vec{i}}$, where $\vec{i}$ depends on the party and $\vec{j}$ on the detector that clicked. This can be equivalently seen as party $A_i$ applying a $Z$ gate on their qubit before the $X$-basis measurement, if $\vec{j}\cdot \vec{i}$ is odd. In other words, party $A_i$ applies the gate $Z^{\vec{j}\cdot \vec{i}}$ (note that $Z^2=\id$). Since such gate does not change the $Z$-basis outcomes in the PE rounds, we can assume, without loss of generality, that party $A_i$ applies the gate $Z^{\vec{j}\cdot \vec{i}}$ before measuring their qubit in any basis. If we now apply the gates in the post-selected state of the qubits \eqref{postselect_state}, we obtain:
\begin{align}\label{postselect_state2}
    |W_j\rangle_{{Q_0,\dots} Q_{N-1}} &= \frac{1}{\sqrt{N}}\sum_{i=0}^{N-1} (-1)^{\vec{j}\cdot \vec{i}} \bigotimes_{k=0}^{N-1} Z^{\vec{j}\cdot \vec{k}} \ket{\vec{b}_i}_{{Q_0,\dots} Q_{N-1}}  \nonumber\\
    &= \frac{1}{\sqrt{N}}\sum_{i=0}^{N-1} (-1)^{\vec{j}\cdot \vec{i}} (-1)^{\vec{j}\cdot \vec{i}}  \ket{\vec{b}_i}_{{Q_0,\dots} Q_{N-1}} \nonumber\\
    &= \frac{1}{\sqrt{N}}\sum_{i=0}^{N-1} \ket{\vec{b}_i}_{{Q_0,\dots} Q_{N-1}}\,,
\end{align}
where we used the fact that the operator $Z^{\vec{j}\cdot \vec{k}}$ has no effect on the ket $\ket{\vec{b}_i}$ except for $k=i$, i.e. when it acts on the $i$-th qubit that is in state $\ket{1}$. From \eqref{postselect_state2} we see that the post-selected state, after the local operations that simulate the classical post-processing of the outcomes, coincides with the W state, as claimed. The Ideal protocol presents a crucial difference from the protocol in \cite{Grasselli2019}, which is made explicit in the following remark:

\begin{remark}
In \cite{Grasselli2019} the parties needed to tailor their KG measurements depending on which detector clicks, in order to neutralize the effects of complex phases in their post-selected $W$-type state. In the Ideal protocol, thanks to the bespoke BBS network, the post-selected state \eqref{postselect_state} only presents real phases, which are corrected as discussed above by simply flipping the KG outcomes and without changing the measurement basis.
\end{remark}

This implies that, in the Ideal protocol, the parties' measurements are independent of the relay's announcements, hence they commute with the action of the relay. This enables us to reformulate the Ideal protocol in prepare-and-measure (PM) form. In the resulting PM protocol, the parties first measure their qubits and record the outcome. Then they send the optical mode, whose state is conditioned on the outcome, to the relay. Hence, the PM protocol coincides with the Ideal protocol except for Step~1.1.\\

\begin{protocol} \label{prot:PM}
		\caption{Prepare-and-measure (PM) protocol}
\begin{enumerate}
\item Quantum part: repeat what follows for a sufficient amount of iterations.
\begin{enumerate}[label*=\arabic*.]
    \item Each party $A_i$ prepares an optical mode $a_i$ in a state that depends on whether the round is labelled as a PE or KG round. 
    \begin{itemize}
        \item In a PE round, they prepare the vacuum state $|0\rangle_{a_i}$ with probability $q_i$, corresponding to the outcome $Z_i=+1$, and the one-photon state $|1\rangle_{a_i}$ with probability $1-q_i$, corresponding to the outcome $Z_i=-1$. 
        \item In a KG round, they prepare with equal probability either the state $|+\rangle_{a_i}=\sqrt{q_i} |0 \rangle_{a_i} + \sqrt{1-q_i} |1 \rangle_{a_i} $, corresponding to the outcome $X_{i}=+1$, or the state $|-\rangle_{a_i}=\sqrt{q_i} |0 \rangle_{a_i} - \sqrt{1-q_i} |1 \rangle_{a_i}$, corresponding to the outcome $X_{i}=-1$.
    \end{itemize}
    \item same as in Ideal prot.
    \item same as in Ideal prot.
    \item same as in Ideal prot.
\end{enumerate}
    \item same as in Ideal prot.
    \item same as in Ideal prot.
\end{enumerate}
\end{protocol}\vspace{0.5cm}

Note that, while the PM protocol is more practical than the Ideal protocol (e.g. it does not require qubit-photon entanglement), it is equivalent to the latter from the point of view of security, since an adversary could not distinguish which of the two protocols is run. Despite the increased practicality, the PM protocol still requires the preparation of single-photon states and their superposition with the vacuum. This prompts us to reduce even further the complexity of the protocol's implementation and obtain a practical, prepare-and-measure, CKA protocol.

In order to derive a practical CKA protocol, we observe that the states prepared in the KG rounds of the PM protocol ($\ket{\pm}_{a_i}$) can be approximated by coherent states of suitable amplitude ($\ket{\pm \alpha_i}$, for $\alpha_i\in\mathbbm{R}$), where the information about the $X$-basis outcome is encoded in the amplitude's sign. At the same time, the statistics collected in PE rounds and used to compute the phase error rate \eqref{phase_error_rate-app} are linked to the so-called yields, i.e. the probability that a detector clicks given that each party sent a fixed number of photons. This suggests us to prepare phase-randomized coherent states in PE rounds and use their detection statistics to apply the decoy-state method and compute the yields, with which we bound the phase error rate. This heuristic reasoning leads us to the practical CKA protocol presented in Sec.~\ref{sec:theprotocol}, where, we recall, each party $A_i$ prepares a coherent state $|x_i \alpha_i\rangle_{a_i}$ with $x_i = \pm 1$ in KG rounds and phase-randomized coherent states in PE rounds.

We emphasize that, in the protocol of Sec.~\ref{sec:theprotocol}, the choice of the type of round (KG or PE) cannot be delayed until after the action of the untrusted relay, contrary to the Ideal protocol. Indeed, the average state prepared by $A_i$ in KG rounds, $(1/2)(\ket{\alpha_i}\bra{\alpha_i}+\ket{-\alpha_i}\bra{-\alpha_i})$, differs from the average state prepared in PE rounds, $(1/|\mathcal{S}_i|) \sum_k \rho_{a_i}(\beta_k)$, due to the coherences of the former in the Fock basis. This means that an adversary controlling the relay could partially distinguish the type of round being executed and act accordingly. Another way to see this is that there is no equivalent entanglement-based version of the CKA protocol. That is, party $A_i$ cannot find two suitable POVMs (one for KG rounds and one for PE rounds) such that the state of their optical mode, conditioned on measuring with one of the two POVMs a fictitious system entangled with the optical mode, corresponds to the state that $A_i$ should prepare in that round \cite{TL17}.

One of the implications of the above fact is that the phase error rate \eqref{phase_error_rate-app} affecting the KG rounds cannot be directly observed from the statistics of the PE rounds, as instead happens in the Ideal protocol. Nevertheless, in the security proof provided in Sec.~\ref{sec:security-proof}, we show how to use the PE statistics to derive an upper bound on the phase error rate \eqref{phase_error_rate-app}. Specifically, we develop a multipartite decoy-state method that allows us to bound certain yields through the PE statistics. The yields, in turn, are needed to analytically upper bound the phase error rate with \eqref{phase-error-rate-bound}.

This concludes our connection between the Ideal protocol, which manifestly makes use of W-state correlations and whose phase error rate can be directly observed in PE rounds, and the CKA protocol discussed in the main text, whose phase error rate is bounded by PE statistics combined with the decoy-state method.

\section{Analytical upper bound on the yields}\label{section:yieldbound}

In this Appendix, we report the full derivation of the analytical bounds on the yields as a function of the observed gains, \eqref{upbound}. The bounds are derived with a multipartite decoy-state method in which each party is provided with the same set of two decoy intensities: $\mathcal{S}=\{\beta_0,\beta_1\}$.

We recall that the gains are probabilities that can be directly estimated from the observed data and are defined as: $G^j_{\vec{f}}:= \Pr\left( \Omega_j | \beta_{f_0},\dots,\beta_{f_{N-1}} \right)$, where $\vec{f}$ is an $N$-dimensional binary vector that covers all the possible choices of intensities by the parties. From Eq.~\eqref{PEprob} in Sec.~\ref{sec:security-proof}, we showed that the gains are related to the yields by the following equality:
\begin{align}
    G^j_{\vec{f}} &= \sum_{n_0,\dots,n_{N-1}=0}^{\infty}  Y^j_{n_0,\dots,n_{N-1}} \prod_{i=0}^{N-1}P_{\beta_{f_i}}(n_{i}) \nonumber\\
    &=\sum_{n_0,\dots,n_{N-1}=0}^{\infty}  Y^j_{n_0,\dots,n_{N-1}} \prod_{i=0}^{N-1}\frac{e^{-\beta_{f_i}}\beta_{f_i}^{n_{i}}}{n_{i}!} \nonumber\\
    &= \prod_{i=0}^{N-1}e^{-\beta_{f_i}} \sum_{n_0,\dots,n_{N-1}=0}^{\infty}  \frac{Y^j_{n_0,\dots,n_{N-1}}}{n_0!\dots n_{N-1}!} \prod_{i=0}^{N-1}\beta_{f_i}^{n_{i}}. \label{gain-app}
\end{align}
We remark that, in principle, the gains can depend on the detector $D_j$ that clicks and so can the yields. However, for simplicity of notation, in this section we drop the superscript $^j$ from the gains and yields. Moreover, in our simulations, due to the symmetric losses affecting each party, the gains and hence the yields are independent of which detector clicks (see Appendix~\ref{sec:channelmodel}). Hence their dependency on $j$ vanishes.

The last equality in \eqref{gain-app} brings us to define a rescaled gain, $\tilde{G}$, as follows:
\begin{equation}\label{gain}
   \tilde{G}_{\vec{f}} := \frac{G_{\vec{f}}}{\prod_{i=0}^{N-1}e^{-\beta_{f_i}}}=\sum_{n_0,\dots,n_{N-1}=0}^{\infty}  \frac{Y_{n_0,\dots,n_{N-1}}}{n_0!\dots n_{N-1}!} \prod_{i=0}^{N-1}\beta_{f_i}^{n_{i}} 
\end{equation}
We now define, from a fixed binary vector $\vec{h}$ of dimension $N$ and Hamming weight $|\vec{h}|=m$, the following quantity
\begin{align}
    B_{\vec{h}} & := \sum_{n_0,\dots,n_{N-1}=0}^{\infty} \frac{Y_{n_0,\dots,n_{N-1}}}{n_0!\dots n_{N-1}!} \prod_{i=0}^{N-1}\left(\beta_1^{1-h_i}\beta_0^{n_{i}}-\beta_0^{1-h_i}\beta_1^{n_{i}}\right) \label{Bh},
\end{align}
which can be recast as a combination of rescaled gains $\tilde{G}_{\vec{f}}$. To see this, we expand the product over $i$ in the last expression as a sum of $2^N$ products, each labelled by a binary vector $\vec{f}$, where each term in the sum is the product of either $\beta_1^{1-h_i}\beta_0^{n_{i}}$ or $-\beta_0^{1-h_i}\beta_1^{n_{i}}$ for every $i=0,\dots,N-1$. In particular, $f_i=0$ ($f_i=1$) indicates that the former (latter) quantity was picked. With this in mind, we can write:    
\begin{align}
    \prod_{i=0}^{N-1}\left(\beta_1^{1-h_i}\beta_0^{n_{i}}-\beta_0^{1-h_i}\beta_1^{n_{i}}\right) &= \sum_{f=0}^{2^N-1} \prod_{i=0}^{N-1} \beta_{f_i}^{n_i} (-1)^{f_i} \beta_0^{(1-h_i)f_i}\beta_1^{(1-h_i)(1-f_i)} \nonumber\\
    &= \sum_{f=0}^{2^N-1} (-1)^{|\vec{f}|} \beta_0^{(\vec1-\vec{h})\cdot\vec{f}}\beta_1^{(\vec1-\vec{h})\cdot(\vec1-\vec{f})} \prod_{i=0}^{N-1} \beta_{f_i}^{n_i} .
\end{align}
By replacing the last expression in \eqref{Bh}, we can employ \eqref{gain} to directly relate $B_{\vec{h}}$ and $\tilde{G}_{\vec{f}}$. We obtain:
\begin{equation} \label{Bh-gains}
    B_{\vec{h}}=\sum_{f=0}^{2^N-1} (-1)^{|\vec{f}|} \beta_0^{(\vec1-\vec{h})\cdot\vec{f}}\beta_1^{(\vec1-\vec{h})\cdot(\vec1-\vec{f})} \tilde{G}_{\vec{f}}.
\end{equation}
This expression is important as is constitutes the link between the quantity $B_{\vec{h}}$, which in the following is used to bound the yields, and the observed gains.

By recasting \eqref{Bh} as follows:
\begin{align}
    B_{\vec{h}} = \sum_{n_0,\dots,n_{N-1}=0}^{\infty} \frac{Y_{n_0,\dots,n_{N-1}}}{n_0!\dots n_{N-1}!} \prod_{\underset{h_i=1}{i \; \mathrm{s.t.}}}(\beta_0^{n_{i}}-\beta_1^{n_{i}}) \prod_{\underset{h_i=0}{i \; \mathrm{s.t.}}}(\beta_1 \beta_0^{n_{i}}-\beta_0 \beta_1^{n_{i}}), \label{Bh2}
\end{align}
we notice that, whenever $h_i=1$, the coefficient of the yields $Y_{n_0,\dots,0_{i} \dots n_{N-1}}$ (i.e. with with $n_i=0$) is null, implying that they do not contribute to the value of $B_{\vec{h}}$. Similarly, when $h_i=0$, all yields of the form $Y_{n_0,\dots,1_{i} \dots n_{N-1}}$ are removed. With this observation, we can now obtain a non-trivial upper bound on any yield  $Y_{n_0,\dots,n_{N-1}}$ in terms of a certain combination of $B_{\vec{h}}$.

To do so, we recast \eqref{Bh2} as follows, where in the first term we only sum over the indexes $n_i$ that correspond to $h_i=1$ and set all the other photon indexes to zero, while in the second term we account for all the other possibilities:
\begin{align}
    B_{\vec{h}}  = &(-1)^{N-m}(\beta_0-\beta_1)^{N-m}\sum_{(n_0,\dots,n_{N-1})\in\mathcal{N}(\vec{h})} \frac{Y_{n_0,\dots,n_{N-1}}}{n_0!\dots n_{N-1}!} \prod_{\underset{h_i=1}{i \; \mathrm{s.t.}}}(\beta_0^{n_{i}}-\beta_1^{n_{i}}) \nonumber\\
    &+\sum_{(n_0,\dots,n_{N-1})\in\tilde{\mathcal{N}}(\vec{h})} \frac{Y_{n_0,\dots,n_{N-1}}}{n_0!\dots n_{N-1}!} \prod_{\underset{h_i=1}{i \; \mathrm{s.t.}}}(\beta_0^{n_{i}}-\beta_1^{n_{i}})\prod_{\underset{h_i=0}{i \; \mathrm{s.t.}}}(\beta_1 \beta_0^{n_{i}}-\beta_0 \beta_1^{n_{i}}) \label{Bh3},
\end{align}
where the sets of indexes $\mathcal{N}(\vec{h})$ and $\tilde{\mathcal{N}}(\vec{h})$ are defined as:
\begin{align}
    \mathcal{N}(\vec{h}) &:= \left\lbrace(n_0,\dots,n_{N-1}) : n_i=h_i r_i,\,  r_i \geq 1  \right\rbrace \\
    \tilde{\mathcal{N}}(\vec{h}) &:= \left\lbrace(n_0,\dots,n_{N-1}) : n_i \geq 1 \mbox{ (if $h_i=1$)}\,;\,n_i\geq 2 \mbox{ or } n_i=0 \mbox{ (if $h_i=0$)}  \right\rbrace \setminus \mathcal{N}(\vec{h})
\end{align}
where $k_i$ are integers. Note that the sum over $n_i$ in the second term skips the case $n_i=1$ for $h_i=0$ since this contribution is null in \eqref{Bh2} (see observation above).

We observe that the yields in the first sum in \eqref{Bh3} contain exactly $m$ nonzero photon numbers, which allowed us factor out the quantities $(\beta_1\beta_0^{n_i} -\beta_0 \beta_1^{n_i})$. Now, we split the second sum in \eqref{Bh3} in a sum of $N-m$ terms, where each term only contains yields with $m+k$ nonzero photon numbers, for $k=1,\dots,N-m$. In this way, we can factor out the quantities $(\beta_1\beta_0^{n_i} -\beta_0 \beta_1^{n_i})$ even in the second sum. This becomes important later, when we want to evaluate the sign in front of each yield. In order to sum over the various combinations of yields with $m+k$ photon numbers, we introduce the binary vectors $\vec{h}^{(k)}$, which can be seen as ``expansions'' of the vector $\vec{h}$ obtained by flipping $k$ of its zeros to ones. Thus, we have that $|\vec{h}^{(k)}|=m+k$ and that $h_i^{(k)}=1$ whenever $h_i=1$, which can be formally stated as the condition: $\vec{h}^{(k)}\wedge \vec{h}=\vec{h}$, where $\wedge$ is the bitwise AND operation. Analogously to $\vec{h}$, when $h_i^{(k)}=0$ we fix the corresponding photon number $n_i$ to zero. Then, in analogy with $\mathcal{N}(\vec{h})$, we define a set of indexes $\mathcal{N}(\vec{h},\vec{h}^{(k)})$ for each expansion $\vec{h}^{(k)}$ that represents the combinations of photon numbers that are allowed by the chosen vector $\vec{h}^{(k)}$:
\begin{align}
    \mathcal{N}(\vec{h},\vec{h}^{(k)}) := \left\lbrace (n_0,\dots,n_{N-1}) : n_i=h^{(k)}_i r_i,\,  r_i \geq 1+h_i^{(k)}-h_i\right\rbrace,
\end{align}
where we accounted for the fact that each additional new bit equal to one in $\vec{h}^{(k)}$, which is a zero in $\vec{h}$, corresponds to an index $n_i$ that starts from two instead of one. According to this, we obtain: 
\begin{align}
    B_{\vec{h}}  = &(-1)^{N-m}(\beta_0-\beta_1)^{N-m}\sum_{(n_0,\dots,n_{N-1})\in\mathcal{N}(\vec{h})} \frac{Y_{n_0,\dots,n_{N-1}}}{n_0!\dots n_{N-1}!} \prod_{\underset{h_i=1}{i \; \mathrm{s.t.}}}(\beta_0^{n_{i}}-\beta_1^{n_{i}}) \nonumber\\
    &+\sum_{k=1}^{N-m}  (-1)^{N-m-k} (\beta_0-\beta_1)^{N-m-k}(\beta_0\beta_1)^k \sum_{(n_0,\dots,n_{N-1})\in\mathcal{N}_k(\vec{h})} \frac{Y_{n_0,\dots,n_{N-1}}}{n_0!\dots n_{N-1}!} \prod_{\underset{h_i=1}{i \; \mathrm{s.t.}}}(\beta_0^{n_{i}}-\beta_1^{n_{i}})\prod_{\underset{h^{(k)}_i -h_i=1}{i \; \mathrm{s.t.}}}( \beta_0^{n_{i}-1}-\beta_1^{n_{i}-1}) \label{Bh4},
\end{align}
where we defined the following set that accounts for all possible choices of $\vec{h}^{(k)}$, for a given $k$:
\begin{align} \label{Nk}
    \mathcal{N}_k(\vec{h}):= \bigcup_{\substack{\vec{h}^{(k)}\in\{0,1\}^N: \\ |\vec{h}^{(k)}|=m+k \\ \vec{h}^{(k)} \wedge \vec{h} = \vec{h}}}  \mathcal{N}(\vec{h},\vec{h}^{(k)}),
\end{align}
where the operation $\wedge$ represents the entry-wise product. Now, we wish to isolate a specific yield $Y_{u_0, \dots u_{N-1}}$ from the first sum in \eqref{Bh4} in order to derive an upper bound on it. Note that we can choose any combination of photon numbers $(u_0, \dots u_{N-1})$ such that: $u_i=h_i k_i$, for $k_i \geq 1$. Since the choice of the vector $\vec{h}$ is arbitrary, the photon numbers are also arbitrary. By isolating the yield $Y_{u_0, \dots u_{N-1}}$ in \eqref{Bh4}, we obtain:
\begin{align}
    B_{\vec{h}}  =&(-1)^{N-m}(\beta_0-\beta_1)^{N-m} Y_{u_0,\dots,u_{N-1}} \prod_{\underset{h_i=1}{i \; \mathrm{s.t.}}}\frac{(\beta_0^{u_{i}}-\beta_1^{u_{i}})}{u_i!} \nonumber\\
    &+(-1)^{N-m}(\beta_0-\beta_1)^{N-m}\sum_{(n_0,\dots,n_{N-1})\in\mathcal{N}(\vec{h})\setminus\{(u_0,\dots,u_{N-1})\}} \frac{Y_{n_0,\dots,n_{N-1}}}{n_0!\dots n_{N-1}!} \prod_{\underset{h_i=1}{i \; \mathrm{s.t.}}}(\beta_0^{n_{i}}-\beta_1^{n_{i}}) \nonumber\\
    &+\sum_{k=1}^{N-m}  (-1)^{N-m-k} (\beta_0-\beta_1)^{N-m-k}(\beta_0\beta_1)^k  \sum_{(n_0,\dots,n_{N-1})\in\mathcal{N}_k(\vec{h})} \frac{Y_{n_0,\dots,n_{N-1}}}{n_0!\dots n_{N-1}!} \prod_{\underset{h_i=1}{i \; \mathrm{s.t.}}}(\beta_0^{n_{i}}-\beta_1^{n_{i}})\prod_{\underset{h^{(k)}_i -h_i=1}{i \; \mathrm{s.t.}}}( \beta_0^{n_{i}-1}-\beta_1^{n_{i}-1}) \label{Bh5},
\end{align}
We now derive an upper bound on $Y_{u_0, \dots u_{N-1}}$. To this aim, we observe that the yield $Y_{u_0, \dots u_{N-1}}$ and each of the yields in the second term in \eqref{Bh5} are multiplied by coefficients of the same sign. Indeed, they are multiplied by the same number of terms of the form $(\beta_0^s-\beta_1^s)$. More quantitatively, the sign of the coefficient $C_{u_0, \dots u_{N-1}}$ of $Y_{u_0, \dots u_{N-1}}$ and of the coefficients of the yields in the second term is:
\begin{align}
    \mathrm{sign}(C_{u_0, \dots u_{N-1}}) = (-1)^{N-m} \left[\mathrm{sign}(\beta_0 -\beta_1)\right]^N.
\end{align}
By similar arguments, the yields in the third term in \eqref{Bh5} are multiplied by coefficients $C_{n_0, \dots n_{N-1}}$ with the following sign:
\begin{align}
    \mathrm{sign}(C_{n_0, \dots n_{N-1}}) = (-1)^{N-m-k} \left[\mathrm{sign}(\beta_0 -\beta_1)\right]^N.
\end{align}
In order to extract an upper bound on $Y_{u_0, \dots u_{N-1}}$, we need to minimize all the yields carrying the same sign as $Y_{u_0, \dots u_{N-1}}$ and maximize all the yields with opposite sign in \eqref{Bh5}. In our case, this means setting to zero all the yields in the first sum and all the yields in the second sum that correspond to even values of $k$. The other yields are set to one. By applying this reasoning to \eqref{Bh5}, we obtain the following expression satisfied by an upper bound  $U_{u_0, \dots u_{N-1}}$ on $Y_{u_0, \dots u_{N-1}}$:
\begin{align}
    B_{\vec{h}}  =&(-1)^{N-m}(\beta_0-\beta_1)^{N-m} U_{u_0,\dots,u_{N-1}} \prod_{\underset{h_i=1}{i \; \mathrm{s.t.}}}\frac{(\beta_0^{u_{i}}-\beta_1^{u_{i}})}{u_i!} \nonumber\\
    &+\sum_{\substack{k=1 \\ k \,\mathrm{odd}}}^{N-m}  (-1)^{N-m-k} (\beta_0-\beta_1)^{N-m-k}(\beta_0\beta_1)^k \sum_{(n_0,\dots,n_{N-1})\in\mathcal{N}_k(\vec{h})} \frac{1}{n_0!\dots n_{N-1}!} \prod_{\underset{h_i=1}{i \; \mathrm{s.t.}}}(\beta_0^{n_{i}}-\beta_1^{n_{i}})\prod_{\underset{h^{(k)}_i -h_i=1}{i \; \mathrm{s.t.}}}( \beta_0^{n_{i}-1}-\beta_1^{n_{i}-1}) \label{Bh6}.
\end{align}
In order to simplify the above expression, we first focus on the term with the sum over $k$, which we denote $B^{(2)}_{\vec{h}}$ and recast as follows:
\begin{align}
    B^{(2)}_{\vec{h}} &=\sum_{\substack{k=1 \\ k \,\mathrm{odd}}}^{N-m}  (-1)^{N-m-k} (\beta_0-\beta_1)^{N-m-k}(\beta_0\beta_1)^k \sum_{(n_0,\dots,n_{N-1})\in\mathcal{N}_k(\vec{h})} \prod_{\underset{h_i=1}{i \; \mathrm{s.t.}}}\frac{\beta_0^{n_{i}}-\beta_1^{n_{i}}}{n_i!}\prod_{\underset{h^{(k)}_i -h_i=1}{i \; \mathrm{s.t.}}}\frac{\beta_0^{n_{i}-1}-\beta_1^{n_{i}-1}}{n_i!} \nonumber\\
    &= \sum_{\substack{k=1 \\ k \,\mathrm{odd}}}^{N-m}  (-1)^{N-m-k} (\beta_0-\beta_1)^{N-m-k}(\beta_0\beta_1)^k \sum_{\substack{\vec{h}^{(k)}\in\{0,1\}^N: \\ |\vec{h}^{(k)}|=m+k \\ \vec{h}^{(k)} \wedge \vec{h} = \vec{h}}} \sum_{(n_0,\dots,n_{N-1})\in\mathcal{N}(\vec{h},\vec{h}^{(k)})} \prod_{\underset{h_i=1}{i \; \mathrm{s.t.}}}\frac{\beta_0^{n_{i}}-\beta_1^{n_{i}}}{n_i!}\prod_{\underset{h^{(k)}_i -h_i=1}{i \; \mathrm{s.t.}}}\frac{\beta_0^{n_{i}-1}-\beta_1^{n_{i}-1}}{n_i!}\label{B2h},
\end{align}
where in the second equality we split the second sum over all the different subsets $\mathcal{N}(\vec{h},\vec{h}^{(k)})$ in $\mathcal{N}_k(\vec{h})$ using \eqref{Nk}. We can now swap the innermost sum in the last expression with the products and obtain:
\begin{align}
    B^{(2)}_{\vec{h}} = &\sum_{\substack{k=1 \\ k \,\mathrm{odd}}}^{N-m}  (-1)^{N-m-k} (\beta_0-\beta_1)^{N-m-k}(\beta_0\beta_1)^k \nonumber\\
    &\cdot\sum_{\substack{\vec{h}^{(k)}\in\{0,1\}^N: \\ |\vec{h}^{(k)}|=m+k \\ \vec{h}^{(k)} \wedge \vec{h} = \vec{h}}}  \prod_{\underset{h_i=1}{i \; \mathrm{s.t.}}} \left(\sum_{n_i=1}^{\infty} \frac{\beta_0^{n_{i}}-\beta_1^{n_{i}}}{n_i!}\right)\prod_{\underset{h^{(k)}_i -h_i=1}{i \; \mathrm{s.t.}}}\left(\sum_{n_i=2}^\infty \frac{\beta_0^{n_{i}-1}-\beta_1^{n_{i}-1}}{n_i!}\right)\label{B2h2}.
\end{align}
It can now be easily seen, using the Taylor series of the exponential function, that the following identities hold:
\begin{align}
    \sum_{n=1}^{\infty} \frac{\beta_0^n-\beta_1^n}{n!}&=e^{\beta_0}-e^{\beta_1} ,\\
    \sum_{n=2}^{\infty} \frac{\beta_0^{n-1}-\beta_1^{n-1}}{n!}&=\frac{1}{\beta_0 \beta_1}\left( \beta_1 e^{\beta_0}- \beta_0 e^{\beta_1} + \beta_0 -\beta_1 \right).
\end{align}
By using the above identities in \eqref{B2h2}, we obtain:
\begin{align}
    B^{(2)}_{\vec{h}} = &\sum_{\substack{k=1 \\ k \,\mathrm{odd}}}^{N-m}  (-1)^{N-m-k} (\beta_0-\beta_1)^{N-m-k} \sum_{\substack{\vec{h}^{(k)}\in\{0,1\}^N: \\ |\vec{h}^{(k)}|=m+k \\ \vec{h}^{(k)} \wedge \vec{h} = \vec{h}}}  (e^{\beta_0}-e^{\beta_1})^m \left( \beta_1 e^{\beta_0}- \beta_0 e^{\beta_1} + \beta_0 -\beta_1 \right)^k \label{B2h3},
\end{align}
where we observe that the argument of the sum over $\vec{h}^{(k)}$ is independent of $\vec{h}^{(k)}$. Therefore, the sum reduces to counting all the possible choices of $\vec{h}^{(k)}$ for a given $k$. This number is given by the possible combinations of $k$ bits in $\vec{h}^{(k)}$ that are set to one, chosen among the $N-m$ elements that correspond to zeroes in $\vec{h}$. Hence, we have $\binom{N-m}{k}$ choices and we obtain:
\begin{align}
    B^{(2)}_{\vec{h}} &= \sum_{\substack{k=1 \\ k \,\mathrm{odd}}}^{N-m}  (-1)^{N-m-k} (\beta_0-\beta_1)^{N-m-k} \binom{N-m}{k}  (e^{\beta_0}-e^{\beta_1})^m \left( \beta_1 e^{\beta_0}- \beta_0 e^{\beta_1} + \beta_0 -\beta_1 \right)^k \nonumber\\
    &= \sum_{k=0}^{\left\lfloor \frac{N-m-1}{2} \right\rfloor} (-1)^{N-m-2k-1} (\beta_0-\beta_1)^{N-m-2k-1}  \binom{N-m}{2k+1} \left( e^{\beta_0}-e^{\beta_1}\right)^m \left( \beta_1 e^{\beta_0}- \beta_0 e^{\beta_1} + \beta_0 -\beta_1 \right)^{2k+1} \label{B2h4},
\end{align}
where $\lfloor x\rfloor$ is the floor function. Finally, by employing \eqref{B2h4} in \eqref{Bh6}, we obtain the following equality satisfied by the upper bound on the selected yield:
\begin{align}
    B_{\vec{h}}  =&(-1)^{N-m}(\beta_0-\beta_1)^{N-m} U_{u_0,\dots,u_{N-1}} \prod_{\underset{h_i=1}{i \; \mathrm{s.t.}}}\frac{(\beta_0^{u_{i}}-\beta_1^{u_{i}})}{u_i!} \nonumber\\
    &+\sum_{k=0}^{\left\lfloor \frac{N-m-1}{2} \right\rfloor} (-1)^{N-m-2k-1} (\beta_0-\beta_1)^{N-m-2k-1}  \binom{N-m}{2k+1} \left( e^{\beta_0}-e^{\beta_1}\right)^m \left( \beta_1 e^{\beta_0}- \beta_0 e^{\beta_1} + \beta_0 -\beta_1 \right)^{2k+1} \label{Bh7}.
\end{align}
By isolating the yield's upper bound and relabelling $u_i\to n_i$, we obtain the final expression of the yield bound: $\overline{Y}_{n_0,\dots,n_{N-1}}=\min\{U_{n_0,\dots,n_{N-1}},1\}$, where:
\begin{align}
    U_{n_0,\dots,n_{N-1}} = \prod_{\underset{n_i\neq 0}{i \; \mathrm{s.t.}}}\frac{n_i!}{\beta_0^{n_{i}}-\beta_1^{n_{i}}}\left[\frac{B_{\vec{h}}\,\,(-1)^{N-m}}{(\beta_0-\beta_1)^{N-m}} + \left( e^{\beta_0}-e^{\beta_1}\right)^m \sum_{k=0}^{\left\lfloor \frac{N-m-1}{2} \right\rfloor}\binom{N-m}{2k+1}\left(\frac{\beta_1 e^{\beta_0}- \beta_0 e^{\beta_1} + \beta_0 -\beta_1}{\beta_0-\beta_1}\right)^{2k+1}\right] \label{upbound-app},
\end{align}
where $B_{\vec{h}}$ is given in \eqref{Bh-gains} (and in principle can depend on the detector $D_j$ through the gains) and $m=|\vec{h}|$, while $\vec{h}$ is the binary vector with components $h_i$ defined by:
\begin{align}
    h_i = \left\lbrace \begin{array}{ll}
      1   & \mbox{if }n_i\geq 1  \\
    0  &  \mbox{if } n_i=0.
    \end{array}\right.
\end{align}

\section{Channel model}\label{sec:channelmodel}

In this Appendix we describe our channel model and compute the detection statistics of the protocol. The channel model includes the following sources of noise:
\begin{enumerate}
    \item Pure-photon loss: the optical mode of party goes through the same lossy channel. The lossy channel is modelled with a beam splitter with transmittance $\eta$, where the additional input port of the beam splitter is fed with the vacuum. 
    \item Polarization misalignment: the optical mode of each party undergoes a polarization misalignment modelled by a unitary operation that maps the creation operator of each mode according to 
    \begin{equation}\label{polarization}
         \hat{a}^\dag_i \rightarrow \cos{\theta_i}\hat{a}^\dag_{i,P}-\sin{\theta_i}\hat{a}^\dag_{i,P_{\perp}}
    \end{equation}
    where $\hat{a}^\dag_{i,P}$ is the creation operator on the original polarization and $\hat{a}^\dag_{i,P_{\perp}}$ is the creation operator on the orthogonal polarization.
    \item Phase shift: the optical mode of each party undergoes a phase shift $\phi_i$, modelled by multiplying the mode operator $\hat{a}^\dag_i$ by a phase $\phi_i$. 
    \item Dark counts in the detectors: each detector is affected by dark counts, with a probability $p_d$ that is equal for all detectors and independent on the state sent.
\end{enumerate}

In Sec.~\ref{sec:simulations} we argued that since the channel of each party is equally lossy, the optimal choice for the signal intensities is the same for each party. Hence, here we assume that the amplitudes of each party in KG and PE rounds coincide: $\alpha_i=\alpha$ and $\mathcal{S}_i=\mathcal{S}$, for every $i$. Moreover, we choose the same polarization misalignment between the reference party $A_0$ and each other party. This means that we choose a misalignment of $\theta_0$ for $A_0$ and $\theta_1$ for the other parties. Similarly for the phase shift, we set $\phi_0=0$ and $\phi_i=\phi$  for $i \neq 0$.\\

\textbf{Computation of $\Pr(\Omega_j|x_0,x_1,\dots,x_{N-1},\mathrm{KG})$}\\
We start by computing the detection probability $\Pr(\Omega_j|x_0,x_1,\dots,x_{N-1},\mathrm{KG})$, which is the probability that only detector $D_j$ clicks, given that party $A_i$ prepared, in a KG round, the coherent state $\ket{x_i\alpha}$, with $x_i=\pm 1$. This detection probability is needed to compute the QBER \eqref{QBER} through \eqref{QBERcomp}.

The state prepared by the $N$ parties in a KG round, before any noise or loss is applied, reads:
\begin{equation}
    |\psi_{in} \rangle = \bigotimes_{i=0}^{N-1} |x_i \alpha \rangle .
\end{equation}
We now apply the sources of noise discussed above.
\begin{enumerate}
    \item The resulting state after the lossy channel is the following:
    \begin{equation}
    |\psi'_{in} \rangle = \bigotimes_{i=0}^{N-1} | x_i \sqrt{\eta} \alpha \rangle .
\end{equation}
    \item After applying the polarization misalignment, we obtain:
    \begin{equation}
    |\psi''_{in} \rangle = \bigotimes_{i=0}^{N-1} \left| x_i \cos{\theta_i} \sqrt{\eta} \alpha \right\rangle_{P} \left| -x_i \sin{\theta_i} \sqrt{\eta} \alpha \right\rangle_{P_{\perp}} .
    \end{equation}
    \item After the phase shift $\phi_i$ is applied on each mode, we get:
     \begin{equation}
    |\psi'''_{in} \rangle = \bigotimes_{i=0}^{N-1} \left| x_i \cos{\theta_i}e^{i\phi_i} \sqrt{\eta} \alpha \right\rangle_{P} \left| -x_i \sin{\theta_i} e^{i\phi_i} \sqrt{\eta} \alpha \right\rangle_{P_{\perp}} . \label{psi3}
    \end{equation}
\end{enumerate}
The state in \eqref{psi3} is the global state of the $N$ parties' modes, after the noisy and lossy channel and before entering the BBS network. We now evolve the modes through the $M$-input and $M$-output BBS network, according to the transformation in \eqref{transf}. We define the coefficients of the inverse transformation of the modes as: $f_{i,j}:=(-1)^{-\vec{i}\cdot\vec{j}}$. Here, we make the non-restrictive assumption that the $N$ modes sent by the parties correspond to the first $N$ inputs of the BBS network. A different choice would not alter the protocol's performance. The output state after the BBS network reads:
\begin{equation}
    |\psi_{out} \rangle = \bigotimes_{j=0}^{M-1} \left| \sqrt{\frac{\eta}{M}} \alpha \sum_{i=0}^{N-1} x_i f_{i,j} \cos{\theta_i}e^{i\phi_i} \right\rangle_{P} \left| -\sqrt{\frac{\eta}{M}} \alpha \sum_{i=0}^{N-1} x_i f_{i,j} \sin{\theta_i}e^{i\phi_i} \right\rangle_{P_{\perp}} .
\end{equation}
At this point, the relay performs a threshold measurement on each mode that returns a click in the corresponding detector if one or more photons are detected. We are interested in the probability that only detector $D_j$ clicks, i.e. $\Pr(\Omega_j|x_0,x_1,\dots,x_{N-1},\mathrm{KG})$. By including the effect of dark counts, we can express such probability as follows:
\begin{align}
    \Pr(\Omega_j|x_0,x_1,\dots,x_{N-1},\mathrm{KG}) & = p_d(1-p_d)^{M-1}\mbox{Tr}\left[ \rho_{out} \bigotimes_{k=0}^{M-1} \ketbra{0}{0}_k \right] + (1-p_d)^{M-1} \mbox{Tr}\left[ \rho_{out} (\mathds{1}_j-\ketbra{0}{0}_j) \bigotimes_{k\neq j} \ketbra{0}{0}_k \right] \nonumber \\
    & = (1-p_d)^{M-1} \mbox{Tr}\left[ \rho_{out} \mathds{1}_j \bigotimes_{k\neq j} \ketbra{0}{0}_k \right] - (1-p_d)^M\mbox{Tr}\left[ \rho_{out} \bigotimes_{k=0}^{M-1} \ketbra{0}{0}_k \right] \label{probDjKG}
\end{align}
where $\ketbra{0}{0}_k $ is the projector on the vacuum of the output mode $k$ for polarizations $P$ and $P_\perp$, as the detectors do not distinguish polarization, and $\rho_{out}=|\psi_{out} \rangle \langle \psi_{out}|$. We calculate both terms appearing in \eqref{probDjKG}. For the second term, we have:
\begin{align}
    \mbox{Tr}\left[ \rho_{out} \bigotimes_{k=0}^{M-1} \ketbra{0}{0}_k\right] & = \prod_{k=0}^{M-1} \exp\left[ -\left| \sqrt{\frac{\eta}{M}}\alpha \sum_{i=0}^{N-1} x_i f_{i,k} \cos{\theta_i}e^{i\phi_i} \right|^2 - \left| \sqrt{\frac{\eta}{M}}\alpha \sum_{i=0}^{N-1} x_i f_{i,k} \sin{\theta_i}e^{i\phi_i} \right|^2\right] \nonumber \\
    & =  \exp\left[-\frac{\eta}{M} \alpha^2 \sum_{k=0}^{M-1} \left( \left| \sum_{i=0}^{N-1} x_i f_{i,k} \cos{\theta_i}e^{i\phi_i} \right|^2 + \left| \sum_{i=0}^{N-1} x_i f_{i,k} \sin{\theta_i}e^{i\phi_i} \right|^2\right) \right]. \label{secondterm}
\end{align}
We now focus on the sum over $k$ in the last expression and use the fact that we fixed the angles $\theta_i$ and $\phi_i$ as discussed above. The sum over $k$ simplifies to: 
\begin{align}
     \sum_{k=0}^{M-1} &\left( \left| \sum_{i=0}^{N-1} x_i f_{i,k} \cos{\theta_i}e^{i\phi_i} \right|^2 + \left| \sum_{i=0}^{N-1} x_i f_{i,k} \sin{\theta_i}e^{i\phi_i} \right|^2\right)  \nonumber \\
      & = \sum_{k=0}^{M-1}  \left(  \left|x_0 \cos{\theta_0} + \cos{\theta_1}e^{i\phi}\sum_{i=1}^{N-1} x_i f_{i,k}  \right|^2+\left|x_0 \sin{\theta_0} + \sin{\theta_1}e^{i\phi}\sum_{i=1}^{N-1} x_i f_{i,k}  \right|^2 \right) \nonumber \\
      & = \sum_{k=0}^{M-1}  \left(  x_0^2 \cos^2{\theta_0} + \cos^2{\theta_1}\left|\sum_{i=1}^{N-1} x_i f_{i,k}  \right|^2+ x_0^2 \sin^2{\theta_0} + \sin^2{\theta_1}\left|\sum_{i=1}^{N-1} x_i f_{i,k}\right|^2 \right. \nonumber \\
        & + \left. 2 x_0 \cos{\theta_0}\cos{\theta_1} \cos{\phi} \sum_{i=1}^{N-1} x_i f_{i,k} + 2  x_0 \sin{\theta_0}\sin{\theta_1} \cos{\phi} \sum_{i=1}^{N-1} x_i f_{i,k} \right) \nonumber \\
   &  =  \sum_{k=0}^{M-1} \left( 1 + \left|\sum_{i=1}^{N-1} x_i f_{i,k}\right|^2 + 2 x_0 \cos{\theta} \cos{\phi}\sum_{i=1}^{N-1} x_i f_{i,k}  \right) \label{prob-calc1}
\end{align}
where we used that $x_i=\pm 1$, $\cos{\theta_i}^2+\sin{\theta_i}^2=1$ and $\cos{\theta_0}\cos{\theta_1}+\sin{\theta_0}\sin{\theta_1}=\cos(\theta_0-\theta_1)$ and where we defined $\theta:=\theta_0-\theta_1$. Consider the following lemma for the function $f_{i,k}$, which coincides with $f_{k,i}$ in \eqref{func}. 
\begin{lemma} \label{lemma:fki}
 For $f_{k,i}$ as defined by \eqref{func}, it holds: 
 \begin{equation}
     \sum_{i=0}^{M-1} f_{k,i}=M \,\delta_{k,0} 
\end{equation}
\end{lemma}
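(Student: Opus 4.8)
The plan is to evaluate the character sum $\sum_{i=0}^{M-1} f_{k,i}$ by exploiting the complete factorization of $f_{k,i} = (-1)^{\vec k \cdot \vec i}$ across bit positions. Recalling that $M = 2^s$ and that the dot product decomposes as $\vec k \cdot \vec i = \sum_{l=0}^{s-1} k_l i_l$, the summand is a product of single-bit contributions $(-1)^{k_l i_l}$, which is what makes the sum separable.

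First I would observe that, as $i$ runs over $\{0,1,\dots,M-1\}$, its binary vector $\vec i = (i_0,\dots,i_{s-1})$ runs over every element of $\{0,1\}^s$ exactly once. This lets me rewrite the single sum over $i$ as an $s$-fold sum over the independent bits $i_0,\dots,i_{s-1}$ and, since the summand factorizes completely, interchange the sum with the product over $l$:
\begin{equation}
\sum_{i=0}^{M-1} f_{k,i} = \sum_{i_0,\dots,i_{s-1}\in\{0,1\}} \prod_{l=0}^{s-1} (-1)^{k_l i_l} = \prod_{l=0}^{s-1} \left( \sum_{i_l=0}^{1} (-1)^{k_l i_l} \right) = \prod_{l=0}^{s-1} \bigl(1 + (-1)^{k_l}\bigr).
\end{equation}
The final step is to inspect each factor: $1 + (-1)^{k_l}$ equals $2$ when $k_l = 0$ and vanishes when $k_l = 1$. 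Hence the product equals $2^s = M$ precisely when every bit of $k$ is zero, i.e.\ when $k=0$, and it collapses to zero as soon as a single bit $k_l = 1$. This reproduces $M\,\delta_{k,0}$ and proves the lemma.

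I do not anticipate a genuine obstacle here, as the statement is the standard orthogonality relation for the additive characters of $(\mathbb{Z}/2\mathbb{Z})^{s}$ (equivalently, a row-sum identity for the Walsh--Hadamard matrix). The only point meriting a moment of care is justifying the interchange of the sum over $i$ with the product over bit positions; this is immediate once one notes that the summand factorizes across the $s$ bits and that all the sums are finite, so no convergence or reordering subtleties can arise.
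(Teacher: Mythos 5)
Your proof is correct, but it takes a genuinely different route from the paper's. The paper does not separate the sum over bit positions: instead it groups the vectors $\vec{i}$ according to the value $c=\vec{k}\cdot\vec{i}$ of the scalar product, counts that exactly $\binom{|\vec{k}|}{c}\,2^{s-|\vec{k}|}$ vectors $\vec{i}$ realize each value $c$ (one chooses which $c$ of the $|\vec{k}|$ ones of $\vec{k}$ are matched by ones of $\vec{i}$, while the $s-|\vec{k}|$ bits of $\vec{i}$ lying over zeros of $\vec{k}$ are free), and then collapses the resulting alternating sum via the binomial theorem, $\sum_{c}(-1)^c\binom{|\vec{k}|}{c}2^{s-|\vec{k}|}=2^{s-|\vec{k}|}(1-1)^{|\vec{k}|}=M\,\delta_{k,0}$. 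You instead exploit the complete factorization of $(-1)^{\vec{k}\cdot\vec{i}}$ across the $s$ bits and write the sum as $\prod_{l=0}^{s-1}\bigl(1+(-1)^{k_l}\bigr)$, reading off the answer factor by factor. Both arguments are elementary and rigorous; yours is shorter, sidesteps the combinatorial counting step (the one place in the paper's proof requiring care), and generalizes verbatim to character sums over $(\mathbb{Z}/d\mathbb{Z})^s$, whereas the paper's version makes the dependence on the Hamming weight $|\vec{k}|$ explicit, in the same style as the weight-based counting arguments used elsewhere in its appendices.
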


\begin{proof}
    To show the result of the lemma we first recall that the function $f_{k,i}$ is given by:
    \begin{eqnarray}
        f_{k,i}=(-1)^{\vec{k}\cdot \vec{i}},
    \end{eqnarray}
    where $\vec{k}$ and $\vec{i}$ are the binary vectors of length $s$ that represent the numbers $k$ and $i$ in binary representation. Then, the sum over $i$ of $f_{k,i}$ can be recast as:
    \begin{align}
    \sum_{i=0}^{M-1} f_{k,i}    &= \sum_{\vec{i}\in\{0,1\}^s} (-1)^{\vec{k}\cdot \vec{i}} \nonumber\\
        &= \sum_{c=0}^{\abs{\vec{k}}} (-1)^c\binom{|\vec{k}|}{c}2^{s-\abs{\vec{k}}}, \label{lemma-step}
    \end{align}
    where in the second equality we perform the sum over all the possible values $c$ of $\vec{k}\cdot \vec{i}$ and count how many distinct vectors $\vec{i}$ lead to the same scalar product $c=\vec{k}\cdot \vec{i}$. This number is given by the ways in which we can select $c$ bits equal to one in $\vec{k}$ (the binomial coefficient), which fixes the corresponding $c$ bits in $\vec{i}$ to be one and also fixes other $|\vec{k}|-c$ bits in $\vec{i}$ to be zero since they correspond to the ones in $\vec{k}$ that have not been selected. At this point, the vector $\vec{i}$ is almost all fixed, except for the bits that correspond to the $s-|\vec{k}|$ zero bits in $\vec{k}$. Since such bits in $\vec{i}$ can be arbitrary as they would not contribute to the scalar product, the total number of possibilities is given by $2^{s-|\vec{k}|}$.
    
    Now, we can simplify the expression in \eqref{lemma-step} as follows:
    \begin{align}
        \sum_{i=0}^{M-1} f_{k,i} &= \sum_{c=0}^{\abs{\vec{k}}} (-1)^c\binom{|\vec{k}|}{c}2^{s-\abs{\vec{k}}} \nonumber\\
        &= 2^{s-\abs{\vec{k}}} \sum_{c=0}^{\abs{\vec{k}}} \binom{|\vec{k}|}{c} (-1)^c (1)^{|\vec{k}|-c} \nonumber\\
        &= 2^{s-\abs{\vec{k}}} (1-1)^{|\vec{k}|} \nonumber\\
        &= M \delta_{k,0},
    \end{align}
    where we used the binomial formula in the third line and that $M=2^s$ together with the definition of Kronecker delta in the last line. This concludes the proof.
\end{proof}

By applying Lemma~\ref{lemma:fki} in \eqref{prob-calc1}, we can simplify the term with the cosines as follows:
\begin{align}
     \sum_{k=0}^{M-1} \left( \left| \sum_{i=0}^{N-1} x_i f_{i,k} \cos{\theta_i}e^{i\phi_i} \right|^2 + \left| \sum_{i=0}^{N-1} x_i f_{i,k} \sin{\theta_i}e^{i\phi_i} \right|^2\right)  & =  \sum_{k=0}^{M-1} \left( 1 + \left|\sum_{i=1}^{N-1} x_i f_{i,k}\right|^2 + 2 x_0 \cos{\theta} \cos{\phi}\sum_{i=1}^{N-1} x_i f_{i,k}  \right) \nonumber\\
     &= \sum_{k=0}^{M-1} \left( 1 + \left|\sum_{i=1}^{N-1} x_i f_{i,k}\right|^2 \right) + 2 x_0 \cos{\theta} \cos{\phi}\sum_{i=1}^{N-1} x_i M \delta_{i,0} \nonumber\\
     &=\sum_{k=0}^{M-1} \left( 1 + \left|\sum_{i=1}^{N-1} x_i f_{i,k}\right|^2 \right) \nonumber\\
     &=M + \sum_{k=0}^{M-1}\left|\sum_{i=1}^{N-1} x_i f_{i,k}\right|^2,   \label{prob-calc2}
\end{align}
where the sum with the Kronecker delta $\delta_{i,0}$ is identically zero since the index $i$ starts from one.

We now expand the square in the last expression and obtain:
\begin{align}
     \sum_{k=0}^{M-1} \left( \left| \sum_{i=0}^{N-1} x_i f_{i,k} \cos{\theta_i}e^{i\phi_i} \right|^2 + \left| \sum_{i=0}^{N-1} x_i f_{i,k} \sin{\theta_i}e^{i\phi_i} \right|^2\right)  &=M + \sum_{k=0}^{M-1}\left|\sum_{i=1}^{N-1} x_i f_{i,k}\right|^2 \nonumber\\
     &= M + \sum_{k=0}^{M-1} \sum_{i,i'=1}^{N-1}x_i x_{i'} f_{i,k}f_{i',k} \nonumber\\
     &= M +  \sum_{i,i'=1}^{N-1}x_i x_{i'} \sum_{\vec{k}\in\{0,1\}^s} (-1)^{(\vec{i}+\vec{i}')\cdot\vec{k}},   \label{prob-calc3}
\end{align}
where we remark that the result of Lemma~\ref{lemma:fki} cannot be directly applied to the innermost sum since $(\vec{i}+\vec{i}')$ is not a binary vector. However, we can use the Lemma to compute such a sum. In order to do so, we observe that the vector $(\vec{i}+\vec{i}')$ deviates from a binary vector only in the elements $r$ where $i_r =i'_r=1$, and we have $\vec{i}\cdot\vec{i}'$ many such elements. These elements do not contribute to the value of $(-1)^{(\vec{i}+\vec{i}')\cdot\vec{k}}$ regardless of the value of $k_r$. Hence, we can define shorter vectors $\vec{m}\in\{0,1\}^{s-\vec{i}\cdot\vec{i}'}$ and $\vec{l}\in\{0,1\}^{s-\vec{i}\cdot\vec{i}'}$ that correspond to the remaining $s-\vec{i}\cdot\vec{i}'$ elements of $\vec{i}+\vec{i}'$ and $\vec{k}$, respectively, where $i_r+i'_r\neq 2$. By definition, we have that: $(-1)^{(\vec{i}+\vec{i}')\cdot\vec{k}}=(-1)^{\vec{m}\cdot\vec{l}}$. Now, in order to replace the sum over $\vec{k}$ with a sum over $\vec{l}$, we must account for the fact that, for every fixed value of $\vec{l}$ and hence of $(-1)^{\vec{m}\cdot\vec{l}}$, there are $2^{\vec{i}\cdot\vec{i}'}$ vectors $\vec{k}$ such that $(-1)^{(\vec{i}+\vec{i}')\cdot\vec{k}}=(-1)^{\vec{m}\cdot\vec{l}}$. Therefore, we can recast the innermost sum in \eqref{prob-calc3} as follows:  
\begin{align}
    \sum_{\vec{k}\in\{0,1\}^s} (-1)^{(\vec{i}+\vec{i}')\cdot\vec{k}} &= 2^{\vec{i}\cdot\vec{i}'} \sum_{\vec{l}\in\{0,1\}^{s-\vec{i}\cdot\vec{i}'}} (-1)^{\vec{m}\cdot\vec{l}} \nonumber\\
    &= 2^{\vec{i}\cdot\vec{i}'} 2^{s-\vec{i}\cdot\vec{i}'}\delta_{\vec{m},\vec{0}} \nonumber\\
    &= M \delta_{\vec{m},\vec{0}} \nonumber\\
    &= M \delta_{\vec{i},\vec{i}'}\,\,, \label{prob-calc4}
\end{align}
where in the second equality we used Lemma~\ref{lemma:fki} since now $\vec{m}$ is a binary vector and in the fourth equality we used the fact that the delta $\delta_{\vec{m},\vec{0}}$ effectively implies that $\vec{i}=\vec{i}'$ over the whole set of $s$ elements since  $\vec{m}$ is given by the elements of $\vec{i}+\vec{i}'$ corresponding to the positions where the two vectors are not both equal to one.

Thus, by using \eqref{prob-calc4} in \eqref{prob-calc3}, we obtain:
\begin{align}
     \sum_{k=0}^{M-1} \left( \left| \sum_{i=0}^{N-1} x_i f_{i,k} \cos{\theta_i}e^{i\phi_i} \right|^2 + \left| \sum_{i=0}^{N-1} x_i f_{i,k} \sin{\theta_i}e^{i\phi_i} \right|^2\right)  &= M +  M \sum_{i,i'=1}^{N-1}x_i x_{i'} \delta_{\vec{i},\vec{i}'} \nonumber\\
     &=M \left(1 + \sum_{i=1}^{N-1} x^2_i \right)  \nonumber\\
     &= M \, N,   \label{prob-calc5}
\end{align}
where we used the fact that $x_i=\pm 1$. Finally, by employing \eqref{prob-calc5} in \eqref{secondterm}, we obtain:
\begin{equation}
      \mbox{Tr}\left[ \rho_{out} \bigotimes_{k=0}^{M-1} \ketbra{0}{0}_k\right] = e^{-N\eta \alpha^2}\label{secondterm2},
\end{equation}
which concludes the calculation of the second trace in \eqref{probDjKG}.

We now move on to calculate the first trace in \eqref{probDjKG}. In a similar manner to \eqref{secondterm}, we can write:
\begin{align}
    \mbox{Tr}\left[ \rho_{out} \mathds{1}_j \bigotimes_{k\neq j} \ketbra{0}{0}_k\right] &  = \prod_{k\neq j} \exp\left[ -\left| \sqrt{\frac{\eta}{M}}\alpha \sum_{i=0}^{N-1} x_i f_{i,k} \cos{\theta_i}e^{i\phi_i} \right|^2 - \left| \sqrt{\frac{\eta}{M}}\alpha \sum_{i=0}^{N-1} x_i f_{i,k} \sin{\theta_i}e^{i\phi_i} \right|^2\right] \nonumber \\
    & =  \exp\left[-\frac{\eta}{M} \alpha^2 \sum_{k\neq j} \left( \left| \sum_{i=0}^{N-1} x_i f_{i,k} \cos{\theta_i}e^{i\phi_i} \right|^2 + \left| \sum_{i=0}^{N-1} x_i f_{i,k} \sin{\theta_i}e^{i\phi_i} \right|^2\right) \right] \nonumber\\
    &=  \exp\left[-\frac{\eta}{M} \alpha^2 \sum_{k\neq j} C_k \right] \nonumber\\ 
    &= \exp\left[-\frac{\eta}{M} \alpha^2 \left(\sum_{k=0}^{M-1} C_k - C_j\right) \right] \nonumber\\
    &= e^{-N\eta \alpha^2} e^{(\eta/M) \alpha^2 C_j}   \label{firstterm},
\end{align}
where in the third line we defined:
\begin{equation}
    C_k:= \left| \sum_{i=0}^{N-1} x_i f_{i,k} \cos{\theta_i}e^{i\phi_i} \right|^2 + \left| \sum_{i=0}^{N-1} x_i f_{i,k} \sin{\theta_i}e^{i\phi_i} \right|^2 \label{Ck}\,,
\end{equation}
and we used \eqref{secondterm2} in the last line. With analogous calculations to those leading to \eqref{prob-calc1}, one can simplify $C_j$ as follows:
\begin{equation} 
    C_j = 1 + \left|\sum_{i=1}^{N-1} x_i f_{i,j}\right|^2 + 2 x_0 \cos{\theta} \cos{\phi}\sum_{i=1}^{N-1} x_i f_{i,j}.
\end{equation}
We now recall that in the post-processing of the protocol, party $A_i$ flips their $X$-basis outcome, $x_i$, if $f_{i,j}=(-1)^{\vec{i}\cdot\vec{j}}=-1$. For this, we identify the sum $\sum_{i=1}^{N-1} x_i f_{i,j}$ in the last expression as the sum of the post-processed $X$-basis outcomes of the parties (excluding $A_0$) and can label it as follows:
\begin{equation} \label{Sj}
    \sum_{i=1}^{N-1} x_i f_{i,j} =: S^j_{x_1,\dots,x_{N-1}}.
\end{equation}
This allows us to recast $C_j$ as follows:
\begin{equation}
    C_j = 1 + (S^j_{x_1,\dots,x_{N-1}})^2 + 2  S^j_{x_1,\dots,x_{N-1}} x_0 \cos{\theta} \cos{\phi}.
\end{equation}
By using the last expression in \eqref{firstterm}, we obtain the final form of the first trace in \eqref{probDjKG}:
\begin{align}
    \mbox{Tr}\left[ \rho_{out} \mathds{1}_j \bigotimes_{k\neq j} \ketbra{0}{0}_k\right]  &= e^{-N\eta \alpha^2} e^{(\eta/M) \alpha^2 \left(1 + (S^j_{x_1,\dots,x_{N-1}})^2 + 2  S^j_{x_1,\dots,x_{N-1}} x_0 \cos{\theta} \cos{\phi}\right)} \nonumber\\
    &= e^{-(MN-1)\eta\alpha^2/M} e^{\eta\alpha^2\left((S^j_{x_1,\dots,x_{N-1}})^2 + 2  S^j_{x_1,\dots,x_{N-1}} x_0 \cos{\theta} \cos{\phi}\right)/M} \label{firstterm2}.
\end{align}
Finally, by combining \eqref{secondterm2} and \eqref{firstterm2} in \eqref{probDjKG}, we obtain the following expression for the probability that only detector $D_j$ clicks, conditioned on the parties preparing coherent states $\ket{x_0\alpha},\dots,\ket{x_{N-1}\alpha}$ in a KG round: 
\begin{align}
    \Pr(\Omega_j|x_0,x_1,\dots,x_{N-1},\mathrm{KG}) = &(1-p_d)^{M-1} e^{-(MN-1)\eta\alpha^2/M} e^{\eta\alpha^2\left((S^j_{x_1,\dots,x_{N-1}})^2 + 2  S^j_{x_1,\dots,x_{N-1}} x_0 \cos{\theta} \cos{\phi}\right)/M} \nonumber\\
    &-(1-p_d)^M e^{-N\eta \alpha^2} \label{probDjKG-final},
\end{align}
where $S^j_{x_1,\dots,x_{N-1}}$ is given in \eqref{Sj} and $\theta=\theta_0 -\theta_1$.\\

\textbf{Computation of $\Pr(\Omega_j|\mathrm{KG})$}\\
We now calculate the probability that detector $D_j$ clicks in a KG round, i.e.
\begin{align}
    \Pr(\Omega_j|\mathrm{KG}) &= \frac{1}{2^N} \sum_{(x_0,\dots,x_{N-1})\in\{1,-1\}^{N}} \Pr(\Omega_j|x_0,x_1,\dots,x_{N-1},\mathrm{KG}) \nonumber \\
     &= -(1-p_d)^Me^{-N\eta \alpha^2} + \frac{(1-p_d)^{M-1}}{2^N} e^{-(MN-1)\eta\alpha^2/M} \nonumber\\
     &\cdot\sum_{(x_0,\dots,x_{N-1})\in\{1,-1\}^{N}} e^{\eta\alpha^2\left((S^j_{x_1,\dots,x_{N-1}})^2 + 2  S^j_{x_1,\dots,x_{N-1}} x_0 \cos{\theta} \cos{\phi}\right)/M}. \label{prDetjKG}
\end{align}
We denote the leftover sum in the last expression as $\Sigma$ for brevity. Then, we can simplify it as follows:
\begin{align}
    \Sigma &= \sum_{(x_1,\dots,x_{N-1})\in\{1,-1\}^{N-1}} \left(e^{\eta\alpha^2\left((S^j_{x_1,\dots,x_{N-1}})^2 + 2  S^j_{x_1,\dots,x_{N-1}} \cos{\theta} \cos{\phi}\right)/M} + e^{\eta\alpha^2\left((S^j_{x_1,\dots,x_{N-1}})^2 - 2  S^j_{x_1,\dots,x_{N-1}} \cos{\theta} \cos{\phi}\right)/M}\right) \nonumber\\
    &=\sum_{(x_1,\dots,x_{N-1})\in\{1,-1\}^{N-1}} e^{\eta\alpha^2(S^j_{x_1,\dots,x_{N-1}})^2/M}\left(e^{\eta\alpha^2 2  S^j_{x_1,\dots,x_{N-1}} \cos{\theta} \cos{\phi}/M} + e^{-\eta\alpha^2 2  S^j_{x_1,\dots,x_{N-1}} \cos{\theta} \cos{\phi}/M}\right) \nonumber\\
    &=2\sum_{(x_1,\dots,x_{N-1})\in\{1,-1\}^{N-1}} e^{\eta\alpha^2(S^j_{x_1,\dots,x_{N-1}})^2/M} \cosh\left(2\frac{\eta\alpha^2}{M}  S^j_{x_1,\dots,x_{N-1}} \cos{\theta}\cos{\phi}\right)
\end{align}
At this point, we define a vector $\vec{y}\in\{1,-1\}^{N-1}$ such that $y_i=x_i f_{i,j}$. Then, we can rewrite $S^j_{x_1,\dots,x_{N-1}}=\sum_i y_i$. Importantly, since we sum over all possible vectors $(x_1,\dots,x_{N-1})$, we reach all possible values for $\vec{y}$. This implies that we can recast the sum over $(x_1,\dots,x_{N-1})$ as a sum over all possible vectors $\vec{y}$. This has the important consequence that the probability of detector $D_j$ clicking is independent of $j$. With these considerations, we rewrite the last expression as follows:
\begin{align}
    \Sigma &= 2\sum_{\vec{y}\in\{1,-1\}^{N-1}} e^{\eta\alpha^2({\textstyle \sum_i y_i})^2/M} \cosh\left(2\frac{\eta\alpha^2}{M} ({\textstyle \sum_i y_i}) \cos{\theta}\cos{\phi}\right).
\end{align}
Now let us call $k$ the number of ones in the vector $\vec{y}$. We have that $\sum_i y_i=k-(N-1-k)=2k+1-N$. Since there are $\binom{N-1}{k}$ different vectors $\vec{y}$ that have a fixed number $k$ of ones, we can recast the last expression as follows:
\begin{align}
    \Sigma &= 2\sum_{k=0}^{N-1}\binom{N-1}{k} e^{\eta\alpha^2(2k+1-N)^2/M} \cosh\left(2\frac{\eta\alpha^2}{M} (2k+1-N) \cos{\theta}\cos{\phi}\right).
\end{align}
By inserting the last expression in \eqref{prDetjKG}, we obtain the final expression for the probability that detector $D_j$ clicks in a KG round:
\begin{align}
    \Pr(\Omega_j|\mathrm{KG}) = &-(1-p_d)^Me^{-N\eta \alpha^2} \nonumber\\
    &+ \frac{(1-p_d)^{M-1}}{2^{N-1}} e^{-(MN-1)\eta\alpha^2/M} \sum_{k=0}^{N-1}\binom{N-1}{k} e^{\eta\alpha^2(2k+1-N)^2/M} \cosh\left(2\frac{\eta\alpha^2}{M} (2k+1-N) \cos{\theta}\cos{\phi}\right), \label{p1}
\end{align}
where $\theta=\theta_0-\theta_1$. As discussed above, the probability that a specific detector clicks is independent of $j$, as expected given our symmetric channel model.\\

\textbf{Computation of $Q_{X_0,X_i}^{j}$}\\
The QBER is computed through \eqref{QBERcomp}, which we report here for clarity:
\begin{align}
    Q^j_{X_0,X_i} & =  \sum_{x_0 \neq x_{i}f_{i,j}} \frac{ \Pr\left( \Omega_j | x_0, x_{i},\mathrm{KG}\right)}{4 \Pr(\Omega_j|\mathrm{KG})} , \label{QBERcomp-app}
\end{align}
where the only quantity that still needs to be computed is $\Pr\left( \Omega_j | x_0, x_{i},\mathrm{KG}\right)$. By definition, we have:
\begin{align}
    \Pr(\Omega_j|x_0, x_{i},\mathrm{KG}) &= \frac{1}{2^{N-2}} \sum_{(x_1,\dots,\hat{x}_i,\dots,x_{N-1})\in\{1,-1\}^{N-2}} \Pr(\Omega_j|x_0,x_1,\dots,x_{N-1},\mathrm{KG}) \nonumber \\
     &= -(1-p_d)^Me^{-N\eta \alpha^2} + \frac{(1-p_d)^{M-1}}{2^{N-2}} e^{-(MN-1)\eta\alpha^2/M} \nonumber\\
     &\cdot\sum_{(x_1,\dots,\hat{x}_i,\dots,x_{N-1})\in\{1,-1\}^{N-2}} e^{\eta\alpha^2\left((S^j_{x_1,\dots,x_{N-1}})^2 + 2  S^j_{x_1,\dots,x_{N-1}} x_0 \cos{\theta} \cos{\phi}\right)/M}, \label{QBERcomp2}
\end{align}
where $(x_1,\dots,\hat{x}_i,\dots,x_{N-1})$ are $(N-2)$-dimensional vectors where the $i$-th element is removed. Then, we can define a vector $\vec{y}\in\{1,-1\}^{N-1}$ with $y_l=x_l f_{l,j}$ for $l \neq i$ and $y_i=0$, such that $S^j_{x_1,\dots,x_{N-1}}=\sum_l y_l + x_i f_{i,j}$. Since the sum in the last expression runs over all vectors $(x_1,\dots,\hat{x}_i,\dots,x_{N-1})$, we can reach all possible choices of $\vec{y}$, meaning that we can recast the sum as a sum over all possible choices of $\vec{y}$. With these considerations, we recast \eqref{QBERcomp2} as follows:
\begin{align}
    \Pr(\Omega_j|x_0, x_{i},\mathrm{KG}) &= -(1-p_d)^Me^{-N\eta \alpha^2} + \frac{(1-p_d)^{M-1}}{2^{N-2}} e^{-(MN-1)\eta\alpha^2/M} \nonumber\\
     &\cdot\sum_{\substack{\vec{y}\in\{1,-1\}^{N-1}:\\ y_i=0}} e^{\eta\alpha^2\left((\sum_l y_l + x_i f_{i,j})^2 + 2  (\sum_l y_l + x_i f_{i,j}) x_0 \cos{\theta} \cos{\phi}\right)/M}. \label{QBERcomp3}
\end{align}
We label the sum as $\Sigma'$ and focus on it:
\begin{align}
    \Sigma'&= \sum_{\substack{\vec{y}\in\{1,-1\}^{N-1}:\\ y_i=0}} e^{\eta\alpha^2\left((\sum_l y_l)^2 + 1+ 2x_i f_{i,j}\sum_l y_l + 2 x_i f_{i,j}x_0 \cos{\theta} \cos{\phi} +2 x_0 \cos{\theta} \cos{\phi}\sum_l y_l\right)/M} \nonumber\\
    &= e^{\eta\alpha^2\left(1+2x_0 x_i f_{i,j}\cos\theta\cos\phi\right)/M} \sum_{\substack{\vec{y}\in\{1,-1\}^{N-1}:\\ y_i=0}} e^{\eta\alpha^2\left((\sum_l y_l)^2 +  2\sum_l y_l(x_i f_{i,j} +  x_0 \cos{\theta} \cos{\phi})\right)/M}.
\end{align}
By replicating the argument in the calculation of $\Pr(\Omega_j|\mathrm{KG})$, we can replace the sum over $\vec{y}$ with a sum over $k$, which is the number of ones in $\vec{y}$:
\begin{align}
    \Sigma' &= e^{\eta\alpha^2\left(1+2x_0 x_i f_{i,j}\cos\theta\cos\phi\right)/M} \sum_{k=0}^{N-2}\binom{N-2}{k} e^{\eta\alpha^2(2k+2-N)^2/M} e^{2 \eta\alpha^2(2k+2-N)(x_i f_{i,j} +  x_0 \cos{\theta} \cos{\phi})/M}
\end{align}
By inserting this in \eqref{QBERcomp3}, we obtain the final expression for the probability that detector $D_j$ clicks, given that party $A_0$ ($A_i$) prepared coherent state $\ket{x_0 \alpha}$ ($\ket{x_i \alpha}$):
\begin{align}
    \Pr(\Omega_j|x_0, x_{i},\mathrm{KG}) &= -(1-p_d)^Me^{-N\eta \alpha^2} + \frac{(1-p_d)^{M-1}}{2^{N-2}} e^{-(MN-2-2x_0 x_i f_{i,j}\cos\theta\cos\phi)\eta\alpha^2/M} \nonumber\\
     &\cdot\sum_{k=0}^{N-2}\binom{N-2}{k} e^{\eta\alpha^2(2k+2-N)^2/M} e^{2 \eta\alpha^2(2k+2-N)(x_i f_{i,j} +  x_0 \cos{\theta} \cos{\phi})/M} . \label{QBERcomp4}
\end{align}
With \eqref{QBERcomp4} we can finally compute the QBER as follows:
\begin{align}
    Q^j_{X_0,X_i} &= \sum_{x_0 \neq x_{i}f_{i,j}} \frac{ \Pr\left( \Omega_j | x_0, x_{i},\mathrm{KG}\right)}{4 \Pr(\Omega_j|\mathrm{KG})} \nonumber\\
    &= \frac{-(1-p_d)^Me^{-N\eta \alpha^2}}{2 \Pr(\Omega_j|\mathrm{KG})} + \frac{(1-p_d)^{M-1}}{2^{N}\Pr(\Omega_j|\mathrm{KG})} e^{-(MN-2+2\cos\theta\cos\phi)\eta\alpha^2/M}\nonumber\\
     &\quad\cdot\sum_{k=0}^{N-2}\binom{N-2}{k} e^{\eta\alpha^2(2k+2-N)^2/M} \left(e^{2 \eta\alpha^2(2k+2-N)(1- \cos{\theta} \cos{\phi})/M}+e^{-2 \eta\alpha^2(2k+2-N)(1- \cos{\theta} \cos{\phi})/M}\right) \nonumber\\
     &= \frac{-(1-p_d)^Me^{-N\eta \alpha^2}}{2 \Pr(\Omega_j|\mathrm{KG})} + \frac{(1-p_d)^{M-1}}{2^{N-1}\Pr(\Omega_j|\mathrm{KG})} e^{-(MN-2+2\cos\theta\cos\phi)\eta\alpha^2/M}\nonumber\\
     &\quad\cdot\sum_{k=0}^{N-2}\binom{N-2}{k} e^{\eta\alpha^2(2k+2-N)^2/M} \cosh\left(2 \frac{\eta\alpha^2}{M}(2k+2-N)(1- \cos{\theta} \cos{\phi})\right). \label{QBERcomp-app2}
\end{align}
which is also independent of $j$ ($\Pr(\Omega_j|\mathrm{KG})$ is independent of $j$, see \eqref{p1}), as well as $i$, due to the symmetry of the considered noise model.\\

\textbf{Computation of $\Pr(\Omega_j|\beta_0,\beta_1,\dots,\beta_{N-1})$}\\
We now calculate the gains, i.e. the probability that only detector $D_j$ clicks in a PE round where the parties prepared phase-randomized coherent states with intensities $\beta_0,\beta_1,\dots,\beta_{N-1}$. We recall that the state \eqref{PEstates} sent by party $A_i$ can be equivalently described as follows:
\begin{equation}
      \rho_{a_i}(\beta_i)=\frac{1}{2\pi}\int_0^{2\pi} \, d\varphi_i \ketbra{\sqrt{\beta_i} e^{i\varphi_i}}{\sqrt{\beta_i} e^{i\varphi_i}},
\end{equation}
where $\beta_i\in\mathcal{S}_i$. Thus, the state sent by all parties reads:
\begin{align}
    \rho_{in}=\bigotimes_{i=0}^{N-1}\rho_{a_i}(\beta_i)=\frac{1}{(2\pi)^N}\int_{0}^{2\pi} \, d\varphi_0 \dots d\varphi_{N-1} \bigotimes_{i=0}^{N-1}\ketbra{\sqrt{\beta_i} e^{i\varphi_i}}{\sqrt{\beta_i} e^{i\varphi_i}}. \label{rhoin}
\end{align}
We now apply our channel model comprising a pure-loss channel and a polarization misalignment (we neglect the phase shift as the states are already phase-randomized). After going through the lossy and noisy channel, $\rho_{in}$ evolves to:
\begin{align}
    \rho'_{in}=\frac{1}{(2\pi)^N}\int_{0}^{2\pi} \, d\varphi_0 \dots d\varphi_{N-1}  \bigotimes_{i=0}^{N-1}\ketbra{\cos{\theta_i}\sqrt{\eta\beta_i}e^{i\varphi_i}}{\cos{\theta_i}\sqrt{\eta\beta_i}e^{i\varphi_i}}_P \otimes \ketbra{-\sin{\theta_i}\sqrt{\eta\beta_i}e^{i\varphi_i}}{-\sin{\theta_i}\sqrt{\eta\beta_i}e^{i\varphi_i}}_{P_\perp}.
\end{align}
The final step consists in evolving $\rho'_{in}$ through the BBS network. We obtain the following state:
\begin{align}
    \rho_{out}=\frac{1}{(2\pi)^N}\int_{0}^{2\pi} \, d\varphi_0 \dots d\varphi_{N-1} & \bigotimes_{k=0}^{M-1}\left|\sqrt{\frac{\eta}{M}}\sum_{i=0}^{N-1}f_{i,k}\cos{\theta_i}\sqrt{\beta_i}e^{i\varphi_i}\right\rangle \left\langle \sqrt{\frac{\eta}{M}}\sum_{i=0}^{N-1}f_{i,k}\cos{\theta_i}\sqrt{\beta_i}e^{i\varphi_i} \right|_P \nonumber \\ & \otimes\left|-\sqrt{\frac{\eta}{M}}\sum_{i=0}^{N-1}f_{i,k}\sin{\theta_i}\sqrt{\beta_i}e^{i\varphi_i}\right\rangle \left\langle -\sqrt{\frac{\eta}{M}}\sum_{i=0}^{N-1}f_{i,k}\sin{\theta_i}\sqrt{\beta_i}e^{i\varphi_i} \right|_{P_\perp},
\end{align}
which we remark is not anymore a product state of phase-randomized coherent states. Now, similarly to the calculation of $\Pr(\Omega_j|x_0,\dots,x_{N-1},\mathrm{KG})$, we can express each gain as follows:
\begin{align}
    \Pr(\Omega_j|\beta_0,\beta_1,\dots,\beta_{N-1})& = (1-p_d)^{M-1} \mbox{Tr}\left[ \rho_{out} \mathds{1}_j \bigotimes_{k\neq j} \ketbra{0}{0}_k \right] - (1-p_d)^M \mbox{Tr}\left[ \rho_{out} \bigotimes_{k=0}^{M-1} \ketbra{0}{0}_k \right] , \label{gains-app}
\end{align}
where $\ketbra{0}{0}_k$ is the projector on the vacuum of the output mode $k$ for polarizations $P$ and $P_\perp$, since the detectors do not distinguish polarization. We now evaluate the two terms in \eqref{gains-app}. Let us begin with the second, i.e.
\begin{align}
    \mbox{Tr}\left[ \rho_{out} \bigotimes_{k=0}^{M-1} \ketbra{0}{0}_k \right] &=\frac{1}{(2\pi)^N}\int_{0}^{2\pi} \, d\varphi_0 \dots d\varphi_{N-1} \nonumber\\
    &\quad\prod_{k=0}^{M-1} \abs{\braket{0|\sqrt{\frac{\eta}{M}}\sum_{i=0}^{N-1}f_{i,k}\cos{\theta_i}\sqrt{\beta_i}e^{i\varphi_i}}}^2 \abs{\braket{0|-\sqrt{\frac{\eta}{M}}\sum_{i=0}^{N-1}f_{i,k}\sin{\theta_i}\sqrt{\beta_i}e^{i\varphi_i}}}^2 \nonumber\\
    &=\frac{1}{(2\pi)^N}\int_{0}^{2\pi} \, d\varphi_0 \dots d\varphi_{N-1} \nonumber\\
    &\quad\prod_{k=0}^{M-1} \exp\left[-\abs{\sqrt{\frac{\eta}{M}}\sum_{i=0}^{N-1}f_{i,k}\cos{\theta_i}\sqrt{\beta_i}e^{i\varphi_i}}^2 - \abs{\sqrt{\frac{\eta}{M}}\sum_{i=0}^{N-1}f_{i,k}\sin{\theta_i}\sqrt{\beta_i}e^{i\varphi_i}}^2\right] \nonumber\\
    &=\int_{0}^{2\pi} \, \frac{d\varphi_0 \dots d\varphi_{N-1}}{(2\pi)^N} \exp\left[-\frac{\eta}{M}\sum_{k=0}^{M-1}\left(\abs{\sum_{i=0}^{N-1}f_{i,k}\cos{\theta_i}\sqrt{\beta_i}e^{i\varphi_i}}^2 + \abs{\sum_{i=0}^{N-1}f_{i,k}\sin{\theta_i}\sqrt{\beta_i}e^{i\varphi_i}}^2\right)\right] \nonumber\\
    &\equiv \int_{0}^{2\pi} \, \frac{d\varphi_0 \dots d\varphi_{N-1}}{(2\pi)^N} e^{-\frac{\eta}{M}\sum_{k=0}^{M-1} C_k}\label{gains-app2}.
\end{align}
Let us now focus on the sum of the terms labelled $C_k$. By expanding the squares in $C_k$ we obtain:
\begin{align}
    \sum_{k=0}^{M-1} C_k  &=  \sum_{k=0}^{M-1}  \left( \sum_{i=0}^{N-1} \abs{f_{i,k} \cos{\theta_i}\sqrt{\beta_{i}} e^{i\varphi_i}}^2  + \sum_{\substack{i,i'=0 \\ i \neq i'}}^{N-1} f_{i,k} f_{i',k} \cos{\theta_i} \cos{\theta_{i'}}\sqrt{\beta_i\beta_{i'}}e^{i(\varphi_i-\varphi_{i'})}  \right. \nonumber \\
   &\quad + \left. \sum_{i=0}^{N-1} \abs{f_{i,k} \sin{\theta_i} \sqrt{\beta_i}e^{i\varphi_i}}^2  + \sum_{\substack{i,i'=0 \\ i \neq i'}}^{N-1} f_{i,k} f_{i',k} \sin{\theta_i} \sin{\theta_{i'}}\sqrt{\beta_i\beta_{i'}}e^{i(\varphi_i-\varphi_{i'})} \right) \nonumber\\
   &=  \sum_{k=0}^{M-1}  \left( \sum_{i=0}^{N-1} \cos^2{\theta_i}\beta_{i}   + 2\sum_{\substack{i,i'=0 \\ i < i'}}^{N-1} f_{i,k} f_{i',k} \cos{\theta_i} \cos{\theta_{i'}}\sqrt{\beta_i\beta_{i'}}\cos(\varphi_i-\varphi_{i'})  \right. \nonumber \\
   &\quad + \left. \sum_{i=0}^{N-1} \sin^2{\theta_i} \beta_i  + 2\sum_{\substack{i,i'=0 \\ i < i'}}^{N-1} f_{i,k} f_{i',k} \sin{\theta_i} \sin{\theta_{i'}}\sqrt{\beta_i\beta_{i'}}\cos(\varphi_i-\varphi_{i'}) \right). \label{Ck-gains}
\end{align}
Now, we use the result in \eqref{prob-calc4} (derived from Lemma~\ref{lemma:fki}) to argue that:
\begin{align}
    \sum_{k=0}^{M-1}f_{i,k}f_{i',k} &= \sum_{k=0}^{M-1}(-1)^{(\vec{i}+\vec{i}')\cdot \vec{k}} \nonumber\\
    &=M \delta_{\vec{i},\vec{i}'}.
\end{align}
By applying this result in \eqref{Ck-gains}, and by noting that $\vec{i}$ and $\vec{i}'$ must differ in the sums that involve them, we are left with:
\begin{align}
     \sum_{k=0}^{M-1} C_k =  \sum_{k=0}^{M-1} \left( \sum_{i=0}^{N-1}\beta_{i} \cos{\theta_i}^2 + \sum_{i=0}^{N-1}\beta_i \sin{\theta_i}^2 \right) = M \sum_{i=0}^{N-1}\beta_i.
\end{align}
By using this result in \eqref{gains-app2}, we can directly integrate over the phases and obtain the following expression for the second term in \eqref{gains-app}:
\begin{equation}
     \mbox{Tr}\left[ \rho_{out} \bigotimes_{k=0}^{M-1} \ketbra{0}{0}_k \right]= e^{-\eta \sum_i \beta_i}. \label{gains-app5}
\end{equation}
Regarding the first term in \eqref{gains-app}, we can express it as follows:
\begin{align}
    \mbox{Tr}\left[ \rho_{out} \mathds{1}_j \bigotimes_{k\neq j} \ketbra{0}{0}_k\right] &= \frac{1}{(2\pi)^N}  \int_{0}^{2\pi} \, d\varphi_0 \dots d\varphi_{N-1} \; e^{-\frac{\eta}{M}\sum_{k=0,\,k\neq j}^{M-1} C_k} \nonumber\\
    &=\frac{1}{(2\pi)^N}  \int_{0}^{2\pi} \, d\varphi_0 \dots d\varphi_{N-1} e^{-\frac{\eta}{M}\left(\sum_{k=0}^{M-1} C_k - C_j\right)}  \nonumber \\
    & = e^{-\eta \sum_i \beta_i}  \frac{1}{(2\pi)^N}  \int_{0}^{2\pi} \, d\varphi_0 \dots d\varphi_{N-1} e^{\eta C_j/M}.  \label{gains-app3}
\end{align}
Now we calculate the coefficient $C_j$ by expanding its squares:
\begin{align}
    C_j &= \abs{\sum_{i=0}^{N-1}f_{i,j}\cos{\theta_i}\sqrt{\beta_i}e^{i\varphi_i}}^2 + \abs{\sum_{i=0}^{N-1}f_{i,j}\sin{\theta_i}\sqrt{\beta_i}e^{i\varphi_i}}^2 \nonumber\\
    &=\sum_{i=0}^{N-1} \beta_i + 2\sum_{\substack{i,i'=0 \\ i<i'}}^{N-1} f_{i,j}f_{i',j}(\cos\theta_i \cos\theta_{i'}+\sin\theta_i \sin\theta_{i'}) \sqrt{\beta_i \beta_{i'}}\cos(\varphi_i -\varphi_{i'}) \nonumber\\
    &=\sum_{i=0}^{N-1} \beta_i + 2\sum_{\substack{i,i'=0 \\ i<i'}}^{N-1} f_{i,j}f_{i',j}\cos(\theta_i- \theta_{i'}) \sqrt{\beta_i \beta_{i'}}\cos(\varphi_i -\varphi_{i'}).
\end{align}
Now we use the fact that $\theta_i=\theta_1$ for every $i\geq 1$. By splitting the second sum into two terms, where the first has $i=0$ fixed and in the second $i\geq 1$, we obtain:
\begin{align}
    C_j &= \sum_{i=0}^{N-1} \beta_i + 2\cos\theta \sum_{i=1}^{N-1}f_{i,j}\sqrt{\beta_0 \beta_{i}}\cos(\varphi_0 -\varphi_{i}) + 2\sum_{\substack{i,i'=1 \\ i<i'}}^{N-1} f_{i,j}f_{i',j} \sqrt{\beta_i \beta_{i'}}\cos(\varphi_i -\varphi_{i'})
\end{align}
where $\theta=\theta_0-\theta_1$. By using this expression in \eqref{gains-app3}, we obtain the following expression for the first term in \eqref{gains-app}:
\begin{align}
    \mbox{Tr}\left[ \rho_{out} \mathds{1}_j \bigotimes_{k\neq j} \ketbra{0}{0}_k\right] &= e^{-\eta(1-1/M) \sum_i \beta_i} \,\,I_j(\beta_0,\dots,\beta_{N-1}),  \label{gains-app4}
\end{align}
where we defined the integral:
\begin{align}
    I_j(\beta_0,\dots,\beta_{N-1}):= &\frac{1}{(2\pi)^N}  \int_{0}^{2\pi} \, d\varphi_0 \dots d\varphi_{N-1} \nonumber\\
    &\exp\left[\frac{2\eta}{M}\left(\cos\theta \sum_{i=1}^{N-1}f_{i,j}\sqrt{\beta_0 \beta_{i}}\cos(\varphi_0 -\varphi_{i}) + \sum_{\substack{i,i'=1 \\ i<i'}}^{N-1} f_{i,j}f_{i',j} \sqrt{\beta_i \beta_{i'}}\cos(\varphi_i -\varphi_{i'})\right)\right]. \label{integral}
\end{align}

By employing \eqref{gains-app4} and \eqref{gains-app5} in \eqref{gains-app}, we obtain the following compact expression for the gains:
\begin{align}
    \Pr(\Omega_j|\beta_0,\beta_1,\dots,\beta_{N-1})& = (1-p_d)^{M-1} e^{-\eta(1-1/M) \sum_i \beta_i} \,\,I_j(\beta_0,\dots,\beta_{N-1}) - (1-p_d)^M e^{-\eta\sum_i \beta_i} , 
\end{align}
where $I_j(\beta_0,\dots,\beta_{N-1})$ is given in \eqref{integral}.

Importantly, due to our symmetric channel model, the gains are independent of which detector $D_j$ clicks. To show this, we argue that the integral in \eqref{integral} is actually independent of $j$. To this aim, we label the function to be integrated in \eqref{integral} as follows: 
\begin{equation}
    F_j(\varphi_0, \dots,\varphi_{N-1}):= \exp\left[\frac{2\eta}{M}\left(\cos\theta \sum_{i=1}^{N-1}(-1)^{\vec{i}\cdot\vec{j}}\sqrt{\beta_0 \beta_{i}}\cos(\varphi_0 -\varphi_{i}) + \sum_{\substack{i,i'=1 \\ i<i'}}^{N-1} (-1)^{(\vec{i}+\vec{i}')\cdot\vec{j}} \sqrt{\beta_i \beta_{i'}}\cos(\varphi_i -\varphi_{i'})\right)\right]
\end{equation}
and observe that this function is periodic in each variable $\varphi_i$, with period $2\pi$. The only dependency of $F_j$ on $j$ comes from the $\pm 1$ signs inside the sums. We can reabsorb such signs by defining new integration variables $\Phi_i:=\varphi_i-\pi \cdot (\vec{i}\cdot\vec{j})$, which allow us to simplify the summands as follows:
\begin{align}
    (-1)^{\vec{i}\cdot\vec{j}}\cos(\varphi_0 -\varphi_{i}) &= \cos(\Phi_0 -\Phi_{i}), \\
    (-1)^{(\vec{i}+\vec{i}')\cdot\vec{j}}\cos(\varphi_i -\varphi_{i'}) &= (-1)^{\vec{i}'\cdot\vec{j}-\vec{i}\cdot\vec{j}}\cos(\varphi_i -\varphi_{i'}) =\cos(\Phi_i -\Phi_{i'}).
\end{align}
Then, by performing the change of variable $\Phi_i:=\varphi_i-\pi \cdot (\vec{i}\cdot\vec{j})$ in the integral and by using the fact that the function $F_j$ is periodic in each variable, we obtain:
\begin{align}
    I_j(\beta_0,\dots,\beta_{N-1}) &= \frac{1}{(2\pi)^N}  \int_{-\pi(\vec{i}\cdot\vec{j})}^{2\pi -\pi(\vec{i}\cdot\vec{j})} d\Phi_0 d\Phi_1 \dots d\Phi_{N-1} F_0(\Phi_0,\Phi_1,\dots,\Phi_{N-1}) \nonumber\\
    &=\frac{1}{(2\pi)^N}  \int_{0}^{2\pi} \, d\Phi_0 d\Phi_1 \dots d\Phi_{N-1} F_0(\Phi_0,\Phi_1,\dots,\Phi_{N-1}) \nonumber\\
    &= I_0(\beta_0,\dots,\beta_{N-1}), \label{integral2}
\end{align}
which confirms that $I_j$ is independent of $j$. The final formula for the gains is thus:
\begin{align}
    \Pr(\Omega_j|\beta_0,\beta_1,\dots,\beta_{N-1})& = (1-p_d)^{M-1} e^{-\eta(1-1/M) \sum_i \beta_i} \,\,I(\beta_0,\dots,\beta_{N-1}) - (1-p_d)^M e^{-\eta\sum_i \beta_i} , \label{channelgain}
\end{align}
where the integral:
\begin{align}
    I(\beta_0,\dots,\beta_{N-1})=  \int_{0}^{2\pi} \, \frac{d\varphi_0 \dots d\varphi_{N-1}}{(2\pi)^N} \exp\left[\frac{2\eta}{M}\left(\cos\theta \sum_{i=1}^{N-1}\sqrt{\beta_0 \beta_{i}}\cos(\varphi_0 -\varphi_{i}) + \sum_{\substack{i,i'=1 \\ i<i'}}^{N-1} \sqrt{\beta_i \beta_{i'}}\cos(\varphi_i -\varphi_{i'})\right)\right], \label{integral3}
\end{align}
is evaluated numerically in our simulations. Note that we freely relabelled the variables in the integral   back to $\varphi_i$.\\

\textbf{Computation of $\Pr(\Omega_j|n_0,\dots,n_{N-1})$}\\
Here we calculate the analytical expression of any yield $Y^j_{n_0,\dots,n_{N-1}}$, defined in Eq. \eqref{yields} as the probability that detector $D_j$ clicks given the hypothetical scenario in which party $A_i$ sent exactly $n_i$ photons.

We remark that in an experiment the parties cannot, in general, learn the exact value of each yield with the decoy-state analysis, but can derive upper bounds as shown in Appendix~\ref{section:yieldbound}. In the limit of an infinite number of decoy intensities, the yields' upper bounds would tend to the exact values computed here. 

To evaluate $Y^j_{n_0,\dots,n_{N-1}}$, we consider the scenario in which the parties send the state $\bigotimes_{i=0}^{N-1}|n_i \rangle$, where $|n_i\rangle$ is a Fock state of $n_i$ photons. The state can be written as
\begin{equation}
    |\xi (n_0,\dots,n_{N-1}) \rangle = \left( \prod_{i=0}^{N-1} \frac{(\hat{a}_i^{\dag})^{n_i}}{\sqrt{n_i!}} \right)\ket{0},
\end{equation}
where  $\hat{a}_i^{\dag}$ is the creation operator of the optical mode of party $A_i$ and $|0\rangle$ represents the vacuum state on all modes. We now introduce, step by step, the effect of all sources of noise and then apply the BBS network. 

The lossy channel transforms each party's mode according to:
\begin{equation}
    \hat{a}_i^\dag \rightarrow \sqrt{\eta}\hat{a}_i^\dag + \sqrt{1-\eta} \hat{l}_i^\dag,
\end{equation}
where $\hat{l}_i^\dag$ is the creation operator of the loss mode of party $A_i$. The input state $\ket{\xi}$ is transformed as follows:
\begin{align}\label{nstate1}
    |\xi' (n_0,\dots,n_{N-1}) \rangle & = \left( \prod_{i=0}^{N-1} \frac{(\sqrt{\eta}\hat{a}_i^\dag + \sqrt{1-\eta} \hat{l}_i^\dag)^{n_i}}{\sqrt{n_i!}} \right)|0\rangle \nonumber \\
    & = \left[ \prod_{i=0}^{N-1} \left( \sum_{k_i=0}^{n_i} \binom{n_i}{k_i}\frac{\eta^{\frac{k_i}{2}}(1-\eta)^{\frac{n_i-k_i}{2}}}{\sqrt{n_i!}} (\hat{a_i}^{\dag})^{k_i}(\hat{l}_i^\dag)^{n_i-k_i} \right)\right] | 0 \rangle \nonumber \\
    & = \sum_{k_0=0}^{n_0} \cdots \sum_{k_{N-1}=0}^{n_{N-1}} \binom{n_0}{k_0} \cdots \binom{n_{N-1}}{k_{N-1}} \frac{\eta^{\frac{\sum_{i}k_i}{2}}(1-\eta)^{\frac{\sum_{i}(n_i-k_i)}{2}}}{\sqrt{n_0! \cdots n_{N-1}!}}  \sqrt{(n_0-k_0)! \cdots (n_{N-1}-k_{N-1})!} \nonumber \\
    & \cdot \left[ \prod_{i=0}^{N-1} (\hat{a_i}^{\dag})^{k_i} \right]|0\rangle_{a_0,\dots,a_{N-1}}\otimes |n_0-k_0\rangle_{l_0} \otimes \cdots \otimes |n_{N-1}-k_{N-1}\rangle_{l_{N-1}},
\end{align}
where we just used the binomial expansion in the second line and where $a_i$ and $l_i$ are used to indicate the optical mode and the loss mode of party $A_i$, respectively.

We now note that the loss modes are not observed by the parties and thus need to be traced out. The density matrix $\rho'=|\xi'\rangle \langle \xi'|$ will thus have two sets of indices $(k_0, \dots k_{N-1})$ and $(k'_0, \dots k'_{N-1})$. However, it is immediate to see from Eq. \eqref{nstate1} that tracing out the loss modes will impose the conditions $k_i=k'_i \;\; \forall i$. Thus we are left with the state
\begin{align}
    \rho' & = \sum_{k_0=0}^{n_0} \cdots \sum_{k_{N-1}=0}^{n_{N-1}} \binom{n_0}{k_0}^2 \cdots \binom{n_{N-1}}{k_{N-1}}^2 \eta^{\sum_{i}k_i}(1-\eta)^{\sum_{i}(n_i-k_i)}\frac{(n_0-k_0)! \cdots (n_{N-1}-k_{N-1})!}{n_0! \cdots n_{N-1}!} \nonumber \\
    & \cdot\left[ \prod_{i=0}^{N-1} (\hat{a_i}^{\dag})^{k_i} \right]\ketbra{0}{0}_{a_0,\dots,a_{N-1}} \left[ \prod_{i=0}^{N-1} (\hat{a_i})^{k_i} \right] \nonumber \\
    & = \sum_{k_0=0}^{n_0} \cdots \sum_{k_{N-1}=0}^{n_{N-1}} \binom{n_0}{k_0} \cdots \binom{n_{N-1}}{k_{N-1}} \frac{\eta^{\sum_{i}k_i}(1-\eta)^{\sum_{i}(n_i-k_i)}}{k_0!\cdots k_{N-1}!} \left[ \prod_{i=0}^{N-1} (\hat{a_i}^{\dag})^{k_i} \right]\ketbra{0}{0}_{a_0,\dots,a_{N-1}} \left[ \prod_{i=0}^{N-1} (\hat{a_i})^{k_i} \right], \label{rhoprime-app}
\end{align}
where we used the fact that $\frac{(n_i-k_i)!}{n_i!}=\frac{1}{\binom{n_i}{k_i}k_i!}$ and where, from now on, for simplicity of notation we will neglect the explicit dependence of the state on $n_0,\dots,n_{N-1}$. 

We now introduce the polarization misalignment, while we skip the phase misalignment since its effect cancels out on tensor products of Fock states. The polarization misalignment acts on the creation operators of each mode as follows:
\begin{equation}
    \hat{a}^\dag_i \rightarrow \cos{\theta_i}\hat{a}^\dag_{i,P}-\sin{\theta_i}\hat{a}^\dag_{i,P_{\perp}},
\end{equation}
where we recall that in our channel model we set $\theta_i=\theta_1$ for $i \geq 1$, i.e. we only introduce a misalignment between the reference party $A_0$ and the other parties. For simplicity of notation we will omit the label $P$ and consider the polarization $P$ to be the input polarization and label the orthogonal polarization with $\perp$. By applying the above transformation to the creation operators in \eqref{rhoprime-app} and by using again the binomial expansion we obtain:
\begin{align}
    \prod_{i=0}^{N-1} & \left(\cos{\theta_i}\hat{a_i}^{\dag}-\sin{\theta_i}\hat{a}_{i,\,\perp}^{\dag}\right)^{k_i} \nonumber \\
    & =  \prod_{i=0}^{N-1} \left( \sum_{l_i=0}^{k_i} (-1)^{k_i-l_i}\binom{k_i}{l_i}(\cos{\theta_i})^{l_i}(\sin{\theta_i})^{k_i-l_i}(\hat{a_i}^{\dag})^{l_i}(\hat{a}_{i,\,\perp}^{\dag})^{k_i-l_i}\right) \nonumber \\
    & =  \sum_{l_0=0}^{k_0} \cdots \sum_{l_{N-1}=0}^{k_{N-1}} (-1)^{\sum_i (k_i-l_i)} \binom{k_0}{l_0} \cdots \binom{k_{N-1}}{l_{N-1}} (\cos{\theta_0})^{l_0}(\sin{\theta_0})^{k_0-l_0}(\cos{\theta_1})^{\sum_{i=1}^{N-1}l_i}(\sin{\theta_1})^{\sum_{i=1}^{N-1}(k_i-l_i)} \nonumber \\
    & \cdot \left[ \prod_{i=0}^{N-1}(\hat{a_i}^{\dag})^{l_i} (\hat{a}_{i,\,\perp}^{\dag})^{k_i-l_i} \right]. 
\end{align}
By using this expression in \eqref{rhoprime-app}, we obtain:
\begin{align}
    \rho'' & = \sum_{k_0=0}^{n_0} \cdots \sum_{k_{N-1}=0}^{n_{N-1}} \sum_{l_0=0}^{k_0} \cdots \sum_{l_{N-1}=0}^{k_{N-1}} \sum_{l'_0=0}^{k_0} \cdots \sum_{l'_{N-1}=0}^{k_{N-1}} \binom{n_0}{k_0} \cdots \binom{n_{N-1}}{k_{N-1}} \binom{k_0}{l_0} \cdots \binom{k_{N-1}}{l_{N-1}} \binom{k_0}{l'_0} \cdots \binom{k_{N-1}}{l'_{N-1}} \nonumber \\
    &\cdot  (-1)^{\sum_i (2k_i-l_i-l'_i)} \frac{\eta^{\sum_{i}k_i}(1-\eta)^{\sum_{i}(n_i-k_i)}}{k_0!\cdots k_{N-1}!} (\cos{\theta_0})^{l_0+l'_0}(\sin{\theta_0})^{2k_0-l_0-l'_0}(\cos{\theta_1})^{\sum_{i=1}^{N-1}(l_i+l'_i)}(\sin{\theta_1})^{\sum_{i=1}^{N-1}(2k_i-l_i-l'_i)} \nonumber \\
    & \cdot  \left[ \prod_{i=0}^{N-1}(\hat{a_i}^{\dag})^{l_i} (\hat{a}_{i,\,\perp}^{\dag})^{k_i-l_i} \right] \ketbra{0}{0}_{a_0,\dots,a_{N-1},a_{0,\perp},\dots,a_{N-1,\perp}} \left[ \prod_{i=0}^{N-1}(\hat{a_{i}})^{l'_i} (\hat{a}_{i,\,\perp})^{k_i-l'_i} \right].
\end{align}

We now let the state evolve through the optical setup of the BBS network. The resulting transformation of the incoming creation operators is given in \eqref{transf}, as proved in Appendix~\ref{sec:BBSnet}. This brings us to the following state of the output modes in the BBS network:
\begin{align}\label{yields:outstate}
    \rho_{out} & = \sum_{k_0=0}^{n_0} \cdots \sum_{k_{N-1}=0}^{n_{N-1}} \sum_{l_0=0}^{k_0} \cdots \sum_{l_{N-1}=0}^{k_{N-1}} \sum_{l'_0=0}^{k_0} \cdots \sum_{l'_{N-1}=0}^{k_{N-1}} \binom{n_0}{k_0} \cdots \binom{n_{N-1}}{k_{N-1}} \binom{k_0}{l_0} \cdots \binom{k_{N-1}}{l_{N-1}} \binom{k_0}{l'_0} \cdots \binom{k_{N-1}}{l'_{N-1}} \nonumber \\
    &\cdot  (-1)^{\sum_i (2k_i-l_i-l'_i)} \frac{\eta^{\sum_{i}k_i}(1-\eta)^{\sum_{i}(n_i-k_i)}}{k_0!\cdots k_{N-1}!} (\cos{\theta_0})^{l_0+l'_0}(\sin{\theta_0})^{2k_0-l_0-l'_0}(\cos{\theta_1})^{\sum_{i=1}^{N-1}(l_i+l'_i)}(\sin{\theta_1})^{\sum_{i=1}^{N-1}(2k_i-l_i-l'_i)} \nonumber \\
    & \cdot  \left[ \prod_{i=0}^{N-1} \left(\frac{1}{\sqrt{M}}\sum_{s=0}^{M-1} (-1)^{\vec{s}\cdot \vec{i}}\,  \hat{d}_s^\dag \,\right)^{l_i} \left(\frac{1}{\sqrt{M}}\sum_{s'=0}^{M-1} (-1)^{\vec{s}\,'\cdot \vec{i}}\,  \hat{d}_{s',\perp}^\dag \,\right)^{k_i-l_i} \right] \ketbra{0}{0}_{d_0,\dots,d_{N-1},d_{0,\perp},\dots,d_{N-1,\perp}} \nonumber \\
    & \cdot\left[ \prod_{i=0}^{N-1}\left(\frac{1}{\sqrt{M}}\sum_{q=0}^{M-1} (-1)^{\vec{q}\cdot \vec{i}}\,  \hat{d}_q \,\right)^{l'_i} \left(\frac{1}{\sqrt{M}}\sum_{q'=0}^{M-1} (-1)^{\vec{q}\,'\cdot \vec{i}}\,  \hat{d}_{q',\perp} \,\right)^{k_i-l'_i} \right].
\end{align}
From the definition of yields, $Y^j_{n_0,\dots,n_{N-1}}=\Pr(\Omega_j|n_0,\dots,n_{N-1})$, we can express them as follows by including dark counts (each detector has a probability $p_d$ of a dark count):
\begin{equation}
    Y^j_{n_0,\dots,n_{N-1}}=(1-p_d)^{M-1} \mbox{Tr}\left[ \rho_{out} \mathds{1}_{j} \bigotimes^{M-1}_{r\neq j} \ketbra{0}{0}_r\right] - (1-p_d)^M\mbox{Tr}\left[ \rho_{out} \bigotimes_{r=0}^{M-1} \ketbra{0}{0}_r\right], \label{exactyields-app}
\end{equation}
where the identity operator and the projector on the vacuum are defined on both modes of polarization, since the detectors cannot distinguish them. We note that calculating the second trace in \eqref{exactyields-app} is trivial: projecting all modes onto the vacuum forces all indexes to be equal to zero, thus yielding the result: 
\begin{equation}
    \mbox{Tr}\left[ \rho_{out} \bigotimes_{r=0}^{M-1} \ketbra{0}{0}_r\right] = (1-\eta)^{\sum_i n_i}. \label{secondtrace-app}
\end{equation}
In order to calculate the first trace in \eqref{exactyields-app}, we would need to expand the sums over the detectors' creation modes using multinomial expansions. However, since we need to project onto the vacuum state in all modes except modes $d_j$ and $d_{j,\perp}$, this operation will force all the terms in the multinomial expansion to vanish, except for the terms containing $\hat{d}_j$ or $\hat{d}_{j,\perp}$. Therefore, the reduced state of $\rho_{out}$ after projecting onto the vacuum all modes except the $j$-th mode, $\rho^{(j)}_{out}:=\braket{0_1 \dots,0_{j-1},0_{j+1},\dots,0_{M-1}|\rho_{out}|0_1 \dots,0_{j-1},0_{j+1},\dots,0_{M-1}}$, reads:
\begin{align}\label{yields:outstatered}
    \rho_{out}^{(j)} &= \sum_{k_0=0}^{n_0} \cdots \sum_{k_{N-1}=0}^{n_{N-1}} \sum_{l_0=0}^{k_0} \cdots \sum_{l_{N-1}=0}^{k_{N-1}} \sum_{l'_0=0}^{k_0} \cdots \sum_{l'_{N-1}=0}^{k_{N-1}} \binom{n_0}{k_0} \cdots \binom{n_{N-1}}{k_{N-1}} \binom{k_0}{l_0} \cdots \binom{k_{N-1}}{l_{N-1}} \binom{k_0}{l'_0} \cdots \binom{k_{N-1}}{l'_{N-1}} \nonumber \\
    &\cdot  (-1)^{\sum_i (2k_i-l_i-l'_i)} \frac{\eta^{\sum_{i}k_i}(1-\eta)^{\sum_{i}(n_i-k_i)}}{k_0!\cdots k_{N-1}!} (\cos{\theta_0})^{l_0+l'_0}(\sin{\theta_0})^{2k_0-l_0-l'_0}(\cos{\theta_1})^{\sum_{i=1}^{N-1}(l_i+l'_i)}(\sin{\theta_1})^{\sum_{i=1}^{N-1}(2k_i-l_i-l'_i)} \nonumber \\
    & \cdot  \left[ \prod_{i=0}^{N-1}\left(\frac{(-1)^{\vec{j}\cdot \vec{i}}}{\sqrt{M}} \right)^{k_i}\left(\hat{d}_j^{\dag} \right)^{l_i} \left(  \hat{d}_{j,\perp}^\dag\right)^{k_i-l_i} \right] \ketbra{0}{0}_{d_j,d_{j,\perp}} \left[ \prod_{i=0}^{N-1}\left(\frac{(-1)^{\vec{j}\cdot \vec{i}}}{\sqrt{M}} \right)^{k_i}\left(\hat{d}_j \right)^{l'_i} \left(  \hat{d}_{j,\perp}\right)^{k_i-l'_i} \right] \nonumber \\
    &=  \sum_{k_0=0}^{n_0} \cdots \sum_{k_{N-1}=0}^{n_{N-1}} \sum_{l_0=0}^{k_0} \cdots \sum_{l_{N-1}=0}^{k_{N-1}} \sum_{l'_0=0}^{k_0} \cdots \sum_{l'_{N-1}=0}^{k_{N-1}} \binom{n_0}{k_0} \cdots \binom{n_{N-1}}{k_{N-1}} \binom{k_0}{l_0} \cdots \binom{k_{N-1}}{k_{N-1}} \binom{k_0}{l'_0} \cdots \binom{k_{N-1}}{l'_{N-1}} \nonumber \\
    &\cdot  (-1)^{\sum_i (2k_i-l_i-l'_i)} \frac{\eta^{\sum_{i}k_i}(1-\eta)^{\sum_{i}(n_i-k_i)}}{M^{\sum_i k_i} k_0!\cdots k_{N-1}!} (\cos{\theta_0})^{l_0+l'_0}(\sin{\theta_0})^{2k_0-l_0-l'_0}(\cos{\theta_1})^{\sum_{i=1}^{N-1}(l_i+l'_i)}(\sin{\theta_1})^{\sum_{i=1}^{N-1}(2k_i-l_i-l'_i)} \nonumber \\
    & \cdot  \left[\left(\hat{d}_j^{\dag} \right)^{\sum_i l_i} \left(  \hat{d}_{j,\perp}^\dag\right)^{\sum_i (k_i-l_i)} \right] \ketbra{0}{0}_{d_j,d_{j,\perp}} \left[ \left(\hat{d}_j\right)^{\sum_i l'_i} \left(  \hat{d}_{j,\perp}\right)^{\sum_i (k_i-l'_i)} \right],
\end{align}
where we used the fact that $\left( (-1)^{\vec{j}\cdot\vec{i}} \right)^{2 \sum_i k_i}= 1$. We observe that, as expected, the yields do not depend on $j$, i.e. on the detector that clicked, due to our symmetric channel model.

We can now compute the first trace in \eqref{exactyields-app} by simply taking the trace of $\rho^{(j)}_{out}$. We note that this forces the identity $\sum_i l_i = \sum_i l'_i$ on the indexes, allowing us to obtain the following expression:
\begin{align}
    \mbox{Tr}\left[ \rho_{out} \mathds{1}_{j} \bigotimes^{M-1}_{r\neq j} \ketbra{0}{0}_r\right] = \Tr[\rho^{(j)}_{out}] = Q(n_0,\dots,n_{N-1}) \label{firsttrace-app},
\end{align}
where we defined:
\begin{align}
    &Q(n_0,\dots,n_{N-1}) = \nonumber\\
    &\sum_{k_0=0}^{n_0} \cdots \sum_{k_{N-1}=0}^{n_{N-1}}\,\, \sum_{(l_0,\dots,l_{N-1},l'_0,\dots,l'_{N-1})\in \mathcal{L}(k_0,\dots,k_{N-1})} \binom{n_0}{k_0} \cdots \binom{n_{N-1}}{k_{N-1}} \binom{k_0}{l_0} \cdots \binom{k_{N-1}}{l_{N-1}} \binom{k_0}{l'_0} \cdots \binom{k_{N-1}}{l'_{N-1}} \nonumber \\
    &\frac{\eta^{\sum_{i}k_i}(1-\eta)^{\sum_{i}(n_i-k_i)}}{M^{\sum_i k_i} k_0!\cdots k_{N-1}!} (\cos{\theta_0})^{l_0+l'_0}(\sin{\theta_0})^{2k_0-l_0-l'_0}(\cos{\theta_1})^{\sum_{i=1}^{N-1}(l_i+l'_i)}(\sin{\theta_1})^{\sum_{i=1}^{N-1}(2k_i-l_i-l'_i)} \left(\sum_i l_i\right)!\left(\sum_i (k_i - l_i)\right)!  \label{Q-app},
\end{align}
where the summation set is defined as
\begin{equation}
    \mathcal{L}(k_0,\dots,k_{N-1}):= \left\{ (l_0,\dots,l_{N-1},l'_0,\dots,l'_{N-1}) : 0 \leq l_i \leq k_i,\; 0 \leq l'_{i} \leq k_i, \; \sum_{i=0}^{N-1} l_i = \sum_{i=0}^{N-1} l'_i \right\}. \label{summationset}
\end{equation}
By using \eqref{secondtrace-app} and \eqref{firsttrace-app} in \eqref{exactyields-app}, we obtain the final expression for the yields in our channel model:
\begin{equation}
    Y^j_{n_0,\dots,n_{N-1}}=(1-p_d)^{M-1} Q(n_0,\dots,n_{N-1})  - (1-p_d)^M (1-\eta)^{\sum_i n_i},  \label{exact-yields}
\end{equation}
where $Q(n_0,\dots,n_{N-1})$ is defined in \eqref{Q-app} and we emphasize once again that the yields are independent of $j$.

\section{Numerical simulations}\label{section:numsim}

In this section we provide more details about the numerical simulations presented in Sec.~\ref{sec:simulations} of the manuscript.

In our simulations, we set a polarization and a phase misalignment between the reference party $A_0$ and each other party of $2\%$. This means that the parameters $\theta_i$ and $\phi_i$, introduced in Appendix~\ref{sec:channelmodel} to describe the polarization rotation and the phase mismatch of party $A_i$, are set to: $\phi_0=0$, $\phi_{i \geq 1}=\phi$, $\theta_{i \geq 1}=\theta_1$  and $\phi=\theta_0 - \theta_1 =\arcsin{\sqrt{0.02}}$. We compute the protocol's key rate for three values of $p_d$, i.e. the probability of a dark count in a given detector, namely: $p_d=10^{-8}, \, 10^{-9}$ and $10^{-10}$. 

As for the decoy-state analysis, we consider two decoy intensities for each party, $\beta_0$ and $\beta_1$, and use the analytical bounds derived in Sec~\ref{sec:decoy} to compute the upper bound \eqref{phase-error-rate-bound} on the phase error rate. The decoy intensity $\beta_0$ is fixed to $\beta_0=0.5$, which we verified is a close-to-optimal value for all loss parameters, while $\beta_1=0$ is optimal. In Sec.~\ref{sec:simulations} we also plot the key rate in the case in which the exact value of the yields is known, which corresponds to the limit where the parties have an infinite number of decoy intensities. The exact values of the yields are computed for our channel model in Appendix~\ref{sec:channelmodel} and are reported in \eqref{exact-yields}. We then replaced the exact yields  $Y_0,\dots,Y_{N-1}$ in the phase error rate bound \eqref{phase-error-rate-bound}, in place of the yields' bounds $\bar{Y}_{0},\dots,\bar{Y}_{N-1}$.

The key rates in Fig.~\ref{fig:rate} are optimized over the signal amplitude $\alpha$ for all values of loss and is computed for $N=3,\, 4$ and $5$ parties. In Fig.~\ref{fig:optalpha} we provide the optimal values of $\alpha$ for every loss, both when we used the yields bounds obtained with two decoys and when we used the exact expressions of the yields from the channel model. By comparing the optimal values of $\alpha$ in the two cases, we deduce that tighter bounds on the yields would allow for a higher optimal value of $\alpha$. This is explained by the fact that having tighter bounds on the yields in the phase error rate bound \eqref{phase-error-rate-bound} allows the yields' coefficients in that expression to grow, i.e. $\alpha$ to grow, without increasing the phase error rate bound. In turn, greater values of $\alpha$ can increase the key rate due to a higher chance of having a detector click (see bottom plot in Fig.~\ref{fig:rate}). Therefore, we deduce that increasing the number of decoy intensities used by each party would lead to better yields' bounds and hence to a significantly improved key rate.

In order to reduce the number of yields that are non-trivially bounded in \eqref{phase-error-rate-bound}, we remark that the polarization and phase angles $\theta_i$ and $\phi_i$ are the same for all parties except for reference party $A_0$. Moreover, the signal and decoy intensities are the same for all parties as well as the losses. Therefore, the channel model is symmetric under the permutation of parties $A_1,A_2,\dots,A_{N-1}$. This implies, in particular, that the yields in \eqref{phase-error-rate-bound} satisfy:
\begin{equation}
    Y_{n_0,n_1, \dots,n_{N-1}} =Y_{n_0,\sigma(n_1,\dots,n_{N-1})},
\end{equation}
where $\sigma(n_1,\dots,n_{N-1})$ represents a permutation of the indexes $n_1,\dots,n_{N-1}$. The permutational symmetry of the yields in our channel model implies that, in computing the phase error rate bound \eqref{phase-error-rate-bound} for a cutoff $\overline{n}=4$ (above which every yield is bounded by one), we only need to bound the following yields for $N=3$: $Y_{0,0,0}$, $Y_{2,0,0}$, $Y_{0,2,0}$, $Y_{4,0,0}$, $Y_{0,4,0}$ $Y_{1,1,0}$, $Y_{0,1,1}$, $Y_{2,2,0}$, $Y_{0,2,2}$, $Y_{1,3,0}$, $Y_{0,1,3}$, $Y_{1,1,2}$.

Similarly, for $N=4$ we only bound the yields: $Y_{0,0,0,0}$, $Y_{2,0,0,0}$, $Y_{0,2,0,0}$, $Y_{4,0,0,0}$, $Y_{0,4,0,0}$, $Y_{1,1,0,0}$, $Y_{0,1,1,0}$, $Y_{2,2,0,0}$, $Y_{0,2,2,0}$, $Y_{1,3,0,0}$, $Y_{0,1,3,0}$, $Y_{1,1,2,0}$, $Y_{0,1,1,2}$, $Y_{1,1,1,1}$.

And for $N=5$ we only bound the yields: $Y_{0,0,0,0,0}$, $Y_{2,0,0,0,0}$, $Y_{2,0,0,0,0}$, $Y_{4,0,0,0,0}$, $Y_{0,4,0,0,0}$ $Y_{1,1,0,0,0}$, $Y_{0,1,1,0,0}$, $Y_{2,2,0,0,0}$, $Y_{0,2,2,0,0}$, $Y_{1,3,0,0,0}$, $Y_{0,1,3,0,0}$, $Y_{1,1,2,0,0}$, $Y_{0,1,1,2,0}$, $Y_{1,1,1,1,0}$, $Y_{0,1,1,1,1}$.

\begin{figure}[th!]
    \centering
    \begin{minipage}{0.495\textwidth}
        \centering
        \textbf{\large Two decoys}\par\medskip
			\includegraphics[width=1\linewidth,keepaspectratio]{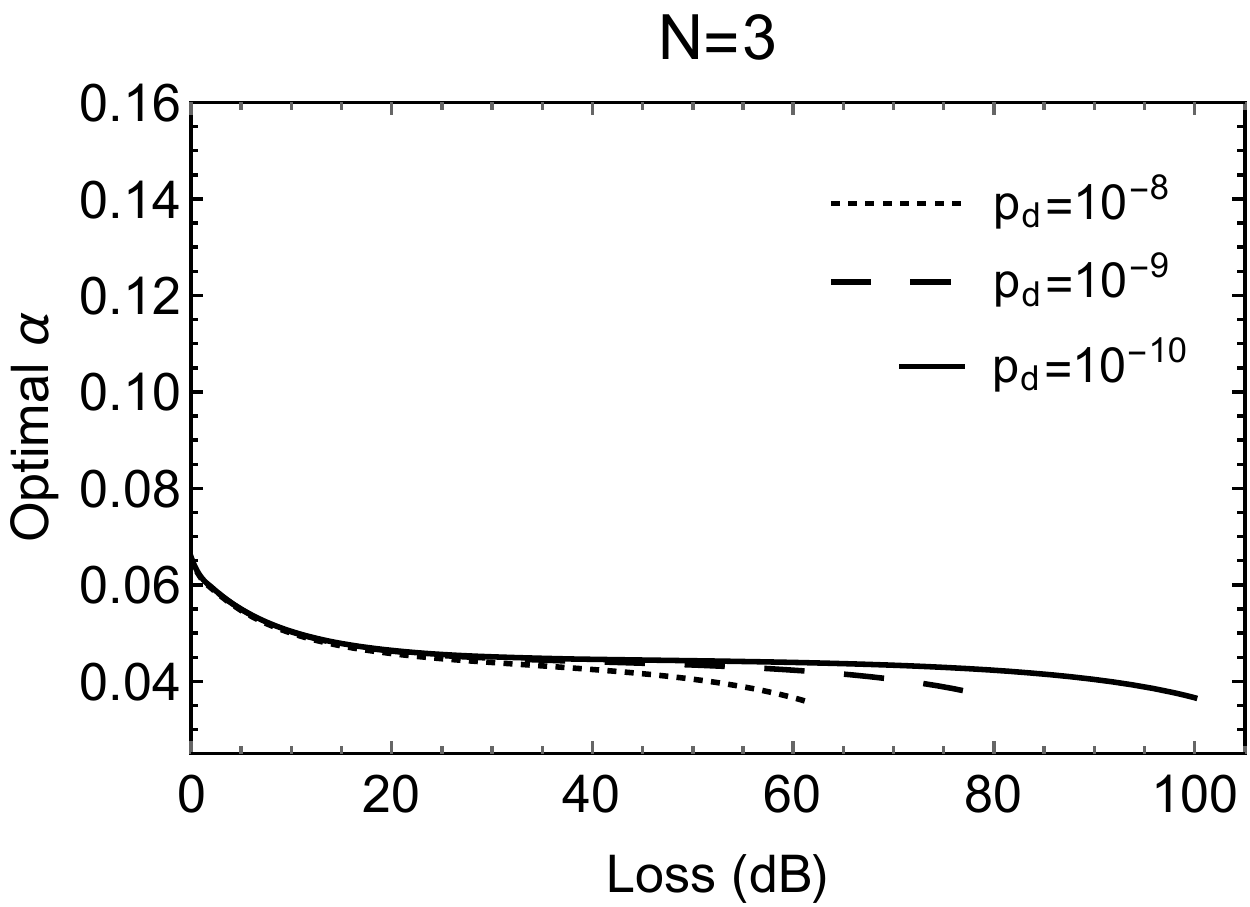}
			\includegraphics[width=1\linewidth,keepaspectratio]{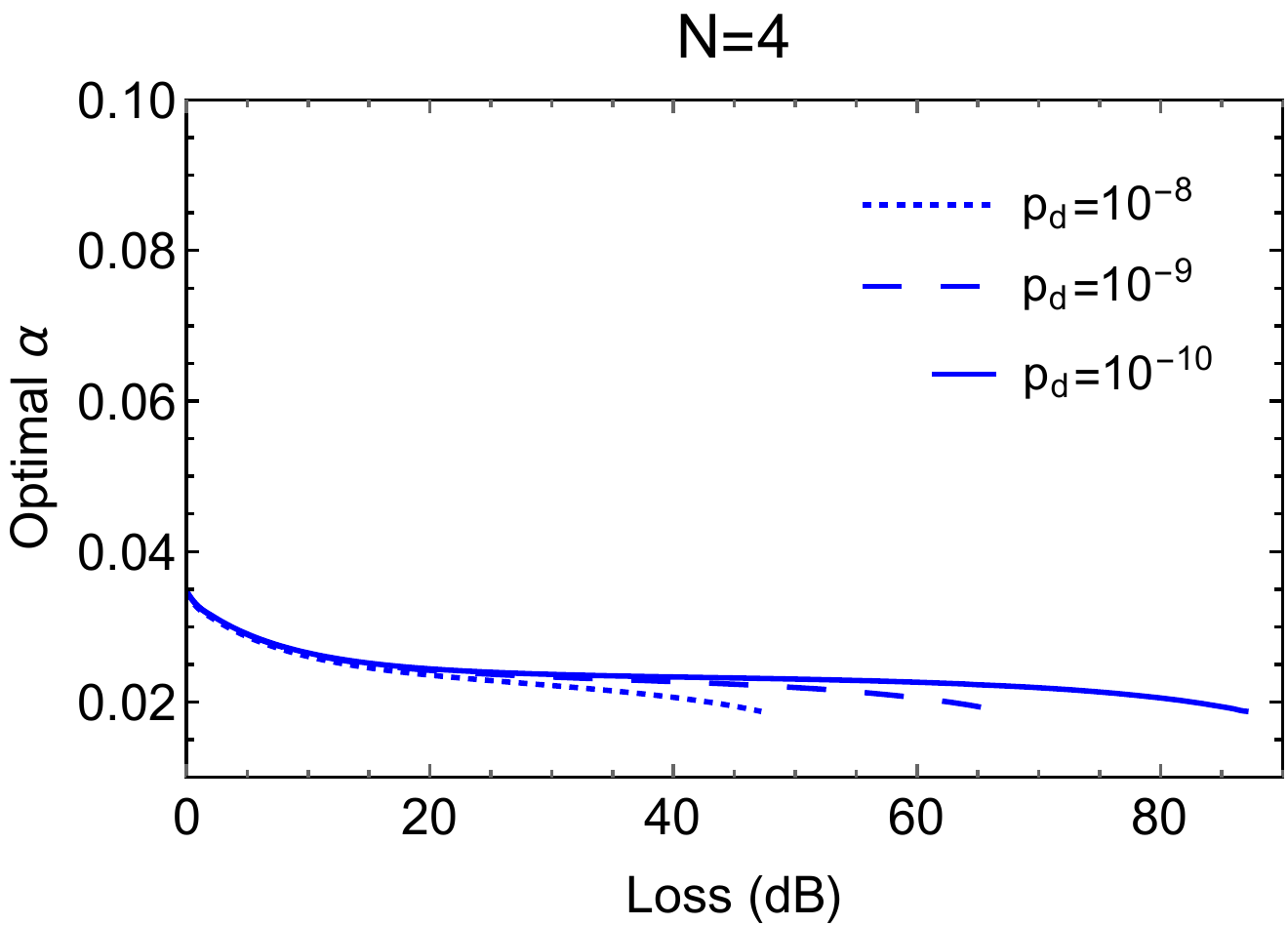}
			\includegraphics[width=1\linewidth,keepaspectratio]{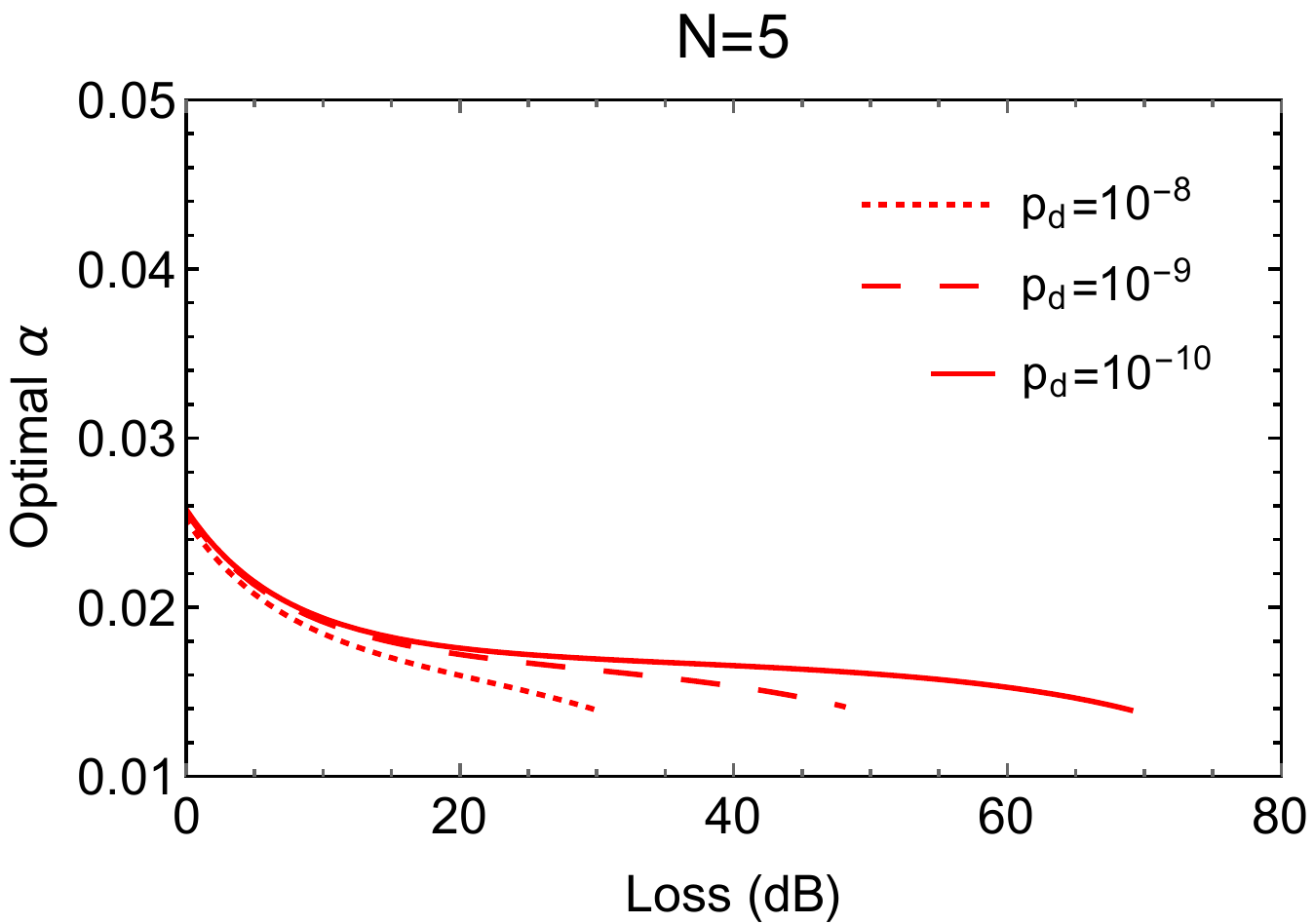}
	\end{minipage}\hfill\vline\hfill
	\begin{minipage}{0.495\textwidth}
        \centering
        \textbf{\large Exact yields}\par\medskip
        \includegraphics[width=1\linewidth,keepaspectratio]{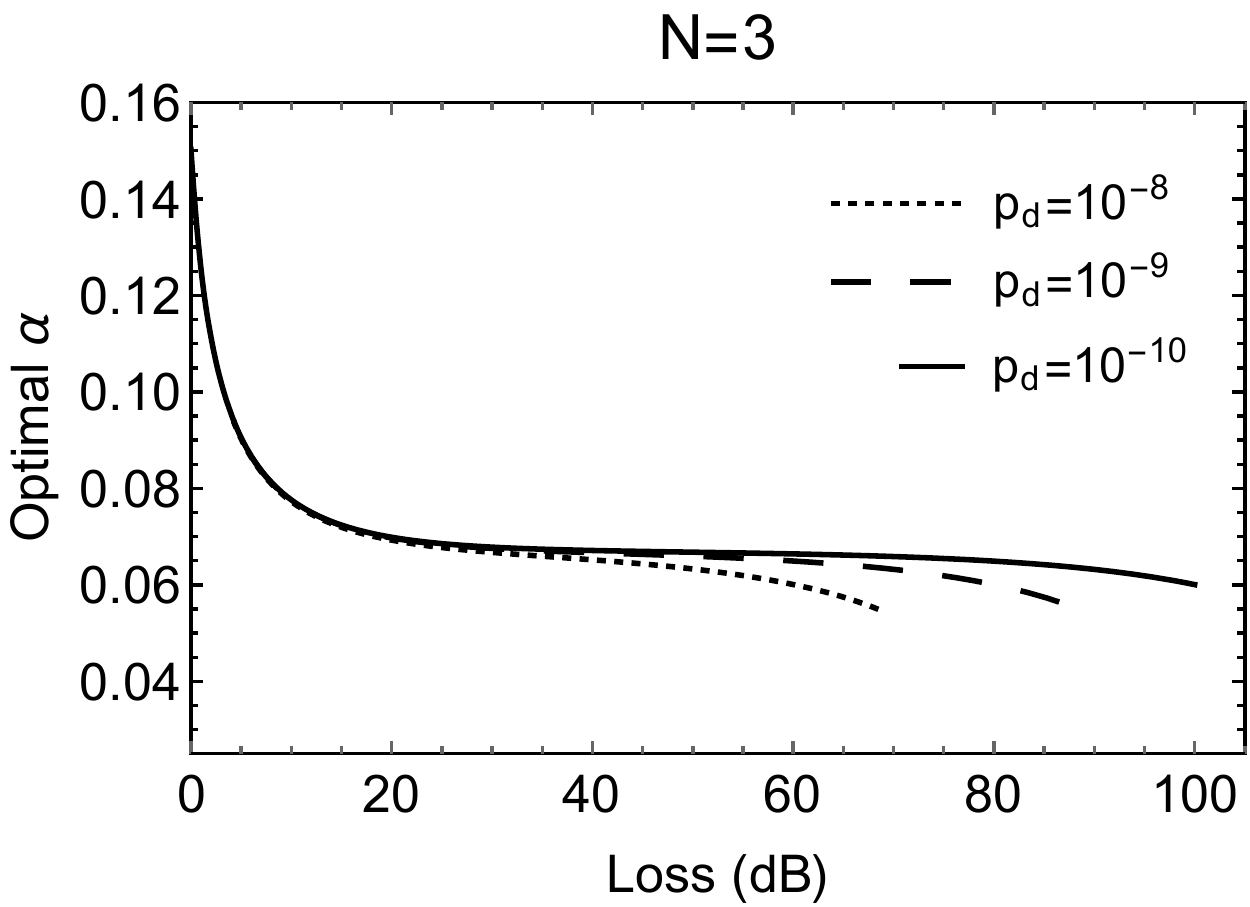}
		\includegraphics[width=1\linewidth,keepaspectratio]{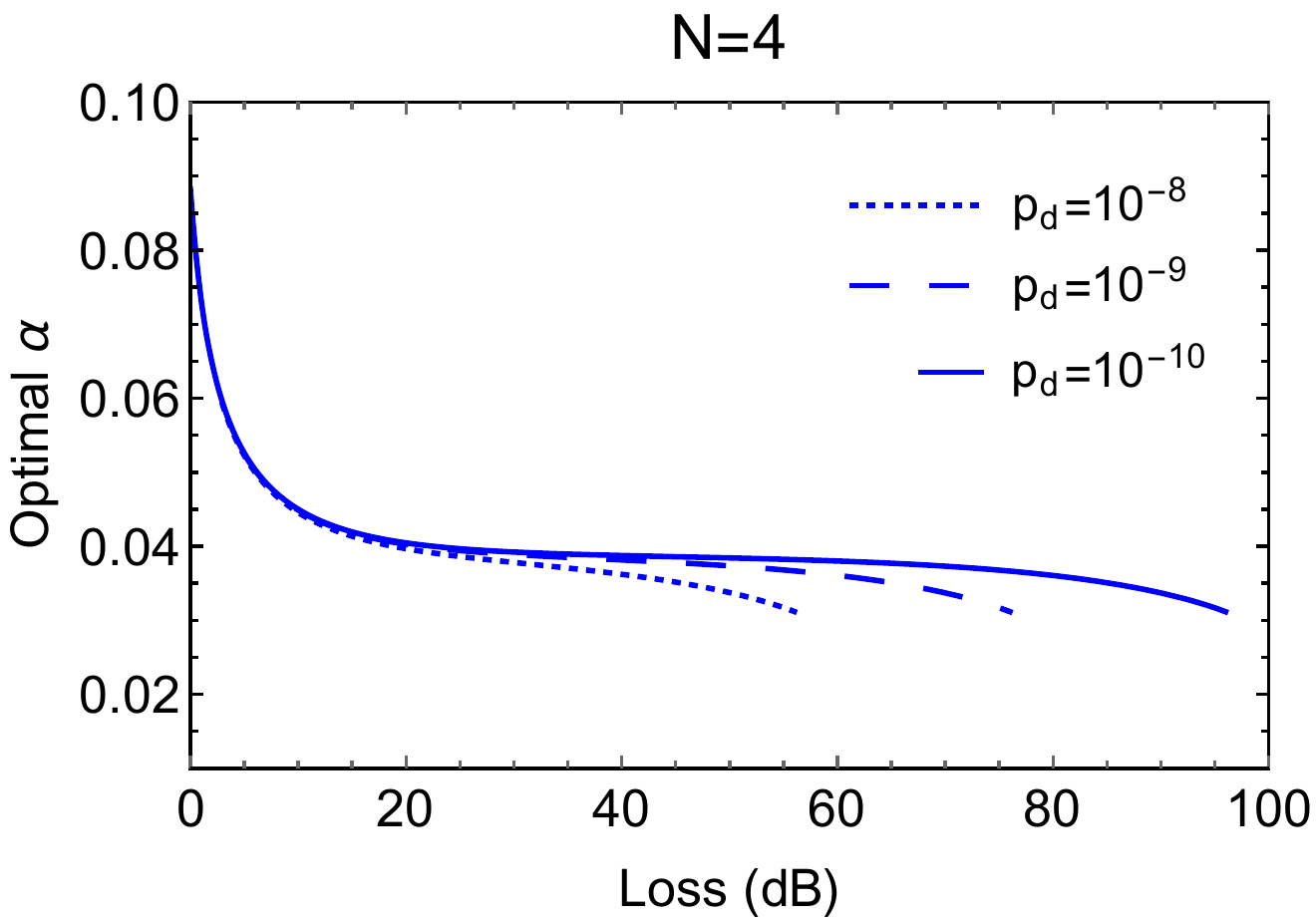}
		\includegraphics[width=1\linewidth,keepaspectratio]{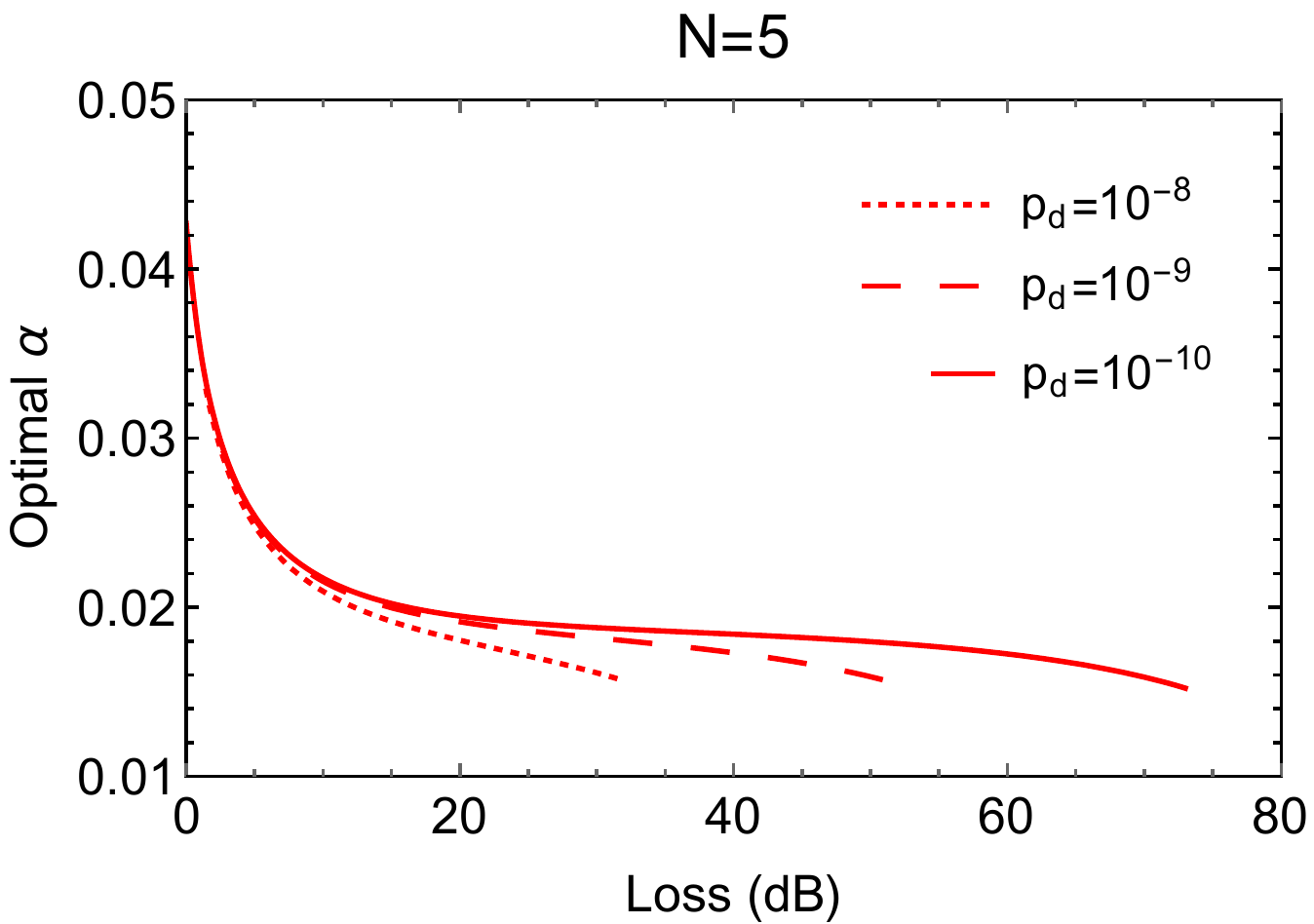}
	\end{minipage}
\caption{The optimal value of the signal amplitude ($\alpha$) that maximizes the key rate plotted in Fig.~\ref{fig:rate}, for different values of the dark count probability ($p_d$) and number of parties ($N$). On the left, the key rate is computed by using the analytical bounds on the yields \eqref{upbound} in the phase error rate bound \eqref{phase-error-rate-bound}, while the plots on the right use the exact expressions of the yields for our channel model \eqref{exact-yields}. We observe that a tighter bound on the yields allows for a higher value of $\alpha$ and leads to a higher key rate (see Fig~\ref{fig:rate}).}
\label{fig:optalpha}
\end{figure}

\end{document}